\newtheorem{example}{\textbf{Example}}
\newtheorem{remark}{Remark}
\newtheorem{theorem}{\textbf{Theorem}}
\newtheorem{lemma}{\textbf{Lemma}}
\newtheorem{definition}{\textbf{Definition}}
\newcommand{\lshe}{\textit{LSH-E}\xspace}
\newcommand{\gbkmv}{\textit{GB-KMV}\xspace}
\newcommand{\kmv}{\textit{KMV}\xspace}
\newcommand{\lsh}{\textit{LSH}\xspace}
\newcommand{\gkmv}{\textit{G-KMV}\xspace}
\newcommand{\ds}{$\mathcal{S}$\xspace}
\newcommand{\freset}{\textit{FrequentSet}\xspace}
\newcommand{\ppjoin}{\textit{PPjoin}\xspace}
\def\BibTeX{{\rm B\kern-.05em{\sc i\kern-.025em b}\kern-.08em
    T\kern-.1667em\lower.7ex\hbox{E}\kern-.125emX}}
\begin{document}

\title{GB-KMV: An Augmented KMV Sketch for Approximate Containment Similarity Search}

\author{Yang Yang$^{\dagger}$, Ying Zhang$^{\S}$, Wenjie Zhang$^{\dagger}$, Zengfeng Huang$^{\dagger}$ \\
$^\dagger$University of New South Wales, $^\S$University of Technology Sydney\\
		\fontsize{10}{10}\{yang.yang, zhangw\}@cse.unsw.edu.au, ying.zhang@uts.edu.au
	}
\maketitle

\begin{abstract}
In this paper, we study the problem of approximate containment similarity search.
Given two records $Q$ and $X$, the containment similarity between $Q$ and $X$ with respect to $Q$ is $\frac{|Q \cap X |}{|Q|}$.
Given a query record $Q$ and a set of records $\mathcal{S}$,
the containment similarity search finds a set of records from $\mathcal{S}$ whose containment similarity regarding $Q$ is not less than the given threshold.
This problem has many important applications in commercial and scientific fields such as record matching and domain search.
Existing solution relies on the \emph{asymmetric \lsh} method by transforming the containment similarity to well-studied Jaccard similarity.
In this paper, we use a inherently different framework by transforming the containment similarity to set intersection. We propose a novel augmented \kmv sketch technique, namely \gbkmv, which is data-dependent and can achieve a much better trade-off between the sketch size and the accuracy. We provide a set of theoretical analysis to underpin the proposed augmented \kmv sketch technique, and show that it outperforms the state-of-the-art technique \lshe in terms of estimation accuracy under practical assumption. Our comprehensive experiments on real-life datasets verify that \gbkmv is superior to \lshe in terms of the space-accuracy trade-off,  time-accuracy trade-off, and the sketch construction time.
For instance, with similar estimation accuracy (F-$1$ score), \gbkmv is over 100 times faster than \lshe on several real-life datasets.
\end{abstract}


\section{Introduction}
\label{sct:introduction}

In many applications such as information retrieval, data cleaning, machine learning and user recommendation,
an object (e.g., document, image, web page and user) is described by a set of elements (e.g., words, $q$-grams, and items).
One of the most critical components in these applications is to define the set similarity between two objects and develop corresponding similarity query processing techniques.
Given two records (objects) $X$ and $Y$, a variety of  similarity functions/metrics have been identified in the literature for different scenarios (e.g.,~\cite{indyk1998approximate},~\cite{charikar2002similarity}).
Many indexing techniques have been developed to support efficient \emph{exact} and \emph{approximate} lookups and joins based on these similarity functions.

Many of the set similarity functions studied are symmetric functions, i.e., $f(X,Y)$ $= f(Y,X)$, including widely used Jaccard similarity and Cosine similarity.
In recent years, much research attention has been given to the asymmetric set similarity functions, which are more appropriate in some applications.
Containment similarity (a.k.a, Jaccard containment similarity) is one of the representative asymmetric set similarity functions,
where the similarity between two records $X$ and $Y$ is defined as $f(X,Y) =$ $\frac{|X \cap Y|}{|X|}$ in which $|X \cap Y |$ and $|X|$
are intersection size of $X$ and $Y$ and the size of $X$, respectively.

Compared with symmetric similarity such as Jaccard similarity, containment similarity gives special consideration on the query size, which makes it more suitable in some applications.
As shown in~\cite{shrivastava2015asymmetric}, containment similarity is useful in record matching application.
Given two text descriptions of two restaurants
$X$ and $Y$ which are represented by two ``set of words'' records: $\{$\emph{five}, \emph{guys},
\emph{burgers}, \emph{and}, \emph{fries}, \emph{downtown}, \emph{brooklyn}, \emph{new}, \emph{york}$\}$ and $\{$\emph{five}, \emph{kitchen}, \emph{berkeley}$\}$ respectively.
Suppose query $Q$ is \{\emph{five}, \emph{guys}\}, we have that the Jaccard similarity of $Q$ and $X$ (resp. $Y$) is $\frac{2}{9} = 0.22$ ($\frac{1}{4} = 0.25$).
Note the Jaccard similarity is $f(Q,X) = \frac{ |Q \cap X|}{ |Q \cup X|}$.
Based on the Jaccard similarity, record $Y$ matches better to query $Q$, but intuitively $X$ should be a better choice.
This is because the Jaccard similarity unnessesarily favors the short records.
On the other hand, the containment similarity will lead to the desired order with $f(Q,X) = \frac{2}{2}$ $= 1.0$ and $f(Q,Y) = \frac{1}{2}$ $= 0.5$.
Containment similarity search can also support online error-tolerant search for matching user queries against addresses (map service)
and products (product search). This is because the regular keyword search is usually based on the containment search, and containment similarity search provides a natural error-tolerant alternative~\cite{agrawal2010indexing}.
In~\cite{zhu2016lsh}, Zhu et al. show that containment similarity search is essential in domain search which enables users to effectively search Open Data.

The containment similarity is also of interest to applications of computing the fraction of values of one column that are contained in another column.
In a dataset, the discovery of all inclusion dependencies is a crucial part of data profiling efforts. 
It has many applications such as foreign-key detection and data integration(e.g., ~\cite{de2003zigzag, lopes2002discovering, bauckmann2006efficiently, papenbrock2015divide, kruse2017fast}).

\vspace{1mm}
\noindent \textbf{Challenges.}
The problem of containment similarity search has been intensively studied in the literature in recent years (e.g.,~\cite{agrawal2010indexing,shrivastava2015asymmetric,zhu2016lsh}).
The key challenges of this problem come from the following three aspects:
($i$) The number of elements (i.e., vocabulary size) may be very large. For instance, the vocabulary will blow up quickly when the higher-order shingles are used~\cite{shrivastava2015asymmetric}.
Moreover, query and record may contain many elements. To deal with the sheer volume of the data, it is desirable to use sketch technique to provide effectively and efficiently approximate solutions.
($ii$) The data distribution (e.g., record size and element frequency) in real-life application may be highly skewed.
This may lead to poor performance in practice for \emph{data independent} sketch methods. 
($iii$) A subtle difficulty of the approximate solution comes from the asymmetric property of the containment similarity.
It is shown in~\cite{shrivastava2014asymmetric} that there cannot exist any locality sensitive hashing (\lsh) function family for containment similarity search.

To handle the large scale data and provide quick response, most existing solutions for containment similarity search seek to the \emph{approximate} solutions.
Although the use of \lsh is restricted, the novel \emph{asymmetric \lsh} method has been designed in~\cite{shrivastava2014asymmetric} to
address the issue by padding techniques. Some enhancements of asymmetric \lsh techniques are proposed in the following works by introducing different functions (e.g.,~\cite{shrivastava2015asymmetric}).
Observe that the performance of the existing solutions are sensitive to the skewness of the record size, Zhu {\emph et. al} propose a partition-based method based on Minhash \lsh function. 
By using optimal partition strategy based on the size distribution of the records, the new approach can achieve much better time-accuracy trade-off.

We notice that all existing approximate solutions rely on the \lsh functions by transforming the containment similarity to well-studied \textit{Jaccard similarity}.
That is, 
$$
\frac{|Q \cap X|}{|Q|} =  \frac{|Q \cap X|}{|Q \cup X|} \times|Q \cup X| \times \frac{1}{|Q|}
$$
As the size of query is usually readily available, the estimation error come from the computation of Jaccard similarity and union size of $Q$ and $X$.
Note that although the union size can be  derived from jaccard similarity~\cite{zhu2016lsh}, the large variance caused by the combination of two estimations remains. 
This motivates we to use a different framework by transforming the containment similarity to \emph{set intersection size estimation}, 
and the error is only contributed by the estimation of $|Q \cap X|$. 
The well-known \kmv sketch~\cite{beyer2007synopses} has been widely used to estimate the set intersection size, which can be immediately applied to our problem.
However, this method is \textit{data-independent} and hence cannot well handle the skewed distributions of records size and element frequency, which is common in real-life applications.
Intuitively, the record with larger size and the element with high-frequency should be allocated more resources.
In this paper, we theoretically show that the existing \kmv-sketch technique cannot consider these two perspectives by simple heuristics, e.g., explictly allocating more resource to record with large size.
Consequently, we develop an augmented \kmv sketch to exploit both record size distribution and the element frequency distribution for better space-accuracy and time-accuracy trade-offs.
Two technique are proposed: ($i$) we impose a global threshold to \kmv sketch, namely \gkmv sketch, to achieve better estimate accuracy. 
As disscussed in Section~\ref{subsec:alg_motivation}(2), this technique cannot be extended to the Minhash \lsh. 
($ii$) we introduce an extra buffer for each record to take advantage of the skewness of the element frequency. 
A cost model is proposed to carefully choose the buffer size to optimize the accuracy for the given total space budget and data distribution. 

\vspace{1mm}
\noindent \textbf{Contributions.}
Our principle contributions are summarized as follows.

\begin{itemize}
\item We propose a new augmented \kmv sketch technique, namely \gbkmv, for the problem of approximate containment similarity search.
By imposing a global threshold and an extra buffer for \kmv sketches of the records, we significanlty enhance the performance 
as the new method can better exploit the data distributions.

\item We provide theoretical underpinnings to justify the design of \gbkmv method.
We also theoretically show that \gbkmv outperforms the state-of-the-art technique \lshe in terms of accuracy under realistic assumption on data distributions.

\item Our comprehensive experiments on real-life set-valued data from various applications demonstrate the effectiveness and efficiency of our proposed method.
\end{itemize}


\noindent \textbf{Road Map.} The rest of the paper is organized as follows.
Section~\ref{sct:preliminaries} presents the preliminaries.
Section~\ref{sct:exist_solutions} introduces the existing solutions.
Our approach, \gbkmv sketch, is devised in Section~\ref{sct:approach}.
Extensive experiments are reported in Section~\ref{sct:experiment},
followed by the related work in Section~\ref{sct:related}.
Section~\ref{sct:conclusion} concludes the paper.

\vspace{-2mm}
\section{Preliminaries}
\label{sct:preliminaries}
In this section, we first formally present the problem of containment similarity search,
then introduce some preliminary knowledge.
In Table~\ref{tb:notations}, we summarize the important mathematical notations appearing throughout this paper.

\begin{table}
\footnotesize
  \centering

  \vspace{-1mm}
    \begin{tabular}{|c|l|}
      \hline
      \cellcolor{gray!25}\textbf{Notation} & \cellcolor{gray!25}\textbf{Definition}             \\ \hline

      $\mathcal{S}$&  a collection of records\\ \hline
      $X, Q$  &  record, query record\\ \hline
      $x, q$  &  record size of $X$, query size of $Q$ \\ \hline
      $J(Q, X), s$  &  Jaccard similarity between query $Q$ and set $X$ \\ \hline
      $C(Q, X), t$  &  Containment similarity of query $Q$ in set $X$ \\ \hline
      $s^*$  &   Jaccard similarity threshold  \\ \hline
      $\mathcal{L}_X$  & the \kmv signature (i.e., hash values) of record X \\ \hline
      $h(X)$  & all hash values of the elements in record X \\ \hline
      $\mathcal{H}_X$  & the buffer of record X \\ \hline
      $t^*$  &   containment similarity threshold  \\ \hline
      $b$  &   sketch space budget, measured by the number of\\
           &   signatures (i.e., hash values or elements)   \\ \hline
      $\tau$  & the global threshold for hash values  \\ \hline
      $r$  & the buffer size(with \textit{bit} unit) of \gbkmv sketch \\ \hline
      $m$  & number of records in dataset $\mathcal{S}$ \\ \hline
      $n$  & number of distinct elements in dataset $\mathcal{S}$ \\ \hline

      \hline
    \end{tabular}
\vspace{1mm}
\caption{\small The summary of notations}
\label{tb:notations}
\vspace{-4mm}
\end{table}

\subsection{Problem Definition}

In this paper, the element universe is $\mathcal{E}= \{e_1, e_2,...,e_n\}$.
Let $\mathcal{S}$ be a collection of records (sets) $\{X_1, X_2, ..., X_m\}$'
where $X_i$ ( $1\leq i \leq m$) is a set of elements from $\mathcal{E}$.

Before giving the definition of containment similarity, we first introduce the Jaccard similarity.
\begin{definition}[\textbf{Jaccard Similarity}]
Given two records $X$ and $Y$ from $\mathcal{S}$, the Jaccard similarity between $X$ and $Y$ is defined as the size of the intersection divided by the size of the union, which is expressed as
\begin{equation}
\label{eq:jacdef}
J(X, Y) = \frac{|X \cap Y|}{|X \cup Y|}
\end{equation}
\end{definition}

Similar to the Jaccard similarity, the containment similarity (a.k.a Jaccard containment similarity) is defined as follows.

\begin{definition}[\textbf{Containment Similarity}]
Given two records $X$ and $Y$ from $\mathcal{S}$, the containment similarity of $X$ in $Y$, denoted by $C(X, Y)$ is the size of the intersection divided by record size $|X|$,
which is formally defined as
\begin{equation}
\label{eq:condef}
C(X,Y) = \frac{|X \cap Y|}{|X|}
\end{equation}
\end{definition}
Note that by replacing the union size $|X \cup Y|$ in Equation~\ref{eq:jacdef} with size $|X|$, we get the containment similarity. It is easy to see that Jaccard similarity is symmetric while containment similarity is asymmetric.

In this paper, we focus on the problem of containment similarity search which is to look up a set of records
whose containment similarity towards a given query record is not smaller than a given threshold.
The formal definition is as follows.

\begin{definition}[\textbf{Containment Similarity Search}]
Given a query $Q$, and a threshold $t^{*}\in[0,1]$ on the containment similarity, search for records $\{X: X\in \mathcal{S}\}$ from a dataset $\mathcal{S}$ such that:
\begin{equation}
\label{eq:probdef}
C(Q, X) \geq t^{*}
\end{equation}
\end{definition}
Next, we give an example to show the problem of containment similarity search.

\begin{example}
\label{exam:prob_def}
Fig.~\ref{fig:problem example1} shows a dataset with four records $\{ X_1$, $X_2$, $X_3$, $X_4 \}$,
and the element universe is $\mathcal{E}= \{e_1, e_2,...,e_{10}\}$.
Given a query $Q=\{e_1, e_2, e_3, e_5, e_7, e_9\}$ and a containment similarity threshold $t^* = 0.5$, the records satisfying $C(Q,X_i)\geq 0.5$ are $X_1$, $X_2$.
\end{example}

\begin{figure}[!t]
\centering
\vspace{0.2cm}
\begin{tabular}[b]{l l l}
\hline
{\bfseries id} & {\bfseries record} & $C(Q, X_i)$\\
\hline
$X_1$ & $\{e_1, e_2, e_3, e_4, e_7\}$  & $0.67$\\

$X_2$ & $\{e_2, e_3, e_5\}$  & $0.5$\\

$X_3$ & $\{e_2, e_4, e_5\}$ & $0.33$\\

$X_4$ & $\{e_1, e_2, e_6, e_{10}\}$ & $0.33$\\

$Q$ & $\{e_1, e_2, e_3, e_5, e_7, e_9\}$ & \ \ \\
\hline
\end{tabular}
\caption{\small A four-record dataset and a query $Q$; $C(Q,X_i)$ is the containment similarity of $Q$ in $X_i$}
\label{fig:problem example1}
\vspace{-1mm}
\end{figure}

\noindent \textbf{Problem Statement.} %
In this paper, we investigate the problem of \emph{approximate containment similarity search}.
For the dataset $\mathcal{S}$ with a large number of records, we aim to build a synopses of the dataset such that it 
($i$) can efficiently support containment similarity search with high accuracy,
($ii$) can handle large size records, and ($iii$) has a compact index size.

%


\subsection{Minwise Hashing}
\label{subsec:MinHash}
Minwise Hashing is proposed by Broder in~\cite{broder1997resemblance,broder1998min} for estimating the Jaccard similarity of two records $X$ and $Y$.
Let $h$ be a hash function that maps the elements of $X$ and $Y$ to distinct integers, and define $h_{min}(X)$ and $h_{min}(Y)$ to be the minimum hash value of a record $X$ and $Y$, respectively.
Assuming no hash collision, Broder\cite{broder1997resemblance} showed that the Jaccard similarity of $X$ and $Y$ is the probability of two minimum hash values being equal:
$Pr[h_{min}(X) = h_{min}(Y)] = J(X, Y)$.
Applying such $k$ different independent hash functions $h_1, h_2,..., h_k$ to a record $X$($Y$, resp.), the MinHash signature of $X$($Y$, resp.) is to keep $k$ values of $h^i_{min}(X)$( $h_{min}(Y)$, resp.) for  $k$ functions.
Let $\mathbf{n}_i, i = 1,2,...,k$ be the indicator function such that
\begin{equation}
\label{eq:indcfunc}
\mathbf{n}_i :=
\begin{cases}
1 &\text{if } h^i_{min}(X) = h^i_{min}(Y), \\
0 &\text{otherwise}.
\end{cases}
\end{equation}
then the Jaccard similarity between record $X$ and $Y$ can be estimated as
\begin{equation}
\label{eq:jacestimator}
\hat{s} = \hat{J}(X, Y) = \frac{1}{k}\sum\limits_{i=1}^{k} \mathbf{n}_i
\end{equation}
Let $s = J(X, Y)$ be the Jaccard similarity of set $X$ and $Y$, then the expectation of $\hat{J}$ is
\begin{equation}
\label{eq:jaccexpe}
E(\hat{s}) = s
\end{equation}
and the variance of $\hat{s}$ is
\begin{equation}
\label{eq:jacvar}
Var(\hat{s}) = \frac{s(1-s)}{k}
\end{equation}

\subsection{KMV Sketch}
\label{subsec:kmv}

The $k$ minimum values(\kmv) technique introduced by Bayer {\emph et. al} in~\cite{beyer2007synopses} is to estimate the number of distinct elements in a large dataset.
Given a no-collision hash function $h$ which maps elements to range $[0,1]$, a \kmv synopses of a record $X$, denoted by $\mathcal{L}_X$, is to keep $k$ minimum hash values of $X$.
Then the number of distinct elements $|X|$ can be estimated by $\widehat{|X|} = \frac{k-1}{U_{(k)}}$ where $U_{(k)}$ is $k$-th smallest hash value.
By $h(X)$, we denote hash values of all elements in the record $X$.

In \cite{beyer2007synopses}, Bayer {\emph et. al} also methodically analyse the problem of distinct element estimation under multi-set operation. As for union operation, consider two records $X$ and $Y$ with corresponding \kmv synopses $\mathcal{L}_X$ and $\mathcal{L}_Y$ of size $k_X$ and $k_Y$, respectively.
In \cite{beyer2007synopses}, $\mathcal{L}_X \oplus \mathcal{L}_Y$ represents the set consisting of the $k$ smallest hash values in $\mathcal{L}_X \cup \mathcal{L}_Y$ where
\begin{equation}
\label{eq:kmv_1}
 k = min(k_X, k_Y)
\end{equation}
Then the \kmv synopses of $X\cup Y$ is $\mathcal{L} = \mathcal{L}_X \oplus \mathcal{L}_Y$.
An unbiased estimator for the number of distinct elements in $X \cup Y$, denoted by $D_{\cup} = |X\cup Y|$ is as follows.
\begin{equation}
\label{eq:kmvunionest}
\hat{D}_\cup = \frac{k-1}{U_{(k)}}
\end{equation}
For intersection operation, the \kmv synopses is $\mathcal{L} = \mathcal{L}_X \oplus \mathcal{L}_Y$  where $k = min(k_X, k_Y)$.
Let $K_{\cap} = |\{v\in \mathcal{L}: v\in \mathcal{L}_X \cap \mathcal{L}_Y \}|$, i.e., $K_\cap$ is the number of common distinct hash values of $\mathcal{L}_X$ and $\mathcal{L}_Y$ within $\mathcal{L}$. Then the number of distinct elements in $X\cap Y$, denoted by $D_\cap$, can be estimated as follows.
\begin{equation}
\label{eq:kmvinterest}
\hat{D}_\cap = \frac{K_\cap}{k}\times\frac{k-1}{U_{(k)}}
\end{equation}
The variance of $\hat{D}_\cap$, as shown in\cite{beyer2007synopses}, is
\begin{equation}
\label{eq:kmvintervar}
Var[\hat{D}_\cap] = \frac{D_\cap(kD_\cup - k^2 - D_\cup + k + D_\cap)}{k(k-2)}
\end{equation}

\section{Existing Solutions}
\label{sct:exist_solutions}

In this section, we present the state-of-the-art technique for the approximate containment similarity search,
followed by theoretical analysis on the limits of the existing solution.


\subsection{\textbf{LSH Ensemble Method}}
\label{subsec:lshe}

\lsh Ensemble technique, \lshe for short, is proposed by Zhu {\emph et. al} in~\cite{zhu2016lsh} to tackle the problem of approximate containment similarity search.
The key idea is : (1) transform the containment similarity search to the well-studied Jaccard similarity search; and
(2) partition the data by length and then apply the \lsh forest~\cite{bawa2005lsh} technique for each individual partition.

\vspace{2mm}
\noindent \textbf{Similarity Transformation.}
Given a record $X$ with size $x = |X|$, a query $Q$ with size $q = |Q|$, containment similarity $t = C(Q,X)$ and Jaccard similarity $s = J(Q,X)$.
The transformation back and forth are as follows.
\begin{equation}
\label{eq:transform}
s = \frac{t}{\frac{x}{q}+1-t},\  t = \frac{(\frac{x}{q}+1)s}{1+s}
\end{equation}

Given the containment similarity search threshold as $t^{*}$ for the query $q$, we may come up with its corresponding Jaccard similarity threshold $s^{*}$ by Equation~\ref{eq:transform}.
A straightforward solution is to apply the existing approximate Jaccard similarity search technique for
each individual record $X \in \mathcal{D}$ with the Jaccard similarity threshold $s^{*}$
(e.g., compute Jaccard similarity between the query $Q$ and a set $X$ based on their MinHash signatures).
In order to take advantages of the efficient indexing techniques (e.g., \lsh forest~\cite{bawa2005lsh}), \lshe will partition the dataset $\mathcal{S}$.

\vspace{2mm}
\noindent \textbf{Data Partition.}
By partitioning the dataset $\mathcal{S}$ according to the record size, \lshe can replace $x$ in Equation~\ref{eq:transform} with its upper bound $u$
(i.e., the largest record size in the partition) as an approximation.
That is, for the given containment similarity $t^*$ we have
\begin{equation}
\label{eq:upper}
s^{*} = \frac{t^{*}}{\frac{u}{q}+1-t^{*}}
\end{equation}
The use of upper bound $u$ will lead to false positives.
In~\cite{zhu2016lsh}, an optimal partition method is designed to minimize the total number of false positives brought by the use of upper bound in each partition.
By assuming that the record size distribution follows the power-law distribution and similarity values are uniformly distributed,
it is shown that the optimal partition can be achieved by ensuring each partition has the equal number of records (i.e., equal-depth partition).

\vspace{2mm}
\noindent \textbf{Containment Similarity Search.}
For each partition $\mathcal{S}_i$ of the data, \lshe applies the dynamic \lsh technique (e.g., \lsh forest~\cite{bawa2005lsh}).
Particularly, the records in $\mathcal{S}_i$ are indexed by a MinHash \lsh with parameter ($b$, $r$) where $b$ is the number of bands used by the \lsh index
and $r$ is the number of hash values in each band.
For the given query $Q$, the $b$ and $r$ values are carefully chosen by considering their corresponding
number of false positives and false negatives regarding the existing records.
Then the candidate records in each partition can be retrieved from the MinHash index according to the corresponding Jaccard similarity thresholds
obtained by Equation~\ref{eq:upper}.
The union of the candidate records from all partitions will be returned as the result of the containment similarity search.

\subsection{Analysis}
\label{subsec:lshe_analysis}

One of the \lshe's advantages is that it converts the containment similarity problem to Jaccard similarity search problem which can be solved by the mature and efficient MinHash \lsh method.
Also, \lshe carefully considers the record size distribution and partitions the records by record size.
In this sense, we say \lshe is a data-dependent method and it is reported that \lshe significantly outperforms existing asymmetric \lsh based solutions~\cite{shrivastava2014asymmetric,shrivastava2015asymmetric}
(i.e., \emph{data-independent} methods) as \lshe can exploit the information of data distribution by partitioning the dataset.
However, this benefit is offset by the fact that the the upper bound will bring extra false positives, in addition to the error from the MinHash technique.

Below we theoretically analyse the performance of \lshe by studying the expectation and variance of its estimator.

Using the notations same as above, let $s = J(Q, X)$ be the Jaccard similarity between query $Q$ and set $X$ and $t = C(Q, X)$ be the containment similarity of $Q$ in $X$.
By Equation~\ref{eq:jacestimator}, given the MinHash signature of query $Q$ and $X$ respectively, an unbiased estimator $\hat{s}$ of Jaccard similarity $s = J(Q, X)$ is the ratio of collisions in the signature, and the variance of $\hat{s}$ is
$Var[\hat{s}] = \frac{s(1-s)}{k}$
where $k$ is signature size of each record.
Then by transformation Equation~\ref{eq:transform}, the estimator $\hat{t}$ of containment similarity $t = C(Q, X)$ by MinHash \lsh is
\begin{equation}
\label{eq:lshest}
  \hat{t} = \frac{(\frac{x}{q}+1)\hat{s}}{1+\hat{s}}
\end{equation}
where $q = |Q|$ and $x = |X|$.
The estimator $\hat{t}'$ of containment similarity $t = C(Q, X)$ by \lshe is
\begin{equation}
\label{eq:lsheest}
  \hat{t}' = \frac{(\frac{u}{q}+1)\hat{s}}{1+\hat{s}}
\end{equation}
where $q = |Q|$ and $u$ is the upper bound of $|X|$.

Next, we use Taylor expansions to approximate the expectation and variance of a function with one random variable~\cite{esmaili2006probability}. We first give a lemma.
\begin{lemma}
\label{lemma:funcAppr}
Given a random variable $X$ with expectation $E[X]$ and variance $Var[X]$, the expectation of $f(X)$ can be approximated as
\begin{equation}
\label{eq:expFunc}
E[f(X)] = f(E[X] + \frac{f''(E[X])}{2} Var[X]
\end{equation}
and the variance of $f(X)$ can be approximated as
\begin{equation}
\label{eq:varFunc}
Var[f(X)] = [f'(E[X])]^2 Var[X] - \frac{[f''(E[X])]^2}{4} Var^2[X]
\end{equation}
\end{lemma}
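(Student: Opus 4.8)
The plan is to expand $f$ in a Taylor series about the mean $\mu := E[X]$ and to retain terms only up to second order. Writing $f(X) = f(\mu) + f'(\mu)(X-\mu) + \tfrac{1}{2}f''(\mu)(X-\mu)^2 + \cdots$, I would take expectations term by term. Using $E[X-\mu]=0$ and $E[(X-\mu)^2]=Var[X]$, and discarding the cubic and higher central moments, the linear term vanishes and we are left with $E[f(X)] \approx f(\mu) + \tfrac{1}{2}f''(\mu)\,Var[X]$, which is exactly Equation~\ref{eq:expFunc}.

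For the variance I would avoid re-expanding $f(X)-E[f(X)]$ directly from the definition $Var[f(X)]=E[(f(X)-E[f(X)])^2]$, because that route drags in the third and fourth central moments of $X$ and does not obviously collapse to the stated form. Instead I would apply the expectation approximation just derived in a self-consistent way through the identity $Var[f(X)] = E[f(X)^2] - (E[f(X)])^2$. Treating $g(X) := f(X)^2$ as a new function, with $g''=2[f']^2+2ff''$, and feeding it into Equation~\ref{eq:expFunc} yields $E[f(X)^2] \approx f(\mu)^2 + \big([f'(\mu)]^2 + f(\mu)f''(\mu)\big)Var[X]$. Squaring the approximation for $E[f(X)]$ gives $(E[f(X)])^2 \approx f(\mu)^2 + f(\mu)f''(\mu)\,Var[X] + \tfrac{1}{4}[f''(\mu)]^2 Var^2[X]$.

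Subtracting these two expressions, the $f(\mu)^2$ terms cancel and the cross terms $f(\mu)f''(\mu)\,Var[X]$ cancel as well, leaving $Var[f(X)] \approx [f'(\mu)]^2 Var[X] - \tfrac{1}{4}[f''(\mu)]^2 Var^2[X]$, which is precisely Equation~\ref{eq:varFunc}. The cancellation of the first-order terms is what forces the surviving correction to be second order in $Var[X]$, and it is the reason the final variance formula carries the characteristic $-\tfrac{1}{4}[f''(\mu)]^2 Var^2[X]$ term rather than a third-moment contribution.

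The main obstacle is not any individual calculation but keeping the order of truncation consistent between the two pieces that are subtracted. If one expands $E[f(X)^2]$ and $(E[f(X)])^2$ to different orders, or mixes the delta-method expansion with a direct second-order expansion of the centered variable, spurious moment terms survive and the clean form above is lost. I would therefore emphasize that both $E[f(X)]$ and $E[f(X)^2]$ are obtained from the \emph{same} second-order rule (Equation~\ref{eq:expFunc}), so that all contributions beyond quadratic order in $Var[X]$ are dropped uniformly; under this convention the identity follows without any symmetry assumption on $X$.
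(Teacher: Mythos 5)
Your proposal is correct: the expectation formula follows from the second-order Taylor expansion exactly as you write, and your variance derivation via $Var[f(X)] = E[f(X)^2] - (E[f(X)])^2$, applying the same second-order rule to $g = f^2$, reproduces Equation~\ref{eq:varFunc} with all steps checking out ($g'' = 2[f']^2 + 2ff''$, the cross terms $f(\mu)f''(\mu)Var[X]$ cancelling, and the $-\tfrac{1}{4}[f''(\mu)]^2 Var^2[X]$ remainder surviving). Note, however, that the paper itself offers no proof of this lemma: it is quoted as a standard Taylor-expansion (delta-method) approximation with a citation to a probability textbook, so your argument fills a gap rather than paralleling an existing one. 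One remark on your stated motivation for avoiding the direct route: expanding $E[(f(X)-E[f(X)])^2]$ does introduce the third and fourth central moments, giving $[f'(\mu)]^2 Var[X] + f'(\mu)f''(\mu)\mu_3 + \tfrac{1}{4}[f''(\mu)]^2(\mu_4 - Var^2[X])$, but if one neglects $\mu_3$ and $\mu_4$ uniformly --- the same truncation convention you insist on --- this collapses to exactly the stated formula as well. So the two routes are equivalent under consistent truncation; yours is simply the cleaner bookkeeping, since higher moments never appear and the approximation is applied through a single rule (Equation~\ref{eq:expFunc}) rather than through a separate moment-dropping step.
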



According to Equation~\ref{eq:lshest}, let $\hat{t} = f(\hat{s}) = \alpha \frac{\hat{s}}{1+\hat{s}}$ where $\alpha = \frac{x}{q} + 1$. We can see that the estimator $\hat{t}$ is a function of $\hat{s}$, and $f'(\hat{s}) = \alpha \frac{1}{(1+\hat{s})}$ and $f''(\hat{s}) = -2\alpha \frac{1}{(1+\hat{s})}$.
Then based on Lemma~\ref{lemma:funcAppr}, the expectation and variance of $\hat{t}$ are approximated as
\begin{equation}
\label{eq:lsh_expe}
  E[\hat{t}] \approx t(1-\frac{1-s}{k(1+s)^2})
\end{equation}
\begin{equation}
\label{eq:lsh_var}
  Var[\hat{t}] \approx \frac{D_{\cap}^2(1-s)[k(1+s)^2 - s(1-s)]}{q^2k^2s(1+s)^4}
\end{equation}
Similarly, the expectation and variance of \lshe estimator $\hat{t}'$ can be approximated as
\begin{equation}
\label{eq:lshe_expe}
  E[\hat{t}'] \approx t(\frac{u+q}{x+q})(1-\frac{1-s}{k(1+s)^2})
\end{equation}
\begin{equation}
\label{eq:lshe_var}
  Var[\hat{t}'] \approx (\frac{u+q}{x+q})^2\frac{D_{\cap}^2(1-s)[k(1+s)^2 - s(1-s)]}{q^2k^2s(1+s)^4}
\end{equation}
The computation details are in technique report~\cite{misc:technique_report}.
Since $u$ is the upper bound of $x$, the variance of \lshe estimator $Var[\hat{t}']$ is larger than that of MinHash \lsh estimator.
Also, by Equation~\ref{eq:lsh_expe} and Equation~\ref{eq:lshe_expe}, we can see that both estimators are \emph{biased} and \lshe method
 is quite sensitive to the setting of the upper bound $u$ by Equation~\ref{eq:lshe_expe}.
Because the presence of upper bound $u$ will \textit{enlarge} the estimator off true value,
\lshe method favours recall while the precision will be deteriorated.
The larger the upper bound $u$ is, the worse the precision will be.
Our empirical study shows that \lshe cannot achieve a good trade-off between accuracy and space, compared with our proposed method.


\section{Our Approach}
\label{sct:approach}
In this section, we introduce an augmented \kmv sketch technique to achieve better space-accuracy trade-off for approximate containment similarity search.
Section~\ref{subsec:alg_motivation} briefly introduces the motivation and main technique of our method, namely \gbkmv.
The detailed implementation is presented in Section~\ref{subsec:alg}, followed by extensive theoretical analysis in Section~\ref{subsec:analysis}.

\subsection{\textbf{Motivation and Techniques}}
\label{subsec:alg_motivation}

The key idea of our method is to propose a \emph{data-dependent} indexing technique such that we can exploit the distribution of the data 
(i.e., record size distribution and element frequency distribution) for better performance of containment similarity search.
We augment the existing \kmv technique by introducing a \underline{g}lobal threshold for sample size allocation
and a \underline{b}uffer for frequent elements, namely \gbkmv, to achieve better trade-off between synopses size and accuracy.
Then we apply the existing set similarity join/search indexing technique to speed up the containment similarity search.

Below we outline the motivation of the key techniques used in this paper.
Detailed algorithms and theoretical analysis will be introduced in
Section~\ref{subsec:alg} and~\ref{subsec:analysis}, respectively.

\vspace{1mm}
\noindent \textbf{(1) Directly Apply KMV Sketch}

Given a query $Q$ and a threshold $t^*$ on containment similarity, the goal is to find record $X$ from dataset $\mathcal{S}$ such that
\begin{equation}
\label{eq:problemdef}
\frac{|Q\cap X|}{|Q|} \geq t^* ,
\end{equation}
Applying some simple transformation to Equation~\ref{eq:problemdef}, we get
\begin{equation}
\label{eq:inter trans}
|Q\cap X| \geq t^* |Q| ,
\end{equation}
Let $\theta  = t^* |Q|$, then the containment similarity search problem is converted into finding record $X$
whose intersection size with the query $Q$ is not smaller than $\theta$, i.e., $|Q\cap X| \geq \theta $.

Therefore, we can directly apply the \kmv method introduced in Section~\ref{subsec:kmv}.
Given \kmv signatures of a record $X$ and a query $Q$, we can estimate their intersection size ($|Q \cap X|$) according to Equation~\ref{eq:kmvinterest}.
Then the containment similarity of $Q$ in $X$ is immediately available given the query size $|Q|$. 
Below, we show an example on how to apply \kmv method to containment similarity search.

\begin{figure}[t]
\centering
\vspace{0.2cm}
\begin{tabular}[b]{l l l}
\hline
{\bfseries \ \  } & {\bfseries $\mathcal{L}_{KMV}$} & $k_i$\\
\hline
$\mathcal{L}_{X_1}$ & $\{(e_2,0.24), (e_7,0.33), (e_4,0.47)\}$  & $3$\\

$\mathcal{L}_{X_2}$ & $\{(e_5,0.10), (e_2,0.24), (e_3,0.85)\}$  & $3$\\

$\mathcal{L}_{X_3}$ & $\{(e_5,0.10), (e_2,0.24)\}$ & $2$\\

$\mathcal{L}_{X_4}$ & $\{(e_{10},0.18), (e_2,0.24)\}$ & $2$\\

$\mathcal{L}_{Q}$ & $\{(e_5,0.10), (e_2,0.24), (e_7,0.33), (e_9,0.56)\}$ & $4$ \\
\hline
\end{tabular}
\caption{\small The \kmv sketch of the dataset in Example~\ref{fig:problem example1}, each signature consists of element-hash value pairs. $k_i$ is the signature size of $X_i$}
\label{fig:kmvsketch}
\vspace{-1mm}
\end{figure}

\begin{example}
\label{exam:kmv}
Fig.~\ref{fig:kmvsketch} shows the \kmv sketch on dataset in Example~\ref{exam:prob_def}.
Given \kmv signature of $Q$ ($\mathcal{L}_{Q}=\{(e_5,0.10), (e_2,0.24), (e_7,0.33), e_9,0.56)\}$) and $X_1$ ($\mathcal{L}_{X_1} =\{(e_2,0.24), (e_7,0.33), (e_4,0.47)\}$),
we have $k=\min\{k_Q, k_1\} = 3$, then the size-$k$ \kmv synopses of $Q\cup X_1$ is $\mathcal{L} = \mathcal{L}_{Q} \oplus \mathcal{L}_{X_1} =  \{(e_5,0.10), (e_2,0.24), (e_7,0.33)\}$,
the $k$-th smallest hash value $U_{(k)}$ is 0.33 and
 the size of intersection of $\mathcal{L}_{Q}$ and $\mathcal{L}_{X_1}$ within $\mathcal{L}$ is $K_{\cap} = |\{v: v\in \mathcal{L}_{Q}\cap \mathcal{L}_{X_1}, v\in \mathcal{L}\}|=2$. Then the intersection size of $Q$ and $X_1$ is estimated as $\hat{D}_\cap = \frac{K_\cap}{k}\times\frac{k-1}{U_{(k)}} = \frac{2}{3}*\frac{2}{0.33} = 4.04$, and the containment similarity is  $\hat{t} = \frac{\hat{D}_{\cap}}{|Q|} = 0.67$.
Then $X_1$ is returned if the given containment similarity threshold $t^*$ is $0.5$.
\end{example}

\begin{remark}
In~\cite{zhu2016lsh}, the size of the query is approximated by MinHash signature of $Q$, where \kmv sketch can also serve for the same purpose.
But the exact query size is used their implementation for performance evaluation. In practice, the query size is readily available,
we assume query size is given throughout the paper.
\end{remark}

\vspace{1mm}
\noindent \textbf{Optimization of KMV Sketch.} 
Given a space budget $b$, we can keep size-$k_i$ \kmv signatures (i.e., $k_i$ minimal hash values) for each record $X_i$ with $\sum_{i=1}^n k_i = b$.
A natural question is how to allocate the resource (e.g., setting of $k_i$ values)
to achieve the best overall estimation accuracy.
Intuitively, more resources should be allocated to records with more frequent elements or larger record size,
i.e., larger $k_i$ for record with larger size. 
However, \textbf{Theorem~\ref{theo:kmv sig}} (Section~\ref{subsubsec:optimalKMV}) suggests  that, the optimal resource allocation strategy in terms of estimation variance
is to use the same size of signature  for each record. This is because the \textbf{minimal} of two k-values is used in Equation~\ref{eq:kmv_1}, and hence the best solution is to \textbf{evenly} 
allocate the resource. Thus, we have the \kmv sketch based method for approximate containment similarity search.
For the given budget $b$, we keep $k_i=$ $\lfloor \frac{b}{m} \rfloor$ minimal hash values for each record $X_i$.

\vspace{1mm}
\noindent \textbf{(2) Impose a Global Threshold to KMV Sketch (\gkmv)}

The above analysis on optimal \kmv sketch suggests an equal size allocation strategy, that is, each record is associated with the same size signature. Intuitively we should assign more resources (i.e., signature size) to the records with large size because they are more likely to appear in the results.
However, the estimate accuracy of KMV for two sets size intersection is determined by the sketch with smaller size
since we choose $k = \min(k_1, k_2)$ for \kmv signatures of $X_1$ and $X_2$ for $D_\cup$ and $D_\cap$ in Equation~\ref{eq:kmvunionest}, thus it is useless to give more resource to one of the records.
We further explain the reason behind with the following example.

Before we introduce the global threshold to \kmv sketch,
consider the \kmv sketch shown in the Fig.~\ref{fig:kmvsketch}.

\begin{example}
\label{exm:gkmb_motivation}
Suppose we have $\mathcal{L}_{Q} =$ $\{(e_5,0.10), (e_2,0.24), (e_7,0.33), (e_9,0.56)\}$
and  $\mathcal{L}_{X_3} =$ $\{(e_5,0.10), (e_2,0.24)\}$  .
Although there are four hash values in $\mathcal{L}_Q \cup \mathcal{L}_{X_3} = \{(e_5,0.10), (e_2,0.24), (e_7,0.33), (e_9,0.56)\}$,
we can only consider $k=\min\{k_Q, k_{X_3}\}=2$
smallest hash values of $\mathcal{L}_Q \cup \mathcal{L}_{X_3}$ by Equation~\ref{eq:kmv_1},
which is $\{(e_5,0.10), (e_2,0.24)\}$,
and the $k$-th ($k = 2$) minimum hash value used in Equation~\ref{eq:kmvunionest} is $0.24$.
We cannot use $k=4$ (i.e., $U_{(k)}$=$0.56$)
to estimate $|Q \cup X_3|$
because the $4$-th smallest hash value in $\mathcal{L}_Q \cup \mathcal{L}_{X_3}$
\textbf{may not be} the $4$-th smallest hash values in $h(Q \cup X_3)$,
because the unseen $3$-rd smallest hash value of $X_3$ might be $(e_4, 0.47)$ for example,
which is smaller than $0.56$.
Recall that $h(Q \cup X_3)$ denote the hash values of all elements
in  $Q \cup X_3$.

Nevertheless, if we know that all the hash values \emph{smaller than a global threshold}, say $0.6$,
are kept for \emph{every} record, we can safely use the $4$-th hash value of $\mathcal{L}_Q \cup \mathcal{L}_{X_3}$ (i.e., $0.56$) for the estimation.
This is because we can ensure
the $4$-th smallest hash value in $\mathcal{L}_Q \cup \mathcal{L}_{X_3}$
\textbf{must be} the $4$-th smallest hash values in $h(Q \cup X_3)$.

\end{example}
Inspired by the above observation, we can carefully choose a global threshold $\tau$ (e.g., $0.6$ in the above example)
for a given space budget $b$,
and ensure all hash values smaller than $\tau$ will be kept for \kmv sketch of the records.
By imposing a global threshold, we can identify a better (i.e., larger)
$k$ value used for estimation, compared with Equation~\ref{eq:kmv_1}.

Given a record $X$ and a global threshold $\tau$,
the sketch of a record $X$ is obtained as $\mathcal{L}_X =\{h(e): h(e)\leq \tau, e\in X\}$
where $h$ is the hash function.
The sketch of $Q$ ($\mathcal{L}_Q$) is defined in the same way.
In this paper, we say a \kmv sketch is a \gkmv sketch if we impose a global threshold to generate \kmv sketch. 
Then we set $k$ value of the \kmv estimation as follows.
\begin{equation}
\label{eq:kmv_global}
 k = |\mathcal{L}_Q \cup \mathcal{L}_X|
\end{equation}
Meanwhile, we have $K_{\cap} = |\mathcal{L}_Q \cap \mathcal{L}_X|$.
Let $U_{(k)}$ be the $k$-th minimal hash value in $\mathcal{L}_Q \cup \mathcal{L}_X$, then the overlap size of  $Q$ and $X$ can be estimated as
\begin{equation}
\label{eq: gkmv compute}
  \hat{D}_{\cap}^{GKMV} = \frac{K_{\cap}}{k} \frac{k-1}{U_{(k)}}
\end{equation}
Then the containment similarity of $Q$ in $X$ is
\begin{equation}
\label{eq: gkmv estimator}
  \hat{C} = \frac{\hat{D}_{\cap}^{GKMV}}{q}
\end{equation}
where $q$ is the query size.  We remark that, as a by-product, the global threshold favours the record with large size
because all elements with hash value smaller than $\tau$ are kept for each record.

\begin{figure}[t]
\centering
\vspace{0.2cm}
\begin{tabular}[b]{l l}
\hline
{\bfseries \ \  } & {\bfseries $\mathcal{L}_{GKMV}$} \\
\hline
$\mathcal{L}_{X_1}$ & $\{(e_2,0.24), (e_7,0.33), (e_4,0.47)\}$\\

$\mathcal{L}_{X_2}$ & $\{(e_5,0.10), (e_2,0.24)\}$ \\

$\mathcal{L}_{X_3}$ & $\{(e_5,0.10), (e_2,0.24), (e_4, 0.47)\}$ \\

$\mathcal{L}_{X_4}$ & $\{(e_{10},0.18), (e_2,0.24)\}$ \\

$\mathcal{L}_{Q}$ & $\{(e_5,0.10), (e_2,0.24), (e_7,0.33)\}$  \\
\hline
\end{tabular}
\caption{\small The \gkmv sketch of the dataset in Example~\ref{fig:problem example1} with hash value threshold $\tau =0.5$}
\label{fig:gkmv sketch}
\vspace{-1mm}
\end{figure}

Below is an example on how to compute the containment similarity based on \gkmv sketch.
\begin{example}
\label{exam:gkmv}
Fig. ~\ref{fig:gkmv sketch} shows the \kmv sketch  of dataset in Example~\ref{fig:problem example1} with a global threshold $\tau = 0.5$. Given the signature of $Q$($\mathcal{L}_{Q} = \{(e_5,0.10), (e_2,0.24), (e_7,0.33)\}$) and $X_1$($\mathcal{L}_{X_1} = \{(e_2,0.24), (e_7,0.33),(e_4, 0.47)\}$), the \kmv sketch of $Q\cup X_1$ is $\mathcal{L} = \mathcal{L}_Q \cup \mathcal{L}_{X_1} = \{(e_5,0.10),(e_2,0.24), (e_7,0.33), (e_4,0.47)\}$, the $k$-th($k=4$) smallest hash value is $U_{(k)} = 0.47$, and the size of intersection of $\mathcal{L}_{Q}$ and $\mathcal{L}_{X_1}$ within $\mathcal{L}$ is $K_{\cap} = |\{v: v\in \mathcal{L}_{Q}\cap \mathcal{L}_{X_1}, v\in \mathcal{L}\}|=2$. Then the intersection size of $Q$ and $X_1$ is estimated as $\hat{D}_\cap = \frac{K_\cap}{k}\times\frac{k-1}{U_{(k)}} = \frac{2}{4}*\frac{3}{0.47} = 3.19$, and the containment similarity is  $\hat{t} = \frac{\hat{D}_{\cap}}{|Q|} = 0.53$.
Then $X_1$ is returned if the given containment similarity threshold $t^*$ is $0.5$.
\end{example}

\vspace{1mm}
\noindent \textbf{Correctness of \gkmv sketch}. \textbf{Theorem~\ref{theo: gkmv valid}} in Section~\ref{subsubsec:correct_gkmv} shows the correctness of the \gkmv sketch. 

\vspace{1mm}
\noindent \textbf{Comparison with \kmv}. 
In \textbf{Theorem~\ref{theo: gkmv better}} (Section~\ref{subsubsec:better_kmv}), we theoretically show that \gkmv can achieve better accuracy compared with \kmv.

\begin{remark}
Note that the global threshold technique cannot be applied to MinHash based techniques.
In minHash \lsh, the $k$ minimum hash values are corresponding to $k$ different independent hash functions,
while in \kmv sketch, the $k$-value sketch is obtained under one hash function. 
Thus we can only impose this global threshold on the same hash function for the \kmv sketch based method.
\end{remark}

\vspace{1mm}
\noindent \textbf{(3) Use Buffer for KMV Sketch (\gbkmv)}

In addition to the skewness of the record size, it is also worthwhile to exploit the skewness of the element frequency.
Intuitively, more resource should be assigned to high-frequency elements because they are more likely to appear in the records.
However, due to the nature of the hash function used by \kmv sketch,
the hash value of an element is \emph{independent} to its frequency; that is, all elements have the same opportunity contributing to the \kmv sketch.

One possible solution is to divide the elements into multiple disjoint groups according to their frequency (e.g., low-frequency and high-frequency ones), and then apply \kmv sketch for each individual group.
The intersection size between two records $Q$ and $X$ can be computed within each group and then sum up together.
However, our initial experiments suggest that this will lead to poor accuracy because of the summation of the intersection size estimations. In Theorem~\ref{theo: kmv any cut} (Section~\ref{subsubsec:partition_notgood}),
our theoretical analysis suggests that the combination of estimated results are very likely to make the overall accuracy worse.

To avoid combining multiple estimation results, we use a bitmap buffer with size $r$ for each record to \emph{exactly} keep track of the $r$ most frequent elements, denoted by $\mathcal{E}_H$. Then we apply \gkmv technique to the remaining elements,
resulting in a new augmented sketch, namely \gbkmv. 
Now we can estimate $|Q \cap X|$ by combining the intersection of their bitmap buffers (\emph{exact solution}) and \kmv sketches (\emph{estimated solution}). 

As shown in Fig.~\ref{fig:gbkmv sketch}, suppose we have $\mathcal{E}_{H} = \{e_1, e_2 \}$ and the global threshold for hash value is $\tau = 0.5$, then the sketch of each record consists of two parts  $\mathcal{L}_{H}$ and $\mathcal{L}_{GKMV}$;
that is, for each record we use bitmap to keep the elements corresponding to high-frequency elements $\mathcal{E}_H = \{e_1, e_2 \}$, then we store the left elements with hash value less than $\tau = 0.5$.
\begin{figure}[!t]
\centering
\vspace{0.2cm}
\begin{tabular}[b]{l l l}
\hline
{\bfseries \ \ } & {\bfseries $\mathcal{L}_{H}$ } & {\bfseries $\mathcal{L}_{GKMV}$}\\
\hline
$X_1$ & $\{e_1, e_2\}$ & $\{(e_7,0.33), (e_4,0.47)\}$  \\

$X_2$ & $\{e_2\}$ & $\{(e_5,0.10)\}$  \\

$X_3$ & $\{e_2\}$ & $\{(e_5,0.10)\}$ \\

$X_4$ &$\{e_1, e_2\}$ & $\{e_{10},0.18)\}$\\

$Q$ & $\{e_1, e_2\}$ & $\{(e_5,0.10), (e_7,0.33)\}$ \\
\hline
\end{tabular}
\caption{\small The \gbkmv sketch of dataset in Example~\ref{fig:problem example1}}
\label{fig:gbkmv sketch}
\vspace{-1mm}
\end{figure}

\begin{example}
\label{exam:gbkmv}
Given the signature of $Q$($\mathcal{L}_{Q} = \{e_1, e_2\} \cup \{(e_5,0.10), (e_7,0.33)\}$) and $X_1$($\mathcal{L}_{X_1} = \{e_1, e_2\} \cup \{(e_7,0.33), (e_4, 0.47)\}$), the intersection of High-frequency part is $\mathcal{L}_{Q}^{H} \cap \mathcal{L}_{X_1}^{H} = \{e_1, e_2\}$ with intersection size as $2$; next we consider the \gkmv part. Similar to Example ~\ref{exam:gkmv}, we compute the intersection of $\mathcal{L}_{GKMV}$ part.
The \kmv sketch  is $\mathcal{L}' = \mathcal{L}'_Q \cup \mathcal{L}'_{X_1} = \{(e_5,0.10),(e_7,0.33), (e_4,0.47)\}$.
According to Equation~\ref{eq:kmv_global}, the $k$-th($k=3$) smallest hash value is $U_{(k)} = 0.47$, and the size of intersection of $\mathcal{L}_{Q}$ and $\mathcal{L}_{X_3}$ within $\mathcal{L}$ is $K_{\cap} = |\{v: v\in \mathcal{L}_{Q}\cap \mathcal{L}_{X_3}, v\in \mathcal{L}\}|=1$. Then the intersection size of $Q$ and $X_1$ in $\mathcal{L}_{GKMV}$ part is estimated as $\hat{D}_\cap = \frac{K_\cap}{k}\times\frac{k-1}{U_{(k)}} = \frac{1}{3}*\frac{2}{0.47} = 1.4$; together with the High-frequency part, the intersection size of $Q$ and $X_1$ is estimated as $2+1.4=3.4$ and the containment similarity is  $\hat{t} = \frac{\hat{D}_{\cap}}{|Q|} = 0.53$.
Then $X_1$ is returned if the given containment similarity threshold $t^*$ is $0.5$.
\end{example}

\vspace{1mm}
\noindent \textbf{Optimal Buffer Size.}  The key challenge is how to set the size of bitmap buffer for the best expected performance of \gbkmv sketch. In Section~\ref{subsubsec:choose_r}, we provide a theoretical analysis,
which is verified in our performance evaluation. 

\vspace{1mm}
\noindent \textbf{Comparison with \gkmv}.
As the \gkmv is a special case of \gbkmv with buffer size $0$ and we carefully choose the buffer size with our cost model, the accuracy of \gbkmv is \textbf{not worse} than \gkmv.

\vspace{1mm}
\noindent \textbf{Comparison with \lshe}.
Through theoretical analysis, we show that the performance (i.e., the variance of the estimator)
of \gbkmv can always outperform that of \lshe in \textbf{Theorem~\ref{theo:gbkmvBetter}} (Section~\ref{subsubsec:gkmv_better}).

\subsection{\textbf{Implementation of GB-KMV}}
\label{subsec:alg}

In this section, we introduce the technique details of our proposed \gbkmv method. We first show how to build \gbkmv sketch on the dataset $\mathcal{S}$ and
then present the containment similarity search algorithm.

\vspace{1mm}
\noindent \textbf{GB-KMV Sketch Construction}. 
For each record $X \in \mathcal{S}$, its \gbkmv sketch consists of two components:
(1) a buffer which exactly keeps high-frequency elements, denoted by $\mathcal{H}_X$;
and (2) a \gkmv sketch, which is a \kmv sketch with a global threshold value,
denoted by $\mathcal{L}_X$.

\begin{algorithm}[hbt]
\SetVline 
\SetFuncSty{textsf}
\SetArgSty{textsf}
\small
\caption{\textbf{\gbkmv Index Construction}}
\label{alg:gbkmv index}
\Input
{
\ds: dataset; $b$: space budget; \\
$h$: a hash function; $r$: buffer size
}
\Output
{
    $\mathcal{L}_{\mathcal{S}}$, the \gbkmv index of dataset \ds
}
\State{Compute buffer size $r$ based on distribution statistics of \ds and the space budget $b$}
\label{alg:index_c_r}
\State{$\mathcal{E}_H$ $\leftarrow$ Top $r$ most frequent elements; $\mathcal{E}_K \leftarrow $ $\mathcal{E} \setminus \mathcal{E}_H$ }
\label{alg:index_c_h}
\State{$\tau \leftarrow $ compute the global threshold for hash values}
\label{alg:index_c_t}
\ForEach{record $X \in \mathcal{S}$}
{
    \State{$\mathcal{H}_X  \leftarrow $ elements of $X$ in $\mathcal{E}_H$ }
    \label{alg:index_h}
    \State{ $\mathcal{L}_X  \leftarrow$ hash values of elements $\{e\}$  of $X$
     with $h(e) \leq \tau$}
     \label{alg:index_k}
}
\end{algorithm}
Algorithm~\ref{alg:gbkmv index} illustrates the construction of \gbkmv sketch.
Let the element universe be $\mathcal{E} = \{e_1, e_2,..., e_n\}$ and each element is associated with its frequency in dataset \ds. Line~\ref{alg:index_c_r} calculates a buffer size $r$ for all records based on the skewness of record size and elements as well as the space budget $b$ in terms of elements. Details will be introduced in Section~\ref{subsubsec:choose_r}.
We use $\mathcal{E}_H$ to denote the set of top-$r$ most frequent elements (Line~\ref{alg:index_c_h}),
and they will be kept in the buffer of each record.
Let $\mathcal{E}_K$ denote the remaining elements. Line~\ref{alg:index_c_t} identifies maximal possible global threshold $\tau$ for elements in $\mathcal{E}_K$ such that the total size of \gbkmv sketch meets the space budget $b$.
For each record $X$, let $n_X$ denote the number of elements in $\mathcal{E}_K$ with hash values less than $\tau$,
we have $\sum_{X \in \mathcal{S}} (\frac{r}{32} + n_X) \leq b$.
Then Lines~\ref{alg:index_h}-\ref{alg:index_k} build the buffer $\mathcal{H}_X$ and \gkmv sketch
$\mathcal{L}_X$ for every record $X \in \mathcal{S}$.
In section~\ref{theo: gkmv valid}, we will show the correctness of our sketch in Theorem~\ref{theo: gkmv valid}.

\vspace{1mm}
\noindent \textbf{Containment Similarity Search.}
Given the \gbkmv sketch of the query record $Q$ and the dataset $\mathcal{S}$,
we can conduct approximate similarity search as illustrated in Algorithm~\ref{alg:search}.
Given a query $Q$ with size $q$ and the similarity threshold $t^*$,
let $\theta = t^* * q$(Lines 1-2).
With \gbkmv sketch $\{ \mathcal{H}_Q, \mathcal{L}_Q \}$,
we can calculate the containment similarity based on
\begin{equation}\label{eq: gbkmv compute}
\widehat{|Q \cap X|} = |\mathcal{H}_Q \cap \mathcal{H}_X| + \hat{D}_{\cap}^{GKMV}
\end{equation}
where $\hat{D}_{\cap}^{GKMV}$ is the estimation of overlap size of $Q$ and $X$
which is calculated by Equation~\ref{eq: gkmv compute} in Section~\ref{subsec:alg_motivation}.

Note that $|\mathcal{H}_Q \cap \mathcal{H}_X|$ is the number of common elements of $Q$ and $X$ in $\mathcal{E}_H$.
\begin{algorithm}[hbt]
\SetVline 
\SetFuncSty{textsf}
\SetArgSty{textsf}
\small
\caption{\textbf{Containment Similarity Search}}
\label{alg:search}
\Input
{
$Q$, a query set\\
$t^{*}$, containment similarity threshold
}
\Output{$R:$ records $\{X\}$ with $C(Q,X) \geq t^{*}$}
\State{$q$ $\leftarrow$ $|Q|$}
\State{$\theta$ $\leftarrow$ $t^* * q $}
\ForEach{record $X$ $\in$ $\mathcal{S}$}
{
    \State{ $\widehat{|Q\cap X|}$ $\leftarrow$ $|\mathcal{L}^Q_H \cap \mathcal{L}^X_H| + \hat{D}_{\cap}^{GKMV}$}
    \label{alg:search_sim}
    \If{$\widehat{|Q\cap X|} \geq \theta$}{
        \State{$\mathcal{S}_{candidate}= \mathcal{S}_{candidate} \cup X$}
    }
}
\Return{$\mathcal{S}_{candidate}$}
\end{algorithm}

\vspace{1mm}
\noindent \textbf{Implementation of Containment Similarity Search.}
In our implementation, we use a bitmap with size $r$ to keep the elements in buffer
where each bit is reserved for one frequent element.
We can use \emph{bitwise intersection} operator to efficiently compute $|\mathcal{H}_Q \cap \mathcal{H}_X|$
in Line~\ref{alg:search_sim} of Algorithm~\ref{alg:search}.
Note that the estimator of overlap size by \gkmv method in Equation~\ref{eq: gkmv compute} is
$\hat{D}_{\cap}^{GKMV} = \frac{K_{\cap}}{k} \frac{k-1}{U_{(k)}}$.
As to the computation of $\widehat{|Q\cap X|}$,
we apply some transformation to $|\mathcal{L}^Q_H \cap \mathcal{L}^X_H| + \hat{D}_{\cap}^{GKMV} \geq \theta$.
Then we get $K_{\cap} \geq o $ where $o = U_{(k)}(\theta - o_1)$ and $o_1 = |\mathcal{H}_Q \cap \mathcal{H}_X|$. Since $K_{\cap}$ is the overlap size, then we make use of the PPjoin*~\cite{xiao2011efficient} to speed up the search. 
Note that in order to make the PPjoin* which is designed for similarity join problem to be applicable to the similarity search problem, 
we partition the dataset \ds by record size, and in each partition we search for the records which satisfy $K_{\cap} \geq o$, 
where overlap size is modified by the lower bound in corresponding partition.

\begin{remark}
Note that the size-aware overlap set similarity joins algorithm in~\cite{deng2018overlap} can not be applied to our \gbkmv method, 
because we need to online construct $c$-subset inverted list for each incoming query, which results in very inefficient performance.
\end{remark}

\noindent \textbf{Processing Dynamic Data.}
Note that our algorithm can be modified to process dynamic data.
Particularly, when new records come, we compute the new global threshold $\tau$ under the fixed space budget by Line ~\ref{alg:index_c_t} of Algorithm ~\ref{alg:gbkmv index},
and with the new global threshold, we maintain the sketch of each record as shown in Line ~\ref{alg:index_k} of Algorithm ~\ref{alg:gbkmv index}.

\subsection{Theoretical Analysis}
\label{subsec:analysis}
In this section, we provide theoretical underpinnings of the claims and observations in this paper.
\subsubsection{\textbf{Background}}
\label{subsubsec:the_assumption}
We need some reasonable assumptions on the record size distribution, element frequency distribution and query work-load
for a comprehensive analysis.
Following are three popular assumptions widely used in the literature (e.g.,~\cite{albert1999r, jeong2000large,goldstein2004problems,clauset2009power,cho2011friendship, zhu2016lsh,shrivastava2014asymmetric}):
\begin{itemize}
  \item The element frequency in the dataset follows the power-law distribution, with $p_1(x) = c_1 x^{-\alpha_1}$.
  \item The record size in the dataset follows the power-law distribution, with  $p_2(x) = c_2 x^{-\alpha_2}$.
  \item The query $Q$ is randomly chosen from the records.
\end{itemize}
Throughout the paper, we use the variance to evaluate the goodness of an estimator.
Regarding the \kmv based sketch techniques (\kmv, \gkmv and \gbkmv),
we have
\begin{lemma}
\label{lem:eff sig size}
In \kmv sketch based methods, the larger the $k$ value used in Equation~\ref{eq:kmv_1}
and Equation~\ref{eq:kmv_global} is, the smaller the variance will be.
\end{lemma}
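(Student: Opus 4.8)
The plan is to read Lemma~\ref{lem:eff sig size} as a pure monotonicity statement about the variance formula already recorded in Equation~\ref{eq:kmvintervar}. The crucial modeling observation is that $D_\cap = |Q \cap X|$ and $D_\cup = |Q \cup X|$ are intrinsic to the underlying records and do \emph{not} depend on how many hash values we retain; only the parameter $k$ changes when we enlarge a sketch. Hence, whether $k$ is set by Equation~\ref{eq:kmv_1} (plain \kmv, $k=\min(k_X,k_Y)$) or by Equation~\ref{eq:kmv_global} (\gkmv, $k=|\mathcal{L}_Q\cup\mathcal{L}_X|$), the same rational function $V(k):=Var[\hat D_\cap]$ governs the accuracy, so it suffices to prove that $V$ is strictly decreasing in $k$ on its admissible domain $k>2$ (where the $k(k-2)$ denominator is positive and the estimator of Beyer et al.\ is valid).

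First I would treat $k$ as a continuous variable and simplify $V(k)$ by polynomial division, writing $V(k) = -D_\cap + D_\cap\,\frac{(D_\cup-1)k+(D_\cap-D_\cup)}{k(k-2)}$. I would then decompose the remaining proper rational term by partial fractions as $\frac{P}{k}+\frac{R}{k-2}$, where a short calculation gives $P=\tfrac{1}{2}(D_\cup-D_\cap)$ and $R=\tfrac{1}{2}(D_\cap+D_\cup-2)$. This rewriting is the heart of the argument: each of $1/k$ and $1/(k-2)$ is strictly decreasing for $k>2$, so the only thing left to check is that their coefficients are nonnegative.

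To finish, I would invoke the structural constraints $D_\cup \ge D_\cap \ge 1$ (a nonempty intersection contained in the union). These give $P\ge 0$ immediately, and $R\ge 0$ because $D_\cap+D_\cup\ge 2$, with $R>0$ except in the degenerate case $D_\cap=D_\cup=1$. Consequently $\frac{P}{k}+\frac{R}{k-2}$ is a nonnegative combination of strictly decreasing functions and is itself strictly decreasing; multiplying by the positive constant $D_\cap$ and subtracting the constant $D_\cap$ preserves monotonicity, so $V(k)$ strictly decreases in $k$. Translating back, a larger $k$ yields a smaller variance, which is exactly the claim; the case $D_\cap=0$ is vacuous since the estimator and its variance are then identically zero.

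The step I expect to be the main obstacle is confirming the signs of the partial-fraction coefficients, since this is where the set-theoretic inequalities $D_\cap\le D_\cup$ and $D_\cap\ge 1$ must be brought in; without them the derivative does not have a fixed sign. As a sanity check (and a fallback if the partial-fraction bookkeeping is error-prone), I would differentiate directly: the numerator of $V'(k)$ reduces to $D_\cap\bigl(-(D_\cup-1)k^2+2(D_\cup-D_\cap)(k-1)\bigr)$, which I can show is negative using $D_\cup-D_\cap\le D_\cup-1$ together with $k^2-2k+2=(k-1)^2+1>0$. Both routes agree, but the partial-fraction form makes the monotonicity transparent and avoids a messy quotient-rule computation.
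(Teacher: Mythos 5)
Your proof is correct and, at its core, takes the same route as the paper: the paper's entire proof of this lemma is the single remark that it is ``easy to verify \dots by calculating the derivative of Equation~\ref{eq:kmvintervar} with respect to the variable $k$,'' which is exactly your fallback computation, and your derivative numerator $D_\cap\bigl(-(D_\cup-1)k^2+2(D_\cup-D_\cap)(k-1)\bigr)$ together with the bound $D_\cup-D_\cap\le D_\cup-1$ and $(k-1)^2+1>0$ checks out. What you add beyond the paper is the substance it leaves implicit: the partial-fraction form $V(k)=-D_\cap+D_\cap\left(\frac{P}{k}+\frac{R}{k-2}\right)$ with $P=\tfrac{1}{2}(D_\cup-D_\cap)\ge 0$ and $R=\tfrac{1}{2}(D_\cap+D_\cup-2)\ge 0$ makes the monotonicity transparent without a quotient-rule computation, and you make explicit that the sign analysis genuinely needs the set-theoretic constraints $1\le D_\cap\le D_\cup$ (and the restriction $k>2$, with the degenerate case $D_\cap=D_\cup=1$ set aside) --- conditions the paper's one-line assertion never records.
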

It is easy to verify the above lemma by calculating the derivative of Equation~\ref{eq:kmvintervar} with respect to the variable $k$. Thus, in the following analysis of \kmv based sketch techniques. We use the $k$ value (i.e., the sketch size used for estimation) to evaluate the goodness of the estimation, the larger the better.

\subsubsection{\textbf{Optimal KMV Signature Scheme}}
\label{subsubsec:optimalKMV}
In this part, we give an optimal resource allocation strategy for \kmv sketch
method in similarity search.
\begin{theorem}
\label{theo:kmv sig}
Given a space budget $b$, each set is associated with a size-$k_i$ \kmv signature and $\sum_{i=1}^{m} k_i = b$.
For \kmv sketch based containment similarity search, the optimal signature scheme is to keep the $\lfloor \frac{b}{m} \rfloor$ minimal hash values for each set $X_i$.
\end{theorem}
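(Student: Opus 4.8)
The plan is to reduce the variance-minimization problem to a clean combinatorial optimization over the allocation $(k_1,\dots,k_m)$ and then settle it with a single convexity bound. First I would invoke Lemma~\ref{lem:eff sig size}: since the variance of the \kmv intersection estimator (Equation~\ref{eq:kmvintervar}) is a decreasing function of the effective signature size, minimizing the total estimation error over the query workload amounts to \emph{maximizing} the effective $k$-values. By the assumption that the query $Q$ is drawn from the records and by Equation~\ref{eq:kmv_1}, the effective size used when comparing records $X_i$ and $X_j$ is $\min(k_i,k_j)$. Hence, up to the fixed diagonal contribution $\sum_i \min(k_i,k_i)=\sum_i k_i=b$, the quantity to be maximized is the aggregate effective budget
$$\Phi(k_1,\dots,k_m) = \sum_{i=1}^m\sum_{j=1}^m \min(k_i,k_j),$$
subject to $\sum_{i=1}^m k_i = b$ and the $k_i$ being nonnegative integers.

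The key step is a layer-cake representation of $\Phi$. Using $\min(k_i,k_j)=\int_0^\infty \mathbf{1}[k_i>t]\,\mathbf{1}[k_j>t]\,dt$ and writing $N(t)=|\{i:k_i>t\}|$ for the number of records whose signature exceeds level $t$, I would rewrite $\Phi=\int_0^\infty N(t)^2\,dt$, while the budget constraint becomes $\int_0^\infty N(t)\,dt=\sum_i k_i=b$. Since $N(t)\in[0,m]$, the pointwise inequality $N(t)^2\le m\,N(t)$ yields $\Phi\le m\int_0^\infty N(t)\,dt=mb$, with equality if and only if $N(t)\in\{0,m\}$ for almost every $t$. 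Because $N$ is a non-increasing integer step function with $\int_0^\infty N\,dt=b$, this equality forces $N(t)=m$ on $[0,b/m)$ and $N(t)=0$ thereafter, i.e.\ every record carries the same signature size $k_i=b/m$. Rounding down to integers under the budget then gives $k_i=\lfloor b/m\rfloor$ for all $i$, which is exactly the claimed scheme.

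The part I expect to require the most care is the reduction in the first paragraph rather than the optimization itself. The per-pair variance in Equation~\ref{eq:kmvintervar} is a \emph{nonlinear} decreasing function of the effective size, so replacing ``minimize the summed variance'' by ``maximize $\sum_{i,j}\min(k_i,k_j)$'' is not automatic. I would justify it either by observing that the balanced allocation simultaneously dominates every competitor in the collection of values $\{\min(k_i,k_j)\}$, or by redoing the layer-cake computation against the monotone weight $w(t)=-\tfrac{d}{dt}\mathrm{Var}$: maximizing $\int_0^\infty w(t)\,N(t)^2\,dt$ leaves the maximizer unchanged, because the saturated profile $N(t)=m\,\mathbf{1}[t<b/m]$ concentrates all of $N$ at the small-$t$ levels where $w$ is largest. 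A secondary point is integrality when $m\nmid b$; here an exchange argument (shifting one unit from the largest $k_a$ to the smallest $k_b$ whenever $k_a>k_b+1$ cannot decrease $\Phi$) delivers the discrete optimum directly, confirming that the balanced, near-uniform allocation $\lfloor b/m\rfloor$ is optimal without appealing to the continuous relaxation.
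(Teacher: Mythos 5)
Your proof is correct and reaches the paper's conclusion by a genuinely different route in its core optimization step. The paper fixes a query $Q$ drawn from the records and maximizes $T=\sum_{i}\min(k_q,k_i)$ directly: after sorting the $k_i$ it writes $T = b-\sum_{i:\,k_i>k_q}(k_i-k_q)$, concludes that signatures exceeding $k_q$ must be trimmed to $k_q$, and then appeals informally to the randomness of $Q$ to force all $k_i$ equal. Note that for a single fixed query this argument only forces ``no record exceeds $k_q$''; the equalization really comes from averaging over which record plays the role of the query, a step the paper never formalizes. Your version makes exactly that averaging explicit by taking $\Phi=\sum_{i,j}\min(k_i,k_j)$ as the objective, and then the layer-cake identity $\Phi=\int_0^\infty N(t)^2\,dt$ combined with $\int_0^\infty N(t)\,dt=b$ and the pointwise bound $N(t)^2\le m\,N(t)$ settles the continuous problem in one line, with a clean equality characterization ($N(t)\in\{0,m\}$ a.e.); your exchange argument additionally handles integrality when $m\nmid b$, which the paper ignores. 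What the paper's route buys is elementary bookkeeping; what yours buys is rigor on precisely the two points where the paper hand-waves. One caution: both proofs share the same leap from ``minimize the summed variance'' to ``maximize the summed effective $k$'' via Lemma~\ref{lem:eff sig size}; the paper makes it silently, while you flag it, which is to your credit. However, your first proposed repair is not right as stated: the balanced allocation does \emph{not} dominate every competitor value-by-value, since an unbalanced allocation makes $\min(k_i,k_j)>b/m$ for pairs of above-average records. The clean fix is Jensen: writing $g(k)$ for the variance in Equation~\ref{eq:kmvintervar}, which is decreasing and (as one can check over the relevant range $k>2$) convex in $k$, one gets $\frac{1}{m^2}\sum_{i,j} g\bigl(\min(k_i,k_j)\bigr)\ \ge\ g\bigl(\Phi/m^2\bigr)\ \ge\ g\bigl(b/m\bigr)$, where the second inequality uses your bound $\Phi\le mb$ and monotonicity of $g$; since the balanced scheme attains $m^2 g(b/m)$, it minimizes the averaged variance outright, closing the gap for both your proof and the paper's.
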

\begin{proof}
  Given a query $Q$ and dataset $\mathcal{S}=\{X_1,...,X_m\}$, an optimal signature scheme for containment similarity search is to minimize the average variance between $Q$ and $X_i, i=1,...,m$. Considering the query $Q$ and set $X_i$ with size-$k_q$ \kmv sketch $\mathcal{L}_Q$ and size-$k_i$ sketch $\mathcal{L}_{X_i}$ respectively, the sketch size is $k = \min\{k_q, k_i\}$ according to Equation~\ref{eq:kmv_1}.  By Lemma~\ref{lem:eff sig size},
an optimal signature scheme is to maximize the total $k$ value(say $T$), then we have the following optimization goal,
  $$\max\  T = \sum_{i=1}^{m} \min\{k_q, k_i\}$$
  $$s.t. \ \ b = \sum_{i=1}^{m} k_i,\ \  k_i>0,i=1,2,...,m$$
  Rank the $k_i$ by increasing order, w.l.o.g., let $k_1, k_2,...,k_m$ be the sketch size sequence after reorder. Let $k_l$ be the first in the sequence such that $k_l = k_q$, then we have $T = k_1+...+k_l + (m-l)k_q = b-\sum_{i=l+1}^{m}(k_i-k_q)$.
  In order to maximize $T$, we set $k_i = k_q, i=l+1,...,m$. Then by $b = \sum_{i=1}^{m} k_i$, we have $k_1 + ...+k_l + k_q(m-l)=b$. Note that $k_i\leq k_q, i=1,...,l$, we must have $k_i = k_q, i=1,...,l$.
  Since $Q$ is randomly selected from dataset \ds, we can get that all the $k_i, i=1,...m$ are equal and $k_i = \lfloor \frac{b}{m} \rfloor$.
\end{proof}


\subsubsection{\textbf{Correctness of GKMV Sketch}}
\label{subsubsec:correct_gkmv}

In this section, we show that the \gkmv sketch is a valid \kmv sketch.

%
\begin{theorem}
\label{theo: gkmv valid}
Given two records $X$ and $Y$, let $\mathcal{L}_X$ and $\mathcal{L}_Y$ be the \gkmv sketch of $X$ and $Y$, respectively. Let $k=|\mathcal{L}_X \cup \mathcal{L}_Y|$, then the size-$k$ \kmv synopses of $X\cup Y$ is $\mathcal{L} = \mathcal{L}_X \cup \mathcal{L}_Y$.
\end{theorem}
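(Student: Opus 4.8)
The plan is to reduce the statement to two elementary set-theoretic facts. Recall that, for the global threshold $\tau$ fixed during sketch construction, the \gkmv sketch of a record is $\mathcal{L}_X = \{h(e) : e \in X,\ h(e) \le \tau\}$, and that a size-$k$ \kmv synopsis of a set $Z$ is by definition the set of the $k$ smallest values in $h(Z)$ (assuming, as throughout, no hash collisions). First I would establish that $\mathcal{L}_X \cup \mathcal{L}_Y$ is precisely the set of hash values of $X \cup Y$ that do not exceed $\tau$, i.e.\ $\mathcal{L}_X \cup \mathcal{L}_Y = \{v \in h(X\cup Y): v \le \tau\}$. The inclusion $\subseteq$ is immediate since $X, Y \subseteq X\cup Y$, so every value retained for $X$ or for $Y$ is a sub-$\tau$ hash value of $X\cup Y$. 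For $\supseteq$, take any $e \in X\cup Y$ with $h(e)\le \tau$; then $e \in X$ or $e\in Y$, and because the \emph{same} threshold $\tau$ is applied when building both sketches, $h(e)$ is retained in $\mathcal{L}_X$ or in $\mathcal{L}_Y$ accordingly.

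Next I would use this characterization to pin down the ordering of the hash values. Every value kept in $\mathcal{L}_X \cup \mathcal{L}_Y$ is at most $\tau$, while, by the characterization just proved, every hash value of $X\cup Y$ that is \emph{not} in $\mathcal{L}_X \cup \mathcal{L}_Y$ must exceed $\tau$. Hence each retained value is strictly smaller than each discarded value, so the $k = |\mathcal{L}_X \cup \mathcal{L}_Y|$ values forming $\mathcal{L}_X \cup \mathcal{L}_Y$ are exactly the $k$ smallest elements of $h(X\cup Y)$. By the definition of a \kmv synopsis, this means the size-$k$ \kmv synopsis of $X\cup Y$ equals $\mathcal{L}_X \cup \mathcal{L}_Y$, which is the claim.

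I expect the only delicate point --- and the conceptual heart of why \gkmv works --- to be the $\supseteq$ direction of the first step, namely that no sub-$\tau$ element of $X\cup Y$ is missing from both sketches. This is exactly where the global threshold is essential and where the ordinary \kmv union fails: in plain \kmv one keeps a fixed count $k_X$ of smallest values per record, so a small hash value present in $X\cup Y$ may have been truncated away from one record's sketch, forcing the conservative choice $k=\min(k_X,k_Y)$ in Equation~\ref{eq:kmv_1}. Under a shared threshold there is no such truncation below $\tau$, so the full union $\mathcal{L}_X\cup\mathcal{L}_Y$ is guaranteed to be a genuine (and larger) \kmv synopsis, and no separate argument about boundary ties is needed beyond the standing no-collision assumption.
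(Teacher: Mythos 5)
Your proof is correct and rests on exactly the same observation as the paper's: the global threshold guarantees that every hash value of $X \cup Y$ not exceeding $\tau$ appears in $\mathcal{L}_X \cup \mathcal{L}_Y$, so the union of the two \gkmv sketches contains precisely the $k$ smallest values of $h(X \cup Y)$. The only difference is presentational --- the paper argues by contradiction (an element $e'$ with $h(e') < v_k \le \tau$ missing from $\mathcal{L}_X \cup \mathcal{L}_Y$ would have to be in one of the sketches), whereas you state the same fact as a direct set characterization --- so the two proofs are essentially identical.
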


\begin{proof}
 We show that the above $\mathcal{L} = \mathcal{L}_X \cup \mathcal{L}_Y$ is a valid \kmv sketch of $X\cup Y$. Let $k=|\mathcal{L}_X \cup \mathcal{L}_Y|$ and $v_k$ is the $k$-th smallest hash value in $\mathcal{L}_X \cup \mathcal{L}_Y$. In order to prove that $\mathcal{L}_X \cup \mathcal{L}_Y$ is valid, we show that $v_k$ corresponds the element with the $k$-th minimal hash value in $X\cup Y$. If not, there should exist an element $e$ such that $h(e')<v_k, e'\in X\cup Y$ and $h(e')\notin \mathcal{L}_X \cup \mathcal{L}_Y$. Note that $v_k\leq \tau$, then $h(e')\leq \tau$, thus $h(e')$ is included in $\mathcal{L}_X \cup \mathcal{L}_Y$, which contradicts to the above statement.
\end{proof}

\subsubsection{\textbf{G-KMV: A Better KMV Sketch}}
\label{subsubsec:better_kmv}

In this part, we show that by imposing a global threshold to \kmv sketch, we can achieve better accuracy.
Let $\mathcal{L}_X^{KMV}$ and $\mathcal{L}_Y^{KMV}$ be the \kmv sketch of $X$ and $Y$ respectively.
Let $k_1 = |\mathcal{L}_X^{KMV}|$ and $k_2 = |\mathcal{L}_Y^{KMV}|$,
then the sketch size $k$ value can be set by Equation~\ref{eq:kmv_1}.
Similarly, let $\mathcal{L}_X^{GKMV}$ and $\mathcal{L}_Y^{GKMV}$ be the \gkmv sketch of $X$ and $Y$ respectively, and the sketch size $k$ value can be set by Equation~\ref{eq:kmv_global}.

\begin{theorem}
\label{theo: gkmv better}
With the fixed index space budget, for containment similarity search the \gkmv sketch method is better than \kmv method in terms of accuracy
when the power-law exponent of element frequency $\alpha_1 \leq 3.4$.
\end{theorem}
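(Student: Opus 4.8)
The plan is to use Lemma~\ref{lem:eff sig size} to reduce the whole comparison to the effective sketch size $k$ that each method feeds into the estimator, and then to turn the resulting size inequality into a condition on the element-frequency distribution. First I would fix the effective $k$ of each method under a common budget $b$. By Theorem~\ref{theo:kmv sig} the optimal \kmv scheme stores $\lfloor b/m\rfloor$ hashes per record, so for every query--record pair the value used in Equation~\ref{eq:kmv_1} is $k_{KMV}=\lfloor b/m\rfloor$, independent of the pair. For \gkmv I would recover the global threshold $\tau$ from the same budget: since the hash value of an element is independent of its frequency, an element $e$ with frequency $f_e$ is retained in all $f_e$ of its records exactly when $h(e)\le\tau$, so the expected total sketch size is $\tau\sum_e f_e=\tau\sum_{X}|X|$, and $\sum_X|\mathcal L_X|=b$ forces $\tau=(b/m)/\bar x$ with $\bar x=\frac1m\sum_X|X|$. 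The value used in Equation~\ref{eq:kmv_global} is then $k_{GKMV}=\mathbb E|\mathcal L_Q\cup\mathcal L_X|=\tau\,|Q\cup X|$.

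Next I would compare the two effective sizes, equivalently the two variances obtained by substituting them into Equation~\ref{eq:kmvintervar}. Because $k_{GKMV}=(b/m)\,|Q\cup X|/\bar x$, \gkmv attains the larger $k$, hence the smaller variance, exactly when $|Q\cup X|>\bar x$. To make this a statement about the whole workload I would average over the random query and target record (third assumption), using $\mathbb E|Q\cup X|=2\bar x-\mathbb E|Q\cap X|$ together with the independence of element membership, which gives $\mathbb E|Q\cap X|=\frac1{m^2}\sum_e f_e^2$ and $\bar x=\frac1m\sum_e f_e$. The condition $|Q\cup X|>\bar x$ then collapses to the frequency-moment inequality
\[
\sum_e f_e^{\,2}\;<\;m\sum_e f_e ,
\]
i.e. the ratio of the second to the first frequency moment must stay below $m$.

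Finally I would evaluate this under the power law $p_1(x)=c_1x^{-\alpha_1}$ on the frequency range $[1,m]$, with $c_1$ fixed by normalization, writing $\sum_e f_e\propto\int_1^m x^{1-\alpha_1}\,dx$ and $\sum_e f_e^2\propto\int_1^m x^{2-\alpha_1}\,dx$ and reducing to a single relation in $\alpha_1$ and $m$. I expect this last step to be the main obstacle, for two related reasons. First, the plain unweighted inequality above tends to hold for almost all $\alpha_1$ once $m$ is large, so a \emph{finite} crossover at $\alpha_1=3.4$ must instead come from the frequency-weighted average of the per-pair variance of Equation~\ref{eq:kmvintervar} across the workload: records with larger $|Q\cap X|$ carry more weight, and under steeply skewed frequencies this weighting is what keeps \kmv competitive. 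Second, after inserting the closed forms of the two integrals the boundary condition is transcendental in $\alpha_1$, so I would establish monotonicity of the weighted moment ratio in $\alpha_1$ (to guarantee a single threshold) and then locate the crossover numerically, which is where the specific constant $3.4$ emerges. As a sanity check, the leading-order form $D_\cap D_\cup/k$ of both variances recovers the clean comparison $|Q\cup X|$ versus $\bar x$; the finite-$k$ corrections and the workload weighting only shift the numerical constant.
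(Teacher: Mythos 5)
Your high-level strategy does mirror the paper's: reduce everything to effective sketch sizes via Lemma~\ref{lem:eff sig size}, take $k_{KMV}=\lfloor b/m\rfloor$ from Theorem~\ref{theo:kmv sig}, recover the global threshold $\tau=b/N$ from the budget, and compare the average $k$ of \gkmv against $b/m$. The genuine gap is in the decisive final step: you never derive the threshold $3.4$, and the condition you reach cannot produce it. Your per-pair sketch size $\tau\,|Q\cup X|$ leads, after averaging, to $\sum_e f_e^2 < m\sum_e f_e$; since an element appears in at most $m$ records, $f_e\le m$ and this inequality holds for essentially every dataset, independently of $\alpha_1$ \emph{and} of the budget $b$ --- so, as you yourself notice, no finite crossover can emerge from it. Your fallback (a frequency-weighted variance analysis whose crossover is located numerically from a transcendental equation) is speculation that you do not carry out, so the theorem as stated remains unproven in your argument.

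The paper avoids this dead end because it models the sketch intersection differently: it sets $\Pr[h(e_i)\in\mathcal{L}_{X_j}\cap\mathcal{L}_{X_l}]=\tau^2\frac{f_i^2}{N^2}x_jx_l$, i.e.\ a factor of $\tau$ for \emph{each} sketch, so the average $k$ of \gkmv becomes $\bar k_{GKMV}=\frac{2b}{m}-\frac{b^2}{m^2}f_{n^2}$ with $f_{n^2}=\sum_i f_i^2/N^2$, and the comparison $\bar k_{GKMV}\ge \bar k_{KMV}=b/m$ collapses to the \emph{budget-dependent} condition $f_{n^2}\le m/b$. Evaluating $f_{n^2}$ under the power law and solving the resulting quadratic inequality in $\alpha_1$ gives, in closed form, $\alpha_1\in(0,0.5]\cup\bigl[(1+\tfrac{m}{b})-\sqrt{(1+\tfrac{m}{b})\tfrac{m}{b}},\,(1+\tfrac{m}{b})+\sqrt{(1+\tfrac{m}{b})\tfrac{m}{b}}\bigr]$, and for the common setting $m/b\le 1$ the right endpoint is $2+\sqrt{2}\approx 3.4$: the constant is algebraic, not numerical. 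Note that the single discrepancy of $\tau$ versus $\tau^2$ in the intersection term is exactly what separates the two outcomes ($f_{n^2}\le m/b$ versus the budget-free $f_{n^2}\le m/N$, which always holds); your version is arguably the more faithful one for a single shared hash function, but it proves a different, threshold-free statement and cannot reproduce the bound $\alpha_1\le 3.4$ that the theorem asserts.
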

\begin{proof}
Let $x_j = |X_j|, j=1,2,...,m$ be the set size
and $k_j$ be the signature size of record $X_j$.
The frequency of element $e_i$ is set to be $f_i$.
The index space budget is $b$.

For \kmv sketch based method, by Theorem~\ref{theo:kmv sig}, the optimal signature scheme is $k= min(k_j, k_l) =  \lfloor \frac{b}{m} \rfloor$ given the index space budget $b$,
then the average $k$ value for all pairs of sets is
\begin{equation}
\label{eq:kmv_k_value}
\bar{k}_{KMV} = \frac{1}{m^2} \sum_{j=1}^{m}\sum_{l=1}^{m} min(k_j, k_l) = \lfloor \frac{b}{m} \rfloor
\end{equation}

For \gkmv sketch based method, let $\tau$ be the hash value threshold.
The probability that hash value $h(e_i)$ is included in signature $\mathcal{L}^{GKMV}_{X_j}$ is
$Pr[h(e_i)\in \mathcal{L}^{GKMV}_{X_j}] = \tau \frac{f_i}{N}x_j$
where $f_i$ is the frequency of element $e_i$, and $N=\sum_{i=1}^{n}f_i$ is the total number of elements.
The size of $\mathcal{L}^{GKMV}_{X_j}$ can be computed by
$l_j = \sum_{i=1}^{n} Pr[h(e_i)\in \mathcal{L}^{GKMV}_{X_j}]  = \tau x_j$
then the total index space is  $b = \sum_{j=1}^{m} l_j = \sum_{j=1}^{m} \tau x_j  = \tau N$.
and the hash value threshold $\tau = \frac{b}{N}$.
Next we compute average sketch size $k$ value of \gkmv method.
The intersection size of $\mathcal{L}_{X_j}$ and $\mathcal{L}_{X_l}$
\begin{equation}
\label{eq:sigInterSize}
|\mathcal{L}_{X_j} \cap \mathcal{L}_{X_l}| = \sum_{i=1}^{n} \tau \frac{f_i}{N}x_j *  \tau \frac{f_i}{N}x_l = \tau^2 x_j x_l f_{n^2}
\end{equation}
where $f_{n^2} = \frac{\sum_{i=1}^{n}f_i^2}{N^2}$.
The $k$ value of \gkmv method according to Equation~\ref{eq:kmv_global} is
\begin{equation}
\label{eq:sigUnionSize}
|\mathcal{L}_{X_j} \cup \mathcal{L}_{X_l}| = \tau x_j + \tau x_l - \tau^2 x_j x_l f_{n^2}
\end{equation}
Then the average $k$ value for all pairs of sets is
\begin{equation}
\bar{k}_{GKMV} = \frac{1}{m^2}\sum_{j=1}^{m}\sum_{l=1}^{m} |\mathcal{L}_{X_j} \cup \mathcal{L}_{X_l}| = \frac{2b}{m} - \frac{b^2}{m^2} f_{n^2}
\end{equation}
Let $\bar{k}_{GKMV}\geq \bar{k}_{KMV}$, we get $\alpha_1\in(0, 0.5] \cup [(1+\frac{m}{b}) - \sqrt{(1+\frac{m}{b})\frac{m}{b}}, (1+\frac{m}{b}) + \sqrt{(1+\frac{m}{b})\frac{m}{b}}]$.
Note that for the common setting $\frac{m}{b} \leq 1 $, we can get $\alpha_1 \leq 3.4$. The result makes sense since the power-law(Zipf's law) exponent of element frequency is usually less than 3.4 for real datasets.
\end{proof}

\subsubsection{\textbf{Partition of KMV Sketch Is Not Promising}}
\label{subsubsec:partition_notgood}

In this part, we show that it is difficult to improve the performance of \kmv by
dividing elements to multiple groups according to their frequency and apply \kmv estimation individually.
W.l.o.g., we consider dividing elements into two groups.

We divide the sorted element universe $\mathcal{E}$ into two disjoint parts $\mathcal{E}_{H_1}$ and $\mathcal{E}_{H_2}$. Let $X$ and $Y$ be two sets from dataset \ds with \kmv sketch $\mathcal{L}_X$ and $\mathcal{L}_Y$ respectively. Let $k_{X} = |\mathcal{L}_X|$ and $k_{Y} = |\mathcal{L}_Y|$.
The estimator of containment similarity is
$\hat{C} = \frac{\hat{D}_{\cap}}{q}$,
where $\hat{D}_{\cap}$ is the estimator of intersection size $D_{\cap}$ and $q$ is the query size($x$ or $y$).

Corresponding to $\mathcal{E}_{H_1}$ and $\mathcal{E}_{H_2}$, we divide $X$($Y$, resp.) to two parts $X_1$ and $X_2$($Y_1$ and $Y_2$, resp.). We know that $X_1\cap X_2=\Phi$ and $Y_1\cap Y_2 = \Phi$.
Also, let $D_{\cap}=|X\cap Y|$,$D_{\cup} = |X\cup Y|$, we have $D_{\cap}=|X_1\cap Y_1| + |X_2\cap Y_2|$ and $D_{\cup} = |X_1\cup Y_1| + |X_2\cup Y_2|$ since $\mathcal{E}_{H_1}$ and $\mathcal{E}_{H_2}$ are disjoint.
For simplicity, let $D_{\cap 1}=|X_1\cap Y_1|$, $D_{\cup 1} = |X_1\cup Y_1|$, $D_{\cap 2}=|X_2\cap Y_2|$ and $D_{\cup 2} = |X_2\cup Y_2|$. For $X_1$, $X_2$, $Y_1$ and $Y_2$, the \kmv sketches are $\mathcal{L}_{X_1}$, $\mathcal{L}_{X_2}$, $\mathcal{L}_{Y_1}$ and $\mathcal{L}_{Y_2}$ with size $k_{X_1}$, $k_{X_2}$, $k_{Y_1}$ and $k_{Y_2}$, respectively.
Based on this, we give another estimator as
$\hat{C}' = \frac{\hat{D}_{\cap 1} + \hat{D}_{\cap 2}}{q}$,
where $\hat{D}_{\cap 1}$($\hat{D}_{\cap 2}$, resp.) is the estimator of intersection size $D_{\cap 1}$($D_{\cap 2}$, resp.).
 Next, we compare the variance of $\hat{C}$  and  $\hat{C}'$.
\begin{theorem}
\label{theo: kmv any cut}
After dividing the element universe into two groups and applying \kmv sketch in each group, with the same index space budget, the variance of $\hat{C}'$ is larger than that of $\hat{C}$.
\end{theorem}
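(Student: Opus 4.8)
The plan is to reduce the claim to a direct comparison of the two closed-form variances and to exploit that the two groups are statistically independent. First I would observe that, since $\mathcal{E}_{H_1}$ and $\mathcal{E}_{H_2}$ are disjoint, the hash values feeding the two sub-sketches are independent, so the covariance of $\hat{D}_{\cap 1}$ and $\hat{D}_{\cap 2}$ vanishes and the common factor $\tfrac{1}{q^2}$ drops out of the comparison, leaving
\begin{equation}
Var[\hat{C}'] = \frac{1}{q^2}\bigl( Var[\hat{D}_{\cap 1}] + Var[\hat{D}_{\cap 2}] \bigr), \qquad Var[\hat{C}] = \frac{1}{q^2} Var[\hat{D}_{\cap}].
\end{equation}
I would then record the additive identities $D_{\cap}=D_{\cap 1}+D_{\cap 2}$ and $D_{\cup}=D_{\cup 1}+D_{\cup 2}$, which hold because the groups partition the universe, and fix the budget relation: under a common space budget the optimal equal-size allocation of Theorem~\ref{theo:kmv sig} (equivalently, a single global threshold $\tau$) makes the effective per-group sketch size proportional to the group union size, $k_i = \tau D_{\cup i}$, hence $k = k_1 + k_2$ for the pooled estimator. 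This proportional splitting is what enables a term-by-term comparison.

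Next I would substitute Equation~\ref{eq:kmvintervar} after rewriting its numerator in the factored form $D_{\cap}\bigl[(k-1)(D_{\cup}-k)+D_{\cap}\bigr]$, which separates each variance into a part linear in $D_{\cap}$ and a part quadratic in $D_{\cap}$:
\begin{equation}
Var[\hat{D}_{\cap}] = \frac{D_{\cap}(k-1)(D_{\cup}-k)}{k(k-2)} + \frac{D_{\cap}^2}{k(k-2)}.
\end{equation}
Using $k_i = \tau D_{\cup i}$ the linear part simplifies to $\frac{1-\tau}{\tau}\cdot\frac{D_{\cap i}(k_i-1)}{k_i-2}$, so the theorem splits into two sub-claims: that the linear parts of $Var[\hat{D}_{\cap 1}]+Var[\hat{D}_{\cap 2}]$ strictly dominate the linear part of $Var[\hat{D}_{\cap}]$, and likewise for the quadratic parts.

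For the linear parts I would write $\frac{k-1}{k-2}=1+\frac{1}{k-2}$; the constant pieces sum to $\frac{1-\tau}{\tau}(D_{\cap 1}+D_{\cap 2})$ on both sides and cancel, leaving $\frac{1-\tau}{\tau}\bigl(\frac{D_{\cap 1}}{k_1-2}+\frac{D_{\cap 2}}{k_2-2}-\frac{D_{\cap 1}+D_{\cap 2}}{k-2}\bigr)$, which is strictly positive because $k>k_i$ gives $\frac{1}{k_i-2}>\frac{1}{k-2}$. For the quadratic parts I would apply the Cauchy--Schwarz (Titu) inequality $\frac{D_{\cap 1}^2}{k_1(k_1-2)}+\frac{D_{\cap 2}^2}{k_2(k_2-2)}\ge \frac{D_{\cap}^2}{k_1(k_1-2)+k_2(k_2-2)}$ and then use $k_1(k_1-2)+k_2(k_2-2)=k(k-2)-2k_1k_2<k(k-2)$, so this lower bound already exceeds $\frac{D_{\cap}^2}{k(k-2)}$. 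Adding the two strictly positive gaps yields $Var[\hat{C}']>Var[\hat{C}]$. The main obstacle is not this final algebra but justifying the structural inputs: that under a shared budget the effective sketch sizes really split as $k_i=\tau D_{\cup i}$, and that the regime of interest satisfies $k_i>2$ and $\tau<1$ so that every denominator and the factor $\frac{1-\tau}{\tau}$ stay positive. Lemma~\ref{lem:eff sig size} supplies the qualitative sanity check that any reduction of the effective $k$ inflates the variance, which is precisely the penalty partitioning pays by forcing each group to estimate from a smaller effective sample.
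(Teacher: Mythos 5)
Your algebra is internally sound, and in two respects it is tighter than the paper's own argument: you work with the exact variance formula of Equation~\ref{eq:kmvintervar} (the paper silently passes to a large-$k$ approximation, writing the difference as $\frac{D_{\cap 1}^2}{k_1^2}+\frac{D_{\cap 2}^2}{k_2^2}-\frac{D_{\cap}^2}{k^2}+\Delta_1$), and your Cauchy--Schwarz step for the quadratic terms is a cleaner substitute for the paper's AM--GM computation with $k_1=\frac{1}{\alpha}k$, $k_2=\frac{1}{\beta}k$, $\frac{1}{\alpha}+\frac{1}{\beta}=1$. However, there is a genuine gap at exactly the point where the theorem is hard, namely your structural input $k_i=\tau D_{\cup i}$ with a common $\tau$ for both groups and the pooled sketch. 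First, it is misjustified: Theorem~\ref{theo:kmv sig}'s optimal allocation gives every \emph{record} the same signature size $\lfloor b/m\rfloor$; it does not make the \emph{effective} pairwise sketch sizes proportional to pair-specific union sizes. That proportionality is a property of a global-threshold (\gkmv-style) sketch, whereas this theorem concerns plain \kmv, where the effective sizes are $k_i=\min\{k_{X_i},k_{Y_i}\}$ and $k=\min\{k_X,k_Y\}$, so one only gets the inequality $k_1+k_2\le k$ (as the paper derives) and no proportionality.

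Second, and more importantly, the proportionality assumption silently disposes of the crux. Under the min-rule the dominant cross term
\begin{equation}
\Delta_1=\frac{D_{\cap 1}D_{\cup 1}}{k_1}+\frac{D_{\cap 2}D_{\cup 2}}{k_2}-\frac{D_{\cap}D_{\cup}}{k}
=\frac{(k_1D_{\cup 2}-k_2D_{\cup 1})(k_1D_{\cap 2}-k_2D_{\cap 1})}{k\,k_1k_2}
\end{equation}
can be \emph{negative} (e.g., when the group with the relatively larger union has the relatively smaller intersection), and since it is of order $\frac{D_\cap D_\cup}{k}$ it would overwhelm your two positive gaps, which are of order $\frac{D_\cap D_\cup}{k^2}$ and $\frac{D_\cap^2}{k^2}$ respectively. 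The paper's proof confronts this term head-on: it factorizes $\Delta_1$ as above and then argues nonnegativity from the frequency-based grouping (the high-frequency group has both the larger union and the larger intersection). Your assumption $k_i=\tau D_{\cup i}$ forces $k_1D_{\cup 2}-k_2D_{\cup 1}\equiv 0$, i.e., it makes the problematic term vanish identically rather than bounding it. So what you have proved is the claim for threshold-proportional (\gkmv-style) allocations --- a legitimate and clean conditional result, arguably no more conditional than the paper's own heuristic side condition --- but it is not a proof of the theorem in the \kmv setting in which it is stated, and the hardest case has been assumed away rather than handled.
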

\begin{proof}
  Recall the \kmv sketch, we have $E(\hat{C}') = E(\hat{D}_{\cap 1}) + E(\hat{D}_{\cap 2}) = D_{\cap 1} + D_{\cap 2} = D_{\cap}$. Because of the two disjoint element groups,  $\hat{D}_{\cap 1}$ and $\hat{D}_{\cap 2}$ are independent. Thus the variance
  $Var[\hat{C}'] = \frac{Var[\hat{D}_{\cap 1}] + Var[\hat{D}_{\cap 2}]}{q^2}$.
  Next, we will show
  $$Var[\hat{D}_{\cap 1}] + Var[\hat{D}_{\cap 2}] \geq Var[\hat{C}].$$
  Consider the \kmv sketch for set $X$ and $Y$, the sketch size according to Equation~\ref{eq:kmv_1} is $k=\min\{k_X, k_Y\}$. Similarly, for $X_1$ and $Y_1$, we have the sketch size $k_1 = \min\{k_{X_1}, k_{Y_1}\}$; for $X_2$ and $Y_2$, we have the sketch size $k_2 = \min\{k_{X_2}, k_{Y_2}\}$. Since the index is fixed, we have $k_X = k_{X_1} + k_{X_2}$ and $k_Y = k_{Y_1} + k_{Y_2}$. Then, $k_1 + k_2 = \min\{k_{X_1}, k_{Y_1}\} + \min\{k_{X_2}, k_{Y_2}\} \leq \min\{k_X, k_Y\} = k$.

  Let $\Delta =  Var[\hat{D}_{\cap 1}] + Var[\hat{D}_{\cap 2}] - Var[\hat{C}]$, after some calculation, we have
  $\Delta = \frac{D_{\cap 1}^2}{k_1^2} + \frac{D_{\cap 2}^2}{k_2^2} - \frac{D_{\cap}^2}{k^2} + \frac{D_{\cap 1} D_{\cup 1}}{k_1} + \frac{D_{\cap 2} D_{\cup 2}}{k_2} - \frac{D_{\cap} D_{\cup}}{k}$.
Next we show that $\frac{D_{\cap 1}^2}{k_1^2} + \frac{D_{\cap 2}^2}{k_2^2} - \frac{D_{\cap}^2}{k^2} \geq 0$.
Let $k_1 = \frac{1}{\alpha} k$ and $k_2 = \frac{1}{\beta}$ where $\frac{1}{\alpha} + \frac{1}{\beta} = 1$ and $\alpha,\ \beta >1$.
Then we have $\frac{D_{\cap 1}^2}{k_1^2} + \frac{D_{\cap 2}^2}{k_2^2} - \frac{D_{\cap}^2}{k^2} 
= \frac{\alpha^2 D_{\cap 1}^2 + \beta^2 D_{\cap 2}^2 - D_{\cap}^2}{k^2} = \frac{(\alpha^2 - 1) D_{\cap 1}^2 + (\beta^2 - 1) D_{\cap 2}^2 - 2 D_{\cap 1} D_{\cap 2}}{k^2}$.
As for the upper part in the above equation, by  inequality of arithmetic and geometric means, we get 
$(\alpha^2 - 1) D_{\cap 1}^2 + (\beta^2 - 1) D_{\cap 2}^2 - 2 D_{\cap 1} D_{\cap 2} \geq 2 (\sqrt{(\alpha - 1)(\alpha + 1)(\beta - 1)(\beta + 1)} - 1) D_{\cap 1} D_{\cap 2}$.
Since $(\alpha-1)(\beta - 1) = 1 $, we get $\sqrt{(\alpha - 1)(\alpha + 1)(\beta - 1)(\beta + 1)} - 1 = \sqrt{(\alpha + 1)(\beta + 1)} - 1 \geq 0$,
thus $\frac{D_{\cap 1}^2}{k_1^2} + \frac{D_{\cap 2}^2}{k_2^2} - \frac{D_{\cap}^2}{k^2} \geq 0$.

Let $ \Delta_1 = \frac{D_{\cap 1} D_{\cup 1}}{k_1} + \frac{D_{\cap 2} D_{\cup 2}}{k_2} - \frac{D_{\cap} D_{\cup}}{k}$, after some computation, we have $\Delta_1 = \frac{(k_1 D_{\cup 2} - k_2 D_{\cup 1})(k_1 D_{\cap 2} - k_2 D_{\cap 1})}{k k_1 k_2}$. As for the numerator(upper) of $\Delta _1$, consider the two parts after dividing the element universe, if the union size in one part, say $D_{\cup 2 }$, is larger, meanwhile the corresponding intersection size $D_{\cap 2}$ is larger, we have $\Delta_1 \geq 0$. This case can be realized since one of the two groups divided from element universe is made of high-frequency elements, which will result in large intersection size and large union size under the proper choice of $k, k_1, k_2$.
\end{proof}

\subsubsection{\textbf{Optimal Buffer Size $r$} }
\label{subsubsec:choose_r}

In this part, we show how to find optimal buffer size $r$ by analysing the variance for \gbkmv method.
Given the space budget $b$, we first show that the variance for \gbkmv sketch is a function of $f(r, \alpha_1, \alpha_2, b)$ and then we give a method to appropriately choose $r$.
Below are some notations first.

Given two sets $X$ and $Y$ with \gkmv sketch $\mathcal{L}_X$ and $\mathcal{L}_Y$ respectively, the containment similarity of $Q$ in $X$ is computed by Equation~\ref{eq: gkmv estimator} as
$\hat{C}^{GKMV} = \frac{\hat{D}_{\cap}^{GKMV}}{q}$,
where $\hat{D}_\cap^{GKMV} = \frac{K_\cap}{k}\times\frac{k-1}{U_{(k)}}$ is the overlap size.

As for the \gbkmv method of set $X$ an $Y$ with sketch $\mathcal{H}_X \cup \mathcal{L}_X$ and $\mathcal{H}_Y \cup \mathcal{L}_Y$ respectively, the containment similarity of $Q$ in $X$ is computed by Equation~\ref{eq: gbkmv compute} as
 $\hat{C}^{GBKMV} = \frac{|\mathcal{H}_Q \cap \mathcal{H}_X| + \hat{D}_{\cap}^{GKMV}}{q}$,
where $|\mathcal{H}_Q \cap \mathcal{H}_X|$ is the number of common elements in $\mathcal{E}_H$ part. It is easy to verify that
$\hat{C}_{GBKMV}$ is an unbiased estimator. Also, the variance of \gbkmv method estimator is
$Var[\hat{C}_{GBKMV}] = \frac{Var[\hat{D}_{\cap}^{GKMV}]}{q^2}$,
where $Var[\hat{D}_{\cap}^{GKMV}]$ corresponds to the variance of the \gkmv sketch in the \gbkmv sketch.

Next, with the same space budget $b$, we compute the average variance of \gbkmv method.

Consider the \gbkmv index construction which is introduced in Section~\ref{subsec:alg} by Algorithm~\ref{alg:gbkmv index}.
Let $N$ be the total number of elements and $b$ the space budget in terms of elements for index construction.
Assume that we keep $r$ high-frequency elements by bitmap in the buffer, which have $N_1 = \sum_{j=1}^{m}|\mathcal{H}_{X_j}| = \sum_{i=1}^{r} f_i$ elements and occupy $T_1 = m*r/32$ index space. Then the total number of elements left for \gkmv sketch is $N_2 = N - N_1$ and the index space for \gkmv sketch is $T_2 = b - T_1$.

Given two sets $X_j$ and $X_l$, the variance of overlap size estimator in Equation~\ref{eq:kmvintervar} is as follows
\begin{equation}
\label{eq:kmvVariacne}
Var[\hat{D}_\cap] = \frac{D_\cap(kD_\cup - k^2 - D_\cup + k + D_\cap)}{k(k-2)}
\end{equation}
where $D_{\cup}=|X_j\cup X_l|$, $D_{\cap}=|X_j\cap X_l|$ and $k$ is the sketch size.
Since the variance is concerned with the union size $D_{\cup}$, the intersection size $D_{\cap}$ and the signature size $k$, we first calculate these three formulas, then compute the variance.

Consider the two sets $X_j$, $X_l$ from dataset $\mathcal{S}$ with \gbkmv sketch $\mathcal{H}_{X_j}\cup \mathcal{L}_{X_j}$ and $\mathcal{H}_{X_j}\cup \mathcal{L}_{X_j}$ respectively.  The element $e_i$ is associated with frequency $f_i$, and the probability of element $e_i$ appearing in record $X_j$ is
$Pr[h(e_i)\in \mathcal{L}_{X_j}] =  \frac{f_i}{N}x_j$.
Given a hash value threshold $\tau$, the \gkmv signature size of set $X_j$ is computed as $k_j = \tau(x_j - |\mathcal{H}_{X_j}|)$.
The total index space in \gkmv sketch is $\sum_{j=1}^{m} k_j = T_2 = b - T_1 = b - \frac{r}{32}*m$, then we get $\tau = \frac{b-r/32*m}{N-N_1}$.

Similar to Equation~\ref{eq:sigInterSize},~\ref{eq:sigUnionSize}, the sketch size $k$ value for \gbkmv sketch is
$k = \tau(x_j + x_l)-\tau^2 x_1 x_2(f_{n^2} - f_{r^2})$
where $f_{n^2} = \frac{\sum_{i+1}^{n}f_i^2}{N^2}$, $f_{r^2} = \frac{\sum_{i+1}^{r}f_i^2}{N^2}$.
The intersection size and union size of $X_j$ and $X_l$ are
$D_{\cap}  =  x_j x_l (f_{n^2}- f_{r^2})$
and
$D_{\cup}  = (x_j + x_l)(1-f_r) - x_j x_l(f_{n^2} - f_{r^2})$
where $f_r = \frac{\sum_{i=1}^{r}f_i}{N}$, then the variance of \gbkmv method by Equation~\ref{eq:kmvVariacne} is
$$Var[\hat{C}_{GBKMV}] = \frac{(x_j + x_l)x_l}{k x_j} F_1 + \frac{x_l^2}{k} F_2 + \frac{x_l}{x_j} F_3$$
where $F_1 = f_{n^2}-f_{r^2}$, $F_2 = - (f_{n^2} - f_{r^2})^2$ and $F_3 = -(f_{n^2}-f_{r^2})$,
and the average variance of \gbkmv method $Var_{GBKMV} = \frac{1}{m^2} \sum_{j=1}^{m}\sum_{l=1}^{m}  Var[\hat{C}_{GBKMV}]$ is
$$Var_{GBKMV} = L_1 F_1 +  L_2 F_2 +  L_3 F_3$$
where $L_1 = \frac{1}{m^2}\sum_{j=1}^{m}\sum_{l=1}^{m} \frac{(x_j + x_l)x_j x_l}{kx_j^2}$,
$L_2 =\frac{1}{m^2} \sum_{j=1}^{m}\sum_{l=1}^{m}\frac{(x_j x_l)^2}{k x_j^2}$
and $L_3 = \frac{1}{m^2}\sum_{j=1}^{m}\sum_{l=1}^{m} \frac{x_l}{ x_j}$.

Note that $F_1, F_2, F_3$ is concerned with the element frequency which can be computed by using the distribution $p_1(x) = c_1 x^{-\alpha_1}$; $L_1, L_2, L_3$ is related to the record size which can be computed by using $p_2(x) = c_2 x^{-\alpha_2}$ and $k$ is related to the index budget size $b$ and buffer size $r$, then $Var_{GBKMV}$ can be restated as
$Var_{GBKMV} = L_1 F_1 +  L_2 F_2 +  L_3 F_3 = \frac{1}{m^2} [A \frac{(d^{1-\alpha_1} - r^{1-\alpha_1}) (d^{1-2\alpha_1} - r^{1-2\alpha_1})}{b - \frac{m}{32} r}
- B \frac{(d^{1-\alpha_1} - r^{1-\alpha_1}) (d^{1-2\alpha_1} - r^{1-2\alpha_1})^2}{b - \frac{m}{32} r} ]
- C (d^{1-2\alpha_1} - r^{1-2\alpha_1})$
where $A = \frac{N(\alpha_1 - 1)^2}{(1-2\alpha_1) d^{1-\alpha_1} (d^{1-\alpha_1} - 1)^2} \frac{(\alpha_2-1)^2}{-\alpha_2 (2-\alpha_2)}
 \frac{(x_t^{2-\alpha_2} - x_1^{2-\alpha_2}) (x_t^{-\alpha_2} - x_1^{-\alpha_2})}{(x_t^{-\alpha_2+1} - x_1^{-\alpha_2+1})^2}$,
 $B = \frac{N(\alpha_1 - 1)^4}{(1-2\alpha_1)^2 d^{1-\alpha_1} (d^{1-\alpha_1} - 1)^4} \frac{(\alpha_2-1)^2}{-\alpha_2 (3-\alpha_2)}
 \frac{(x_t^{3-\alpha_2} - x_1^{3-\alpha_2}) (x_t^{-\alpha_2} - x_1^{-\alpha_2})}{(x_t^{-\alpha_2+1} - x_1^{-\alpha_2+1})^2}$
 and $C = \frac{(\alpha_1-1)^2}{(1-2\alpha_1)(d^{1-\alpha_1} -1)^2} \frac{(\alpha_2-1)^2}{-\alpha_2 (2-\alpha_2)}
 \frac{(x_t^{2-\alpha_2} - x_1^{2-\alpha_2}) (x_t^{-\alpha_2} - x_1^{-\alpha_2})}{(x_t^{-\alpha_2+1} - x_1^{-\alpha_2+1})^2}$
Moreover, we have $Var_{GBKMV} = \frac{a_1 r^{5\alpha_1+1} + a_2 r^{5\alpha_1} +  a_3 r^{4\alpha_1+1} +  a_4 r^{3\alpha_1+2} + a_5  r^{3\alpha_1+1} +  a_6 r^{2\alpha_1+2}
+ a_7 r^{\alpha_1+2} + a_8 r^{3}}{(b-\frac{m}{32} r) r^{5\alpha_1}}$
where $a_1 = C \frac{m}{32} d^{1-2\alpha_1}$, $a_2 = A d^{2-2\alpha_1} - B d^{3-5\alpha_1} - b C d^{1-2\alpha_1}$,
$a_3 = -A d^{1-\alpha_1} + B d^{2-4\alpha_1}$, $a_4 = -C \frac{m}{32}$, $a_5 = -A d^{1-\alpha_1} + 2B d^{2-2\alpha_1} + bC$, $a_6 = A - 2B d^{1-2\alpha_1}$,
$a_7 = -B d^{1-\alpha_1}$ and $a_8 = B$.

We can see that the variance $Var_{GBKMV}$ can be regarded as a function of $f(r, \alpha_1, \alpha_2, b)$, i.e.,
\begin{equation}
\label{eq:optimalGoal}
Var_{GBKMV} = f(r, \alpha_1, \alpha_2, b)
\end{equation}
Similarly, for the \gkmv sketch based method, the variance can be calculated as
$$Var[\hat{C}_{GKMV}]  = \frac{(x_j + x_l)x_j x_l}{kx_j^2} F'_1 + \frac{(x_j x_l)^2}{k x_j^2} F'_2 + \frac{ x_j x_l}{x_j^2} F'_3 $$
where $F'_1 = f_{n^2}$, $F'_2 = - f_{n^2}^2$, $F'_3 = -f_{n^2}$ and $k = \frac{b}{N}(x_j + x_l) - (\frac{b}{N})^2 x_j x_l f_{n^2}$.
Let $\Delta Var = Var[\hat{C}_{GBKMV}] - Var[\hat{C}_{GKMV}]$, then for all pairs of $X_j$, $X_l$, the average of $\Delta Var$ is $V_{\Delta} =  \frac{1}{m^2} \sum_{j=1}^{m}\sum_{l=1}^{m} \Delta Var $.
Moreover, we can rewrite $V_{\Delta}$ as
$V_{\Delta} = L_1 (F'_1 - F_1) +  L_2 (F'_2 - F_2) +  L_3 (F'_3 - F_3)$.

Eventually, in order to find the optimal $r$, i.e., the number of high-frequency elements in \gbkmv method, we give the optimization goal as
$\max_r\ \ V_{GBKMV} = f(r, \alpha_1, \alpha_2, b)$, $s.t. \ \ V_{\Delta} < 0$.

In order to compute the above optimization problem, we try to extract the roots of the first derivative function of Equation~\ref{eq:optimalGoal}( i.e., $f(r, \alpha_1, \alpha_2, b)$) with respect to $r$.
However, the derivative function is a polynomial function with degree of $r$ larger than four. According to Abel's impossibility theorem~\cite{weissteinabel}, there is no algebraic solution, thus we try to give the numerical solution.

Recall that we use bitmap to keep the $r$ high-frequency elements, given the space budget $b$, the element frequency and record size distribution with power-law exponent $\alpha_1$ and $\alpha_2$ respectively, the optimization goal $\max_r\  V_{GBKMV}$ can be considered as a function $\max_r\ f(r, b, \alpha_1, \alpha_2)$.
Given a dataset \ds and the space budget $b$, we can get the power-law exponent $\alpha_1, \alpha_2$. Then we assign $8, 16, 24,...$ to $r$ and calculate the $f(r, b, \alpha_1, \alpha_2)$. In this way, we can give a good guide to the choice of $r$.

\subsubsection{\textbf{GB-KMV Sketch provides Better Accuracy than LSH-E Method}}
\label{subsubsec:gkmv_better}
In Section~\ref{subsec:lshe_analysis}, we have shown that the variance of \lshe estimator(Equation~\ref{eq:lshe_var}) is larger than that of MinHash \lsh estimator(Equation~\ref{eq:lsh_var}).
Note that \gkmv sketch is a special case of \gbkmv sketch when the buffer size $r = 0$.
By choosing an optimal buffer size $r$ in~\ref{subsubsec:choose_r}, it can guarantee that
the performance of \gbkmv is not worse than \gkmv.
Below, we show that \gkmv outperforms MinHash \lsh in terms of estimate accuracy.

\begin{theorem}
\label{theo:gbkmvBetter}
 The variance of \gkmv method is smaller than that of minHash \lsh method given the same sketch size.
\end{theorem}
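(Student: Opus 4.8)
The plan is to place the two variances over a common set of variables and then show their difference is nonnegative. The MinHash \lsh estimator for containment is $\hat t = f(\hat s)$ with $f(s)=(\frac{x}{q}+1)\frac{s}{1+s}$ and variance given by Equation~\ref{eq:lsh_var}, while the \gkmv estimator is $\hat C = \hat D_{\cap}^{GKMV}/q$ with variance $\frac{1}{q^2}Var[\hat D_\cap]$, where $Var[\hat D_\cap]$ is taken from Equation~\ref{eq:kmvintervar}. First I would eliminate $x$ and $s$ in favour of the intersection and union sizes using the two identities $D_\cap = sD_\cup$ (definition of Jaccard) and $x+q=|X|+|Q|=D_\cup+D_\cap=D_\cup(1+s)$. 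Substituting these into Equation~\ref{eq:lsh_var} rewrites the MinHash variance purely in $D_\cup$, $s$, $k$ and $q$, i.e.\ exactly the variables already appearing in Equation~\ref{eq:kmvintervar}, so the two can be compared term by term.

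The second step is to make the source of the gap explicit. The MinHash route estimates $s$ with variance $\frac{s(1-s)}{k}$ and then pushes it through the nonlinear map $f$, whose slope carries the factor $\alpha=\frac{x}{q}+1=\frac{D_\cup(1+s)}{q}$; this is precisely the factor that inflates the MinHash variance when $x\gg q$. The \gkmv route instead estimates $|Q\cap X|$ directly from the bottom-$k$ hash values of $\mathcal{L}_Q\cup\mathcal{L}_X$, which form a sample drawn \emph{without replacement} from the true union, so the collision count $K_\cap$ carries a finite-population correction of order $\frac{D_\cup-k}{D_\cup-1}$ that is absent from the with-replacement MinHash estimate. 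Because the global threshold forces $k=|\mathcal{L}_Q\cup\mathcal{L}_X|$ to be a non-negligible fraction of $D_\cup$ (Theorem~\ref{theo: gkmv valid}), this correction, together with the absence of the $\alpha^2$ amplification, is the mechanism I would exploit to make the \gkmv variance the smaller of the two.

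With both expressions in the variables $D_\cup,s,k,q$, the remaining work is algebraic: form $\Delta=Var[\hat t]-Var[\hat C^{GKMV}]$, clear the common positive denominator $q^2k^2 s(1+s)^4$, and argue that the resulting polynomial in $k$, with $D_\cup$ and $s$ as parameters over the feasible range $2<k\le D_\cup$, is nonnegative. I would try to write $\Delta$ as a sum of manifestly nonnegative pieces, isolating the $\alpha^2$ amplification and the finite-population factor, in the same spirit that Theorem~\ref{theo: kmv any cut} is reduced to an arithmetic--geometric-mean inequality.

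The hard part will be exactly this sign argument, because the inequality is genuinely regime-sensitive. If $k$ were treated as a small parameter decoupled from $D_\cup$, the $D_\cup\gg k$ leading terms of the two variances differ only by the bounded factor $(1+s)^2/(1-s)$ and do not yield a definite sign; the inequality is therefore not slack and cannot be obtained by crude bounding. The proof must instead invoke the feasibility constraint $k\le D_\cup$ and, more sharply, the global-threshold coupling $k\asymp\tau D_\cup$, which activates the finite-population correction, and it must retain the lower-order terms of Equation~\ref{eq:kmvintervar} (the $-k^2$, $+k$ and $+D_\cap$ corrections) rather than discarding them, since it is these terms in combination with the missing $\alpha^2$ factor that tip the balance in favour of \gkmv.
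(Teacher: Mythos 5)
Your setup is right (compare Equation~\ref{eq:lsh_var} against Equation~\ref{eq:kmvintervar} scaled by $1/q^2$, using $D_\cap=sD_\cup$ and $x+q=D_\cup(1+s)$), but the per-pair sign argument you plan cannot succeed, because with a \emph{common} per-pair $k$ the inequality goes the wrong way. Writing $D_\cap=sD_\cup$ in both expressions, the leading terms for $D_\cup\gg k$ are $Var[\hat t]\approx \frac{sD_\cup^2(1-s)}{q^2k(1+s)^2}$ for MinHash and $Var[\hat C^{GKMV}]\approx \frac{sD_\cup^2}{q^2k}$ for \gkmv, so their ratio is $(1-s)/(1+s)^2<1$: at equal $k$, MinHash is strictly \emph{better} pair-by-pair (unsurprisingly, since the MinHash route also exploits the exactly known sizes $x$ and $q$). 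Your closing paragraph almost concedes this but calls the leading-order comparison ``of no definite sign''; it has a definite sign, and it is unfavorable to \gkmv. Neither of your two rescue mechanisms can reverse it: there is no ``absence of $\alpha^2$ amplification'' (the \gkmv variance carries exactly the same $D_\cup^2/q^2$ scaling, since $Var[\hat D_\cap]\approx D_\cap D_\cup/k$), and the finite-population correction $(k-1)(D_\cup-k)$ hidden in Equation~\ref{eq:kmvintervar} is only an $O(k/D_\cup)=O(\tau)$ relative effect, far too weak to beat the factor $(1+s)^2/(1-s)$, which moreover diverges as $s\to 1$.

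The statement the paper actually proves is not a pointwise one: ``same sketch size'' means the same \emph{total} budget, and the claim is an average-case comparison under the distributional assumptions of Section~\ref{subsubsec:the_assumption}. The paper sets $\tau=mk'/N$ so that both methods use total space $mk'$, and then uses the crucial fact that the \gkmv pair-level sketch size from Equation~\ref{eq:kmv_global} is $k=|\mathcal{L}_{X_j}\cup\mathcal{L}_{X_l}|\approx \tau(x_j+x_l)-\tau^2x_jx_lf_{n^2}$: the two records' sketches are pooled, and space is allocated proportionally to record size, so pairs involving large records --- precisely the pairs that dominate the variance --- receive an effective $k$ much larger than the fixed $k'$ available to MinHash. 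The proof then averages both variances over all pairs, evaluating $\frac{1}{m^2}\sum_{j}\sum_{l}(\cdot)$ in closed form under the power-law models $p_1(x)=c_1x^{-\alpha_1}$ and $p_2(x)=c_2x^{-\alpha_2}$, and compares the two resulting functions of $(\alpha_1,\alpha_2,k')$ (with a separate calculation for the uniform case $\alpha_1=\alpha_2=0$). That budget-level coupling of $k$ to the record sizes, followed by distributional averaging, is the missing idea in your proposal; without it, the pointwise inequality you set out to prove is simply false.
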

\begin{proof}
Suppose that the minHash \lsh method uses $k'$ hash functions to the dataset, then the total sketch size is $T = mk'$. Let $\tau$ be the global threshold of \gkmv method, we have $\tau = \frac{mk'}{N}$ where $N$ is the total number of elements in dataset.

We first consider the \gkmv method.
Similar to Equation~\ref{eq:sigInterSize},~\ref{eq:sigUnionSize},
the intersection size of $X_j$ and $X_l$ is $D_{\cap} = x_j x_l \sum_{i=1}^{n}\frac{f_i^2}{N^2}$, and the union size is $D_{\cup} = x_j + x_l - x_j x_l \sum_{i=1}^{n}\frac{f_i^2}{N^2}$.
Then by Equation~\ref{eq:kmvintervar} the variance of the \gkmv method to estimate the containment similarity of $X_j$ in $X_l$ can be rewritten as
\begin{equation}
\label{eq:gkmv var}
  V_{\gkmv} = \frac{(x_j + x_l)x_j x_l}{k x_j^2} F_1 + \frac{(x_j x_l)^2}{k x_j^2} F_2 + \frac{x_j x_l}{x_j^2} F_3
\end{equation}
where $F_1 = f_{n^2}$, $F_2 = -(f_{n^2})^2$, $F_3 = -f_{n^2}$ and $f_{n^2} = \sum_{i=1}^{n} \frac{f_i^2}{N^2}$.

Next we compute the $k$ value of the sketch. Note that $\tau$ is the global threshold of \gkmv method. The $k$ value corresponding the intersection size of $X_j$ and $X_l$ by Equation~\ref{eq:kmv_global} is $k = \tau(x_j + x_l) - \tau^2 x_j x_l f_{n^2}$.
Then the average variance  $V_1 = \frac{1}{m^2} \sum_{j=1}^{m}\sum_{l=1}^{m} V_{\gkmv}$ is
$$ V_1 = \frac{1}{m^2}(L_1 F_1 + L_2 F_2 + L_3 F_3) $$
where $L_1 = \sum_{j=1}^{m}\sum_{l=1}^{m} \frac{(x_j + x_l)x_jx_l}{x_j^2 k}$, $L_2 = \sum_{j=1}^{m}\sum_{l=1}^{m} \frac{(x_j x_l)^2}{k x_j^2}$ and $L_3 = \sum_{j=1}^{m}\sum_{l=1}^{m} \frac{x_j x_l}{x_j^2}$. 
After some computation,
$ V_1 = \frac{1}{k'}[ \frac{(\alpha_2 - 1)^3}{-\alpha_2(2-\alpha_2)} W_1 f_{n^2} + \frac{(\alpha_2-1)^3}{\alpha_2(2-\alpha_2)(3-\alpha_2)} W_2 (f_{n^2})^2 + k' \frac{(\alpha_2-1)^2}{\alpha_2(2-\alpha_2)} W_3 f_{n^2} ]$
where $W_1 = \frac{(x_t^{2-\alpha_2} - x_1^{2-\alpha_2})^2 (x_t^{-\alpha_2} - x_1^{-\alpha_2})}{(x_t^{-\alpha_2+1} - x_1^{-\alpha_2+1})^3}$,
$W_2 = \frac{(x_t^{2-\alpha_2} - x_1^{2-\alpha_2}) (x_t^{-\alpha_2} - x_1^{-\alpha_2}) (x_t^{3-\alpha_2} - x_1^{3-\alpha_2})}{(x_t^{-\alpha_2+1} - x_1^{-\alpha_2+1})^3}$,
$W_3 = \frac{(x_t^{2-\alpha_2} - x_1^{2-\alpha_2}) (x_t^{-\alpha_2} - x_1^{-\alpha_2})}{(x_t^{-\alpha_2+1} - x_1^{-\alpha_2+1})^2}$
and $f_{n^2} = \frac{(1-\alpha_1)^2}{1-2\alpha_1} \frac{d^{1-2\alpha_1} - 1}{(d^{1-\alpha_1} - 1)^2}$. Note that $x_t$($x_1$, resp.) is the largest(smallest, resp.) set size and $d$ is the distinct number of elements.


Next we take into account the minHash \lsh method.

Given two sets $X_j$ and $X_l$, by Equation~\ref{eq:lsh_var}, the variance of minHash \lsh method to estimate the containment similarity of $X_j$ in $X_l$ is
$V_{minH} = \frac{1}{k'}[a_1 f_{n^2} + a_2 (f_{n^2})^2 + a_3 (f_{n^2})^3 + a_4 (f_{n^2})^4]$
where $a_1 = x_l + \frac{x_l^2}{x_j}$, $a_2 = -4x_l^2$, $a_3 = 5\frac{x_j x_l^3}{x_j + x_l}$ and $a_4 = -2\frac{x_j^2 x_l^4}{(x_j + x_l)^2}$.
Then the average variance $V_2 = \frac{1}{m^2} \sum_{j=1}^{m} \sum_{l=1}^{m} V_{minH}$ is
$$V_2 = \frac{1}{k' m^2}[A_1 f_{n^2} +  A_2 (f_{n^2})^2 + A_3 (f_{n^2})^3 + A_4 (f_{n^2})^4]$$

where $A_1 = \frac{\alpha_2-1}{2-\alpha_2} \frac{x_t^{2-\alpha_2} - x_1^{2-\alpha_2}}{x_t^{-\alpha_2+1} - x_1^{-\alpha_2+1}}
 + \frac{(\alpha_2 -1)^2}{-\alpha_2(3-\alpha_2)} \frac{(x_t^{3-\alpha_2} - x_1^{3-\alpha_2})(x_t^{-\alpha_2} - x_1^{-\alpha_2})}{(x_t^{-\alpha_2+1} - x_1^{-\alpha_2+1})^2}$,
$A_2 = -4 \frac{\alpha_2 -1}{3-\alpha_2} \frac{x_t^{3-\alpha_2} - x_1^{3-\alpha_2}}{x_t^{-\alpha_2+1} - x_1^{-\alpha_2+1}}$,
$A_3 = 5 \frac{\alpha_2 -1}{4-\alpha_2} \frac{x_t^{4-\alpha_2} - x_1^{4-\alpha_2}}{x_t^{-\alpha_2+1} - x_1^{-\alpha_2+1}}$
and $A_4 = -2[(\frac{\alpha_2 -1}{3-\alpha_2} \frac{x_t^{3-\alpha_2} - x_1^{3-\alpha_2}}{x_t^{-\alpha_2+1} - x_1^{-\alpha_2+1}})^2
- 2\frac{(\alpha_2-1)^2}{(4-\alpha_2)(2-\alpha_2)} \frac{(x_t^{2-\alpha_2} - x_1^{2-\alpha_2})(x_t^{4-\alpha_2} - x_1^{4-\alpha_2})}{(x_t^{-\alpha_2+1} - x_1^{-\alpha_2+1})^2}
+ 3 \frac{\alpha_2 -1}{5-\alpha_2} \frac{x_t^{5-\alpha_2} - x_1^{5-\alpha_2}}{x_t^{-\alpha_2+1} - x_1^{-\alpha_2+1}}]$


Note that $f_{n^2}$ is computed by the distribution $p_1(x)=c_1 x^{-\alpha_1}$ and the sum over set size is computed by the set size distribution $p_2(x)=c_2 x^{-\alpha_2}$, and the variance $V_1$ and $V_2$ is dependent on $\alpha_1$ and $\alpha_2$. Compare the variance $V_1$ and $V_2$, we get that $V_1 < V_2$ for all $\alpha_1 >0$ and $\alpha_2 > 0$.

Next, we analyse the performance of the two methods with the dataset following uniform distribution(i.e., $\alpha_1 = 0$, $\alpha_2 = 0$).

For \gkmv method, the average variance is 
$$ V'_1 = \frac{1}{m^2}(L_1 F_1 + L_2 F_2 + L_3 F_3) $$
where $L_1 = \sum_{j=1}^{m}\sum_{l=1}^{m} \frac{(x_j + x_l)x_jx_l}{x_j^2 k}$, $L_2 = \sum_{j=1}^{m}\sum_{l=1}^{m} \frac{(x_j x_l)^2}{k x_j^2}$ and $L_3 = \sum_{j=1}^{m}\sum_{l=1}^{m} \frac{x_j x_l}{x_j^2}$. 
After some computation,
$ V'_1 = \frac{1}{k'}[ \frac{(\alpha_2 - 1)^3}{2-\alpha_2} W_1 f_{n^2} - \frac{(\alpha_2-1)^3}{(2-\alpha_2)(3-\alpha_2)} W_2 (f_{n^2})^2 - k' \frac{(\alpha_2-1)^2}{2-\alpha_2} W_3 f_{n^2} ]$
where $W_1 = \frac{(x_t^{2-\alpha_2} - x_1^{2-\alpha_2})^2 (\ln{x_t} - \ln{x_1})}{(x_t^{-\alpha_2+1} - x_1^{-\alpha_2+1})^3}$,
$W_2 = \frac{(x_t^{2-\alpha_2} - x_1^{2-\alpha_2}) (\ln{x_t} - \ln{x_1}) (x_t^{3-\alpha_2} - x_1^{3-\alpha_2})}{(x_t^{-\alpha_2+1} - x_1^{-\alpha_2+1})^3}$,
$W_3 = \frac{(x_t^{2-\alpha_2} - x_1^{2-\alpha_2}) (\ln{x_t} - \ln{x_1})}{(x_t^{-\alpha_2+1} - x_1^{-\alpha_2+1})^2}$
and $f_{n^2} = \frac{(1-\alpha_1)^2}{1-2\alpha_1} \frac{d^{1-2\alpha_1} - 1}{(d^{1-\alpha_1} - 1)^2}$. Note that $x_t$($x_1$, resp.) is the largest(smallest, resp.) set size and $d$ is the distinct number of elements.

For \lshe method, the average variance is
$$V'_2 = \frac{1}{k' m^2}[A_1 f_{n^2} +  A_2 (f_{n^2})^2 + A_3 (f_{n^2})^3 + A_4 (f_{n^2})^4]$$
where $A_1 = \frac{\alpha_2-1}{2-\alpha_2} \frac{x_t^{2-\alpha_2} - x_1^{2-\alpha_2}}{x_t^{-\alpha_2+1} - x_1^{-\alpha_2+1}}
 + \frac{(\alpha_2 -1)^2}{3-\alpha_2} \frac{(x_t^{3-\alpha_2} - x_1^{3-\alpha_2})(\ln{x_t} - \ln{x_1})}{(x_t^{-\alpha_2+1} - x_1^{-\alpha_2+1})^2}$,
$A_2 = -4 \frac{\alpha_2 -1}{3-\alpha_2} \frac{x_t^{3-\alpha_2} - x_1^{3-\alpha_2}}{x_t^{-\alpha_2+1} - x_1^{-\alpha_2+1}}$,
$A_3 = 5 \frac{\alpha_2 -1}{4-\alpha_2} \frac{x_t^{4-\alpha_2} - x_1^{4-\alpha_2}}{x_t^{-\alpha_2+1} - x_1^{-\alpha_2+1}}$
and $A_4 = -2[(\frac{\alpha_2 -1}{3-\alpha_2} \frac{x_t^{3-\alpha_2} - x_1^{3-\alpha_2}}{x_t^{-\alpha_2+1} - x_1^{-\alpha_2+1}})^2
- 2\frac{(\alpha_2-1)^2}{(4-\alpha_2)(2-\alpha_2)} \frac{(x_t^{2-\alpha_2} - x_1^{2-\alpha_2})(x_t^{4-\alpha_2} - x_1^{4-\alpha_2})}{(x_t^{-\alpha_2+1} - x_1^{-\alpha_2+1})^2}
+ 3 \frac{\alpha_2 -1}{5-\alpha_2} \frac{x_t^{5-\alpha_2} - x_1^{5-\alpha_2}}{x_t^{-\alpha_2+1} - x_1^{-\alpha_2+1}}]$.

Similarly, we can get that $V'_1 < V'_2$.

\end{proof} 

\begin{remark}
We have illustrated that the variance of \gbkmv is smaller than that of \lshe.
Then by Chebyshev's inequality, i.e., $\Pr(|X-\mu|\geq \epsilon\sigma) \leq \frac{1}{\epsilon^2}$ 
where $\mu$ is the expectation, $\delta$ is the standard deviation and $\epsilon >1$ is a constant, 
we consider the probability that values lie outside the interval $[\mu - \epsilon\delta, \mu + \epsilon\delta]$, 
that is, values deviating from the expectation.
By Theorem 5, we get that the standard deviation $\delta_1$ of \gbkmv is smaller than $\delta_2$ of \lshe, 
then with the same interval $[\mu - \epsilon\delta, \mu + \epsilon\delta]$, the constant $\epsilon_1$ for \gbkmv is larger than $\epsilon_2$ for \lshe, thus
the probability that values lie outside the interval for \gbkmv is smaller than that for \lshe,
which means that the result of \gbkmv is more concentrated around the expected value than that of \lshe.
\end{remark}
\section{Performance Studies}
\label{sct:experiment}
In this section, we empirically evaluate the performance of our proposed \gbkmv method and
compare \lsh Ensemble~\cite{zhu2016lsh} as baseline.
We also compare our approximate \gbkmv method with the exact containment similarity search method.
All experiments are conducted on PCs with Intel Xeon $2\times 2.3GHz$ CPU and $128GB$ RAM running Debian Linux,
and the source code of \gbkmv is made available ~\cite{code:gbkmv}.
\begin{table*}[hbt]
\centering
\small
\begin{tabular}{|l|l|l|l|l|r|r|r|l|l} \hline
\cellcolor{gray!25}\textbf{Dataset} & \cellcolor{gray!25}\textbf{Abbrev}  & \cellcolor{gray!25}\textbf{Type}
 & \cellcolor{gray!25}\textbf{Record}   & \cellcolor{gray!25}\textbf{\#Records}
   & \cellcolor{gray!25}\textbf{AvgLength} & \cellcolor{gray!25}\textbf{\#DistinctEle}   & \cellcolor{gray!25}\textbf{$\alpha_1$-eleFreq} &   \cellcolor{gray!25}\textbf{$\alpha_2$-recSize} \\ \hline

Netflix~\cite{bouros2016set} & NETFLIX & Rating & Movie  & 480,189 & 209.25 & 17,770  & 1.14 & 4.95 \\ \hline

Delicious~\cite{Dataset:delic} & DELIC   & Folksonomy & User & 833,081 & 98.42 & 4,512,099 & 1.14 & 3.05  \\ \hline

CaOpenData~\cite{zhu2016lsh} & COD   & Folksonomy & User  & 65,553 & 6284 & 111,011,807  & 1.09 & 1.81 \\ \hline

Enron~\cite{Dataset:enron} & ENRON   & Text & Email  & 517,431 & 133.57 &1,113,219  & 1.16 & 3.10 \\ \hline

Reuters~\cite{Dataset:ruts} & REUTERS   & Folksonomy & User  & 833,081 & 77.6 & 283,906  & 1.32 &  6.61 \\ \hline

Webspam~\cite{webb2006introducing} & WEBSPAM &Text &Text & 350,000 & 3728 & 16,609,143 & 1.33 & 9.34 \\ \hline

WDC Web Table~\cite{zhu2016lsh} & WDC   & Text & Text & 262,893,406 & 29.2 & 111,562,175 & 1.08  & 2.4  \\ \hline

\end{tabular}
\vspace{1mm}
\caption{\small Characteristics of datasets}
\label{tb:datasets}
\vspace{-4mm}
\end{table*}

\vspace{-2mm}
\subsection{Experimental Setup}
\label{experimental setup}

\begin{figure}[hbt]
\centering
\subfigure[NETFLIX]{\includegraphics[width=0.48\linewidth]{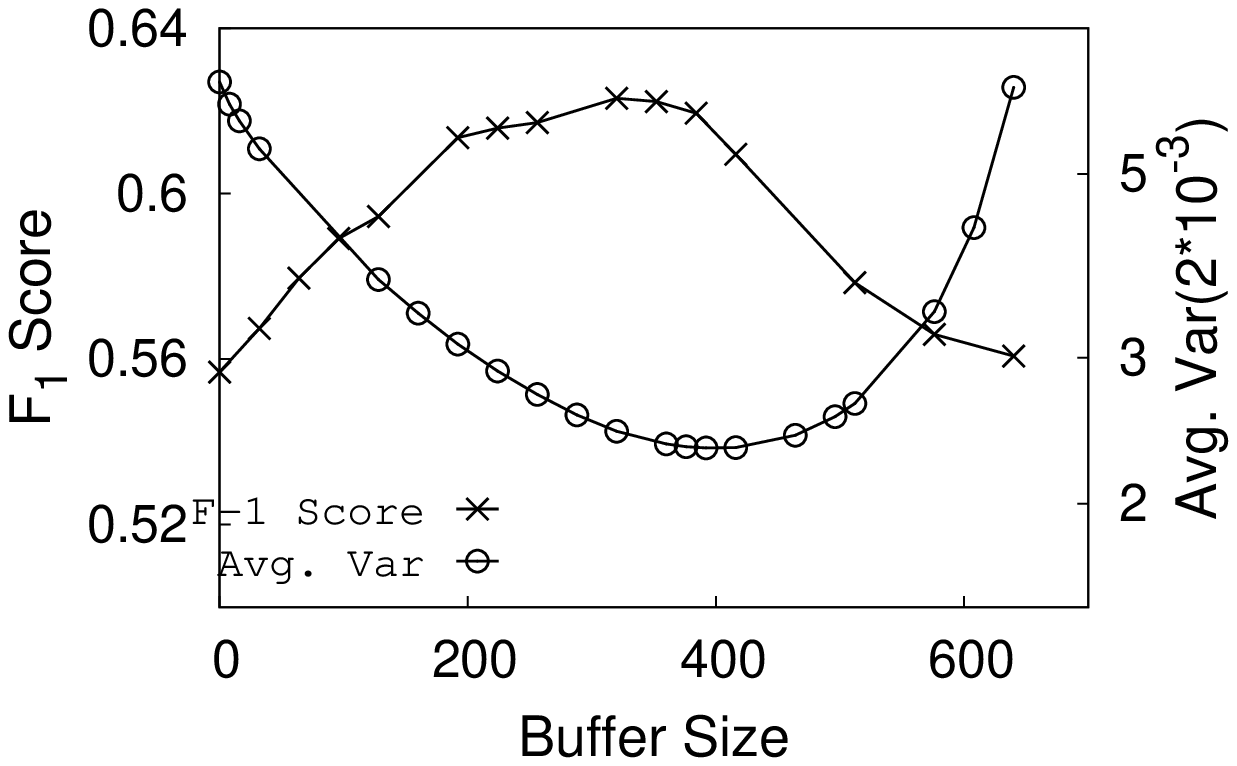}}
\subfigure[ENRON]{\includegraphics[width=0.48\linewidth]{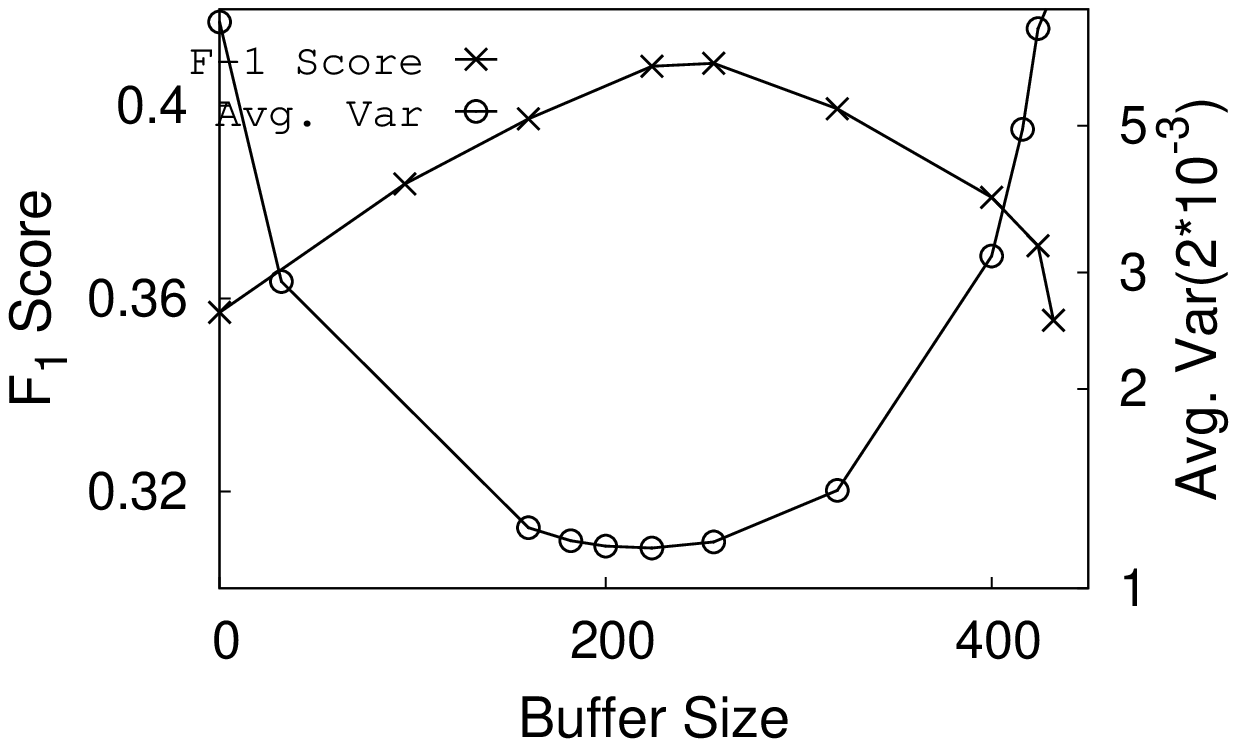}}
\vspace{-0.3cm}
\caption{\small Effect of Buffer Size}
\label{fig:tuning_pgkmv1}
\end{figure}

\begin{figure}[hbt]
\centering
\subfigure[NETFLIX]{\includegraphics[width=0.48\linewidth]{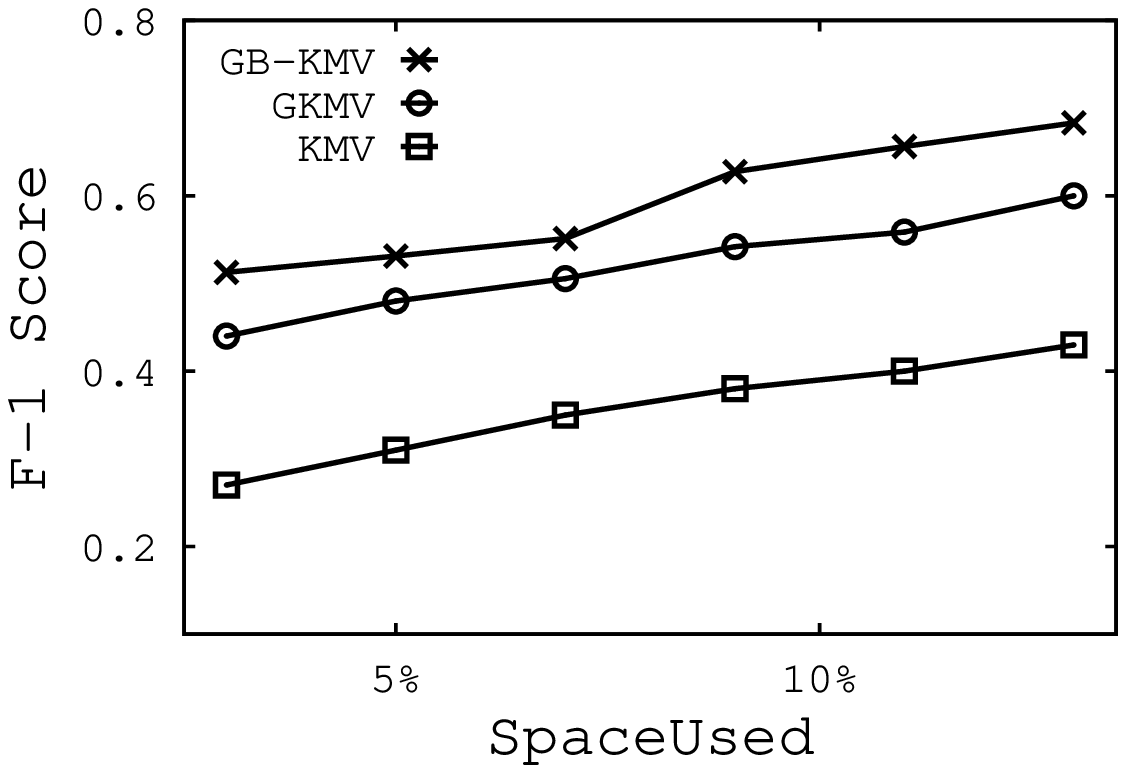}}
\subfigure[DELIC]{\includegraphics[width=0.48\linewidth]{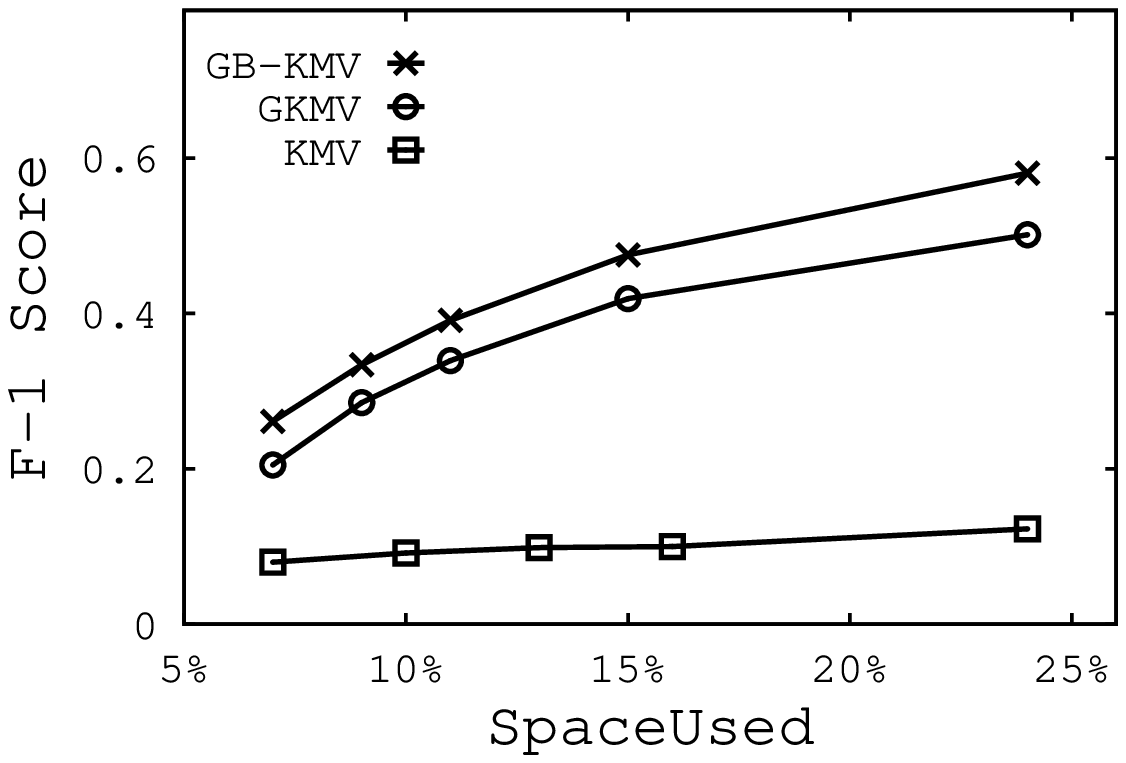}}
\subfigure[COD]{\includegraphics[width=0.48\linewidth]{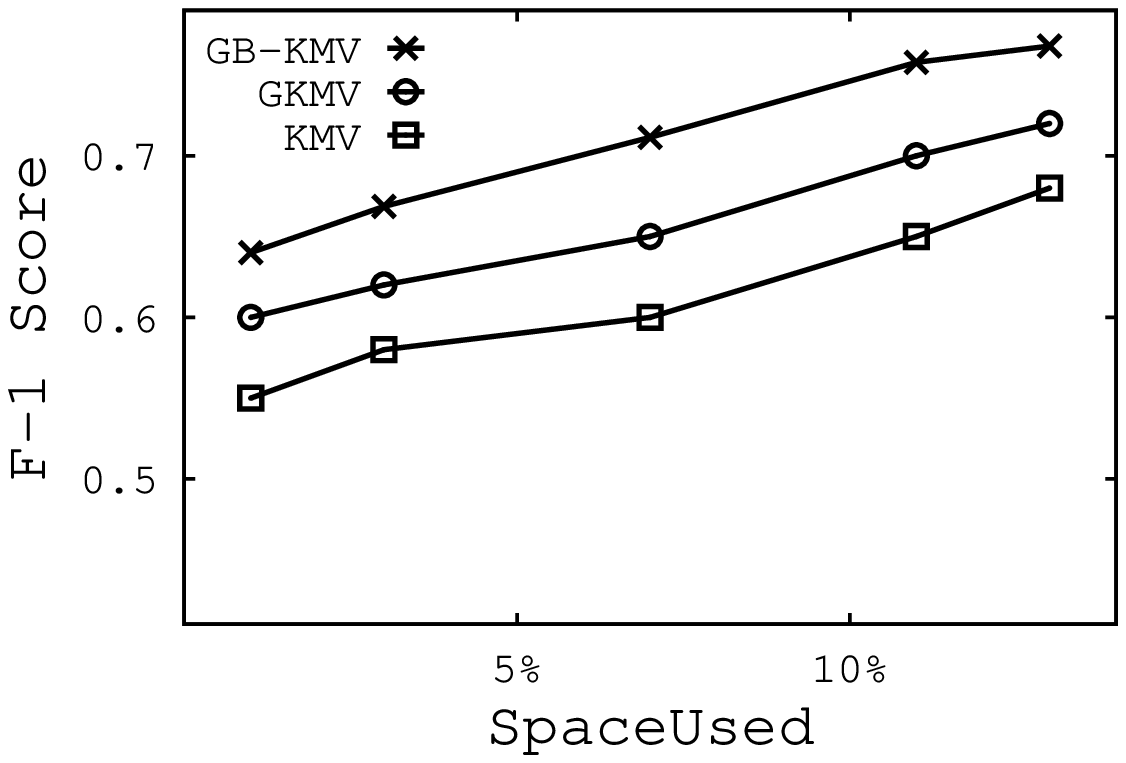}}
\subfigure[ENRON]{\includegraphics[width=0.48\linewidth]{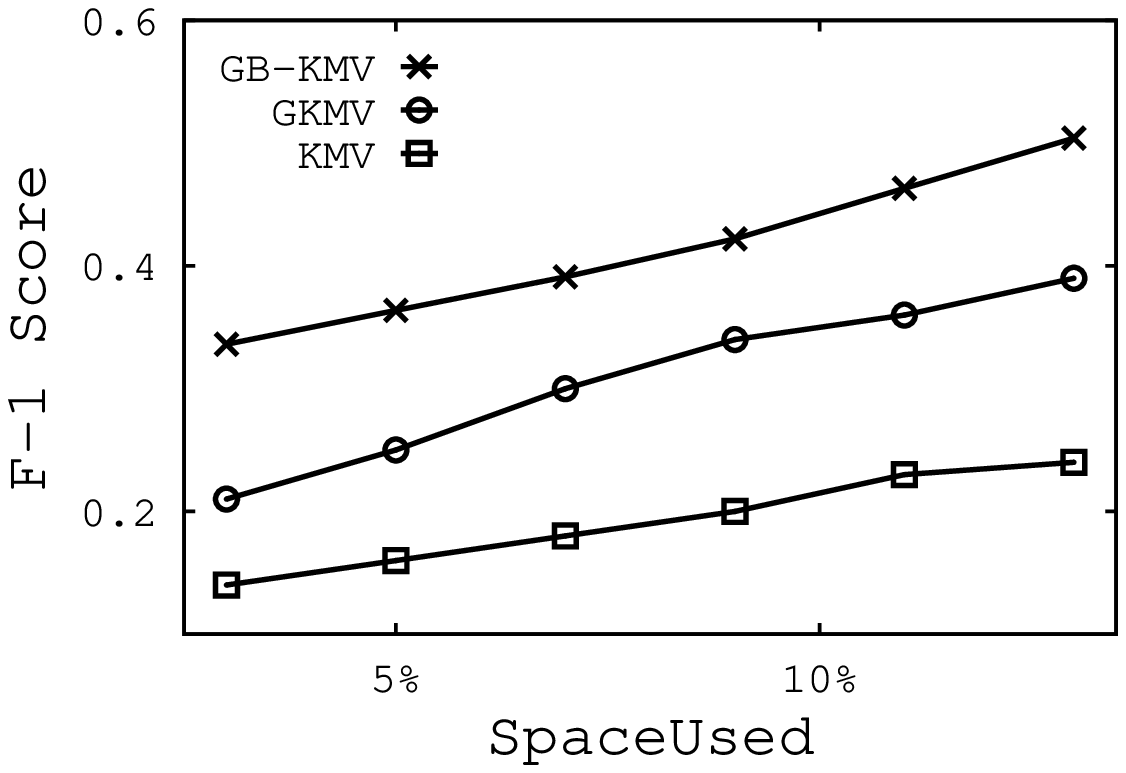}}
\subfigure[REUTERS]{\includegraphics[width=0.48\linewidth]{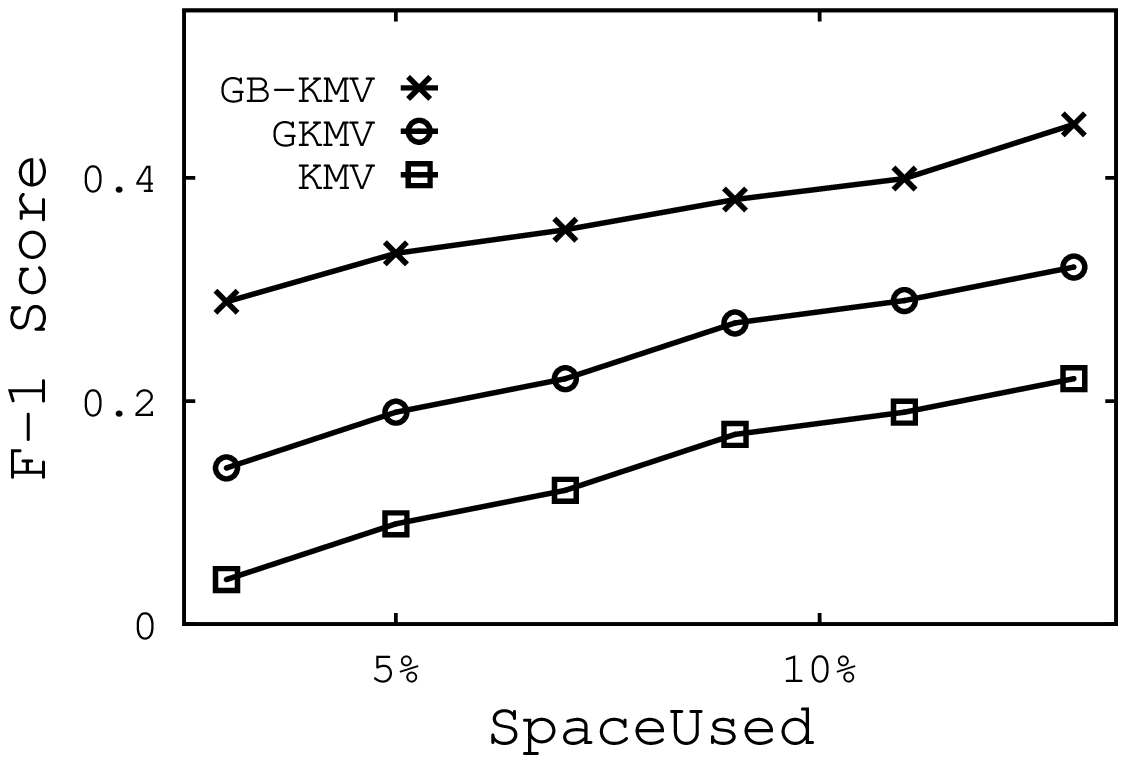}}
\subfigure[WEBSPAM]{\includegraphics[width=0.48\linewidth]{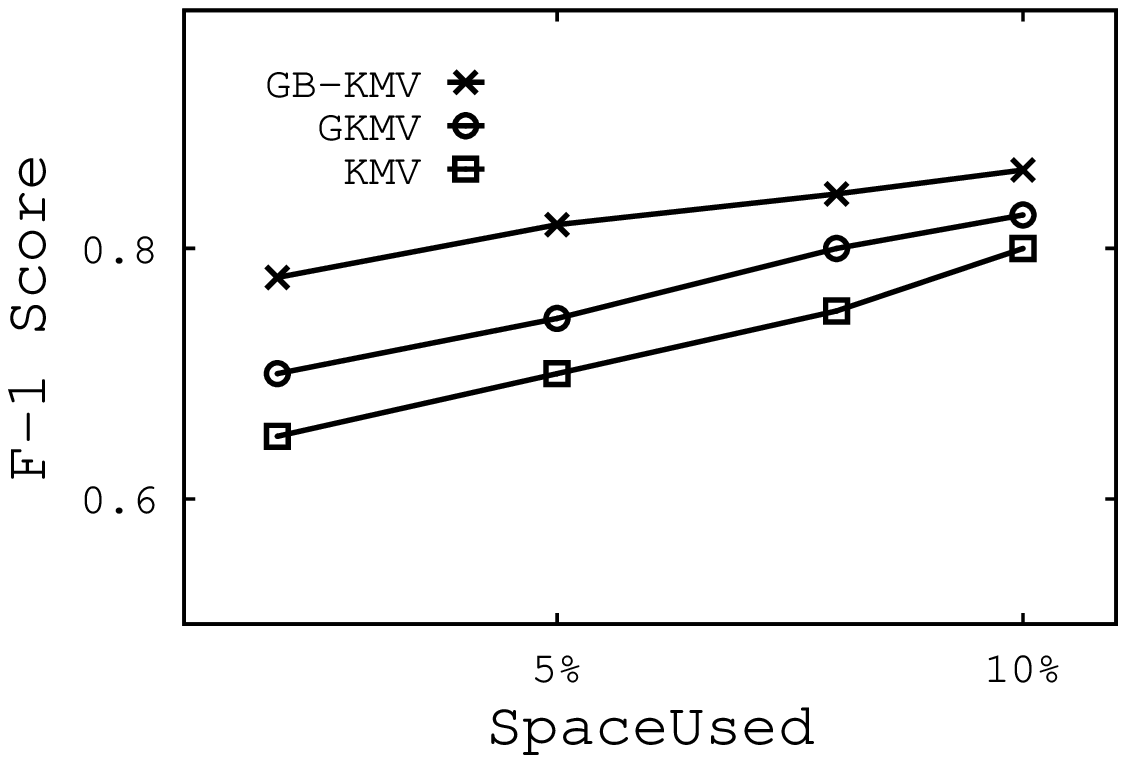}}
\subfigure[WDC]{\includegraphics[width=0.48\linewidth]{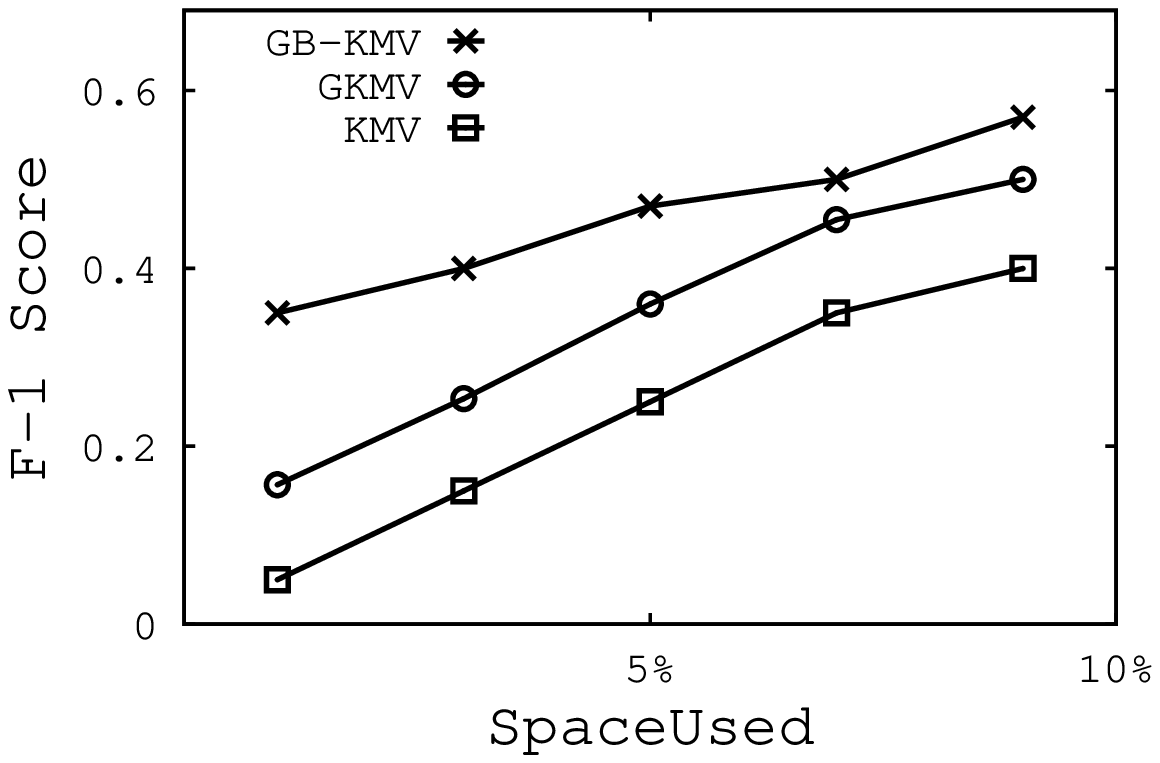}}

\vspace{-0.3cm}
\centering
\caption{\small \gbkmv, \gkmv, \kmv comparison }
\vspace{-2mm}
\label{fig:tuning_pgkmv2}
\end{figure}

\begin{figure*}[hbt]
\centering
\subfigure{\includegraphics[width=0.24\linewidth]{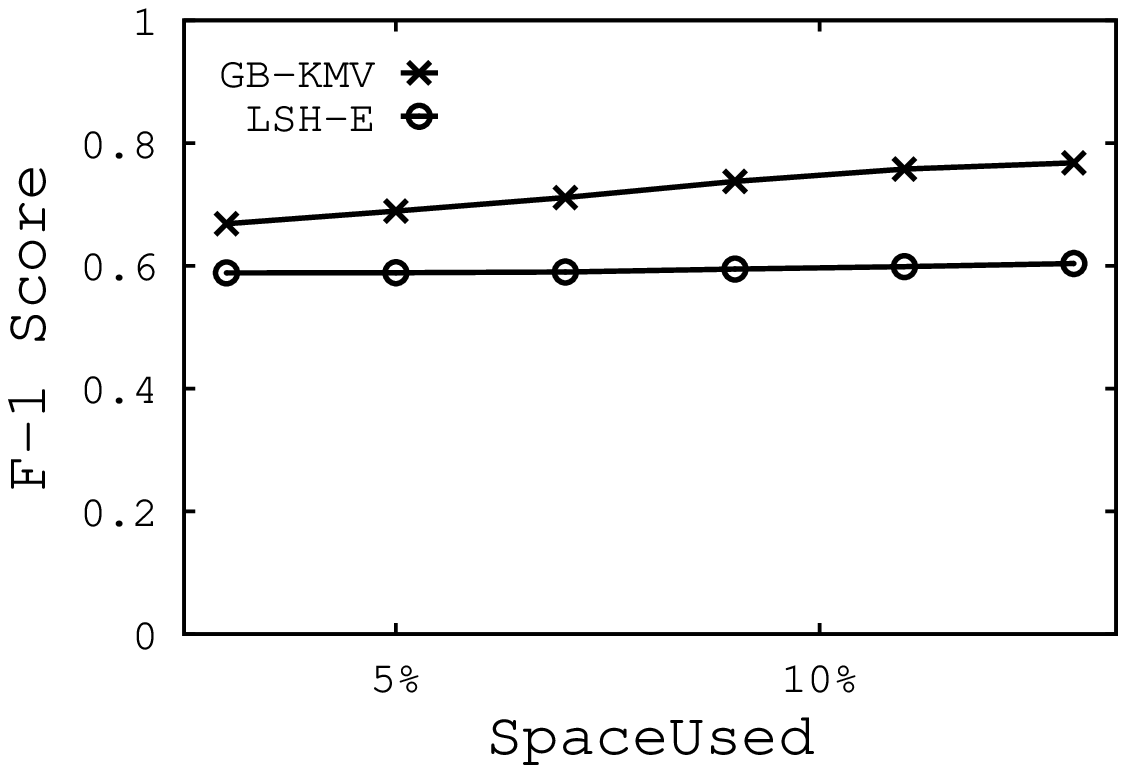}}
\subfigure{\includegraphics[width=0.24\linewidth]{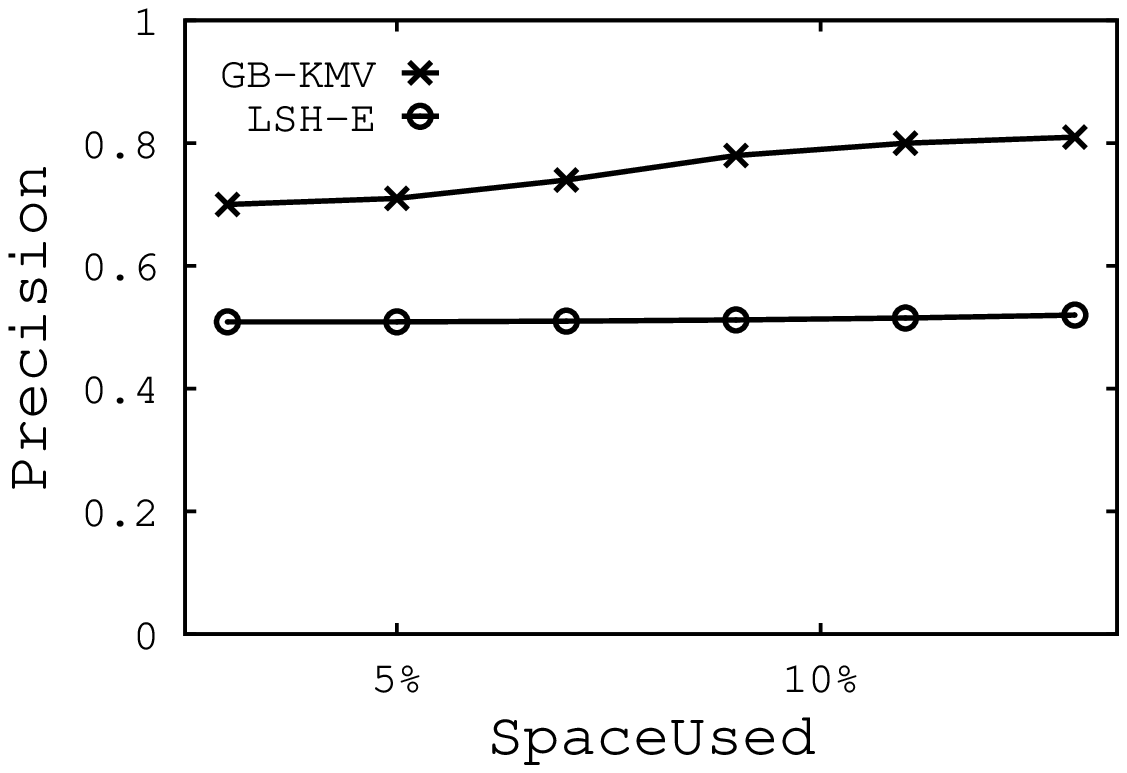}}
\subfigure{\includegraphics[width=0.24\linewidth]{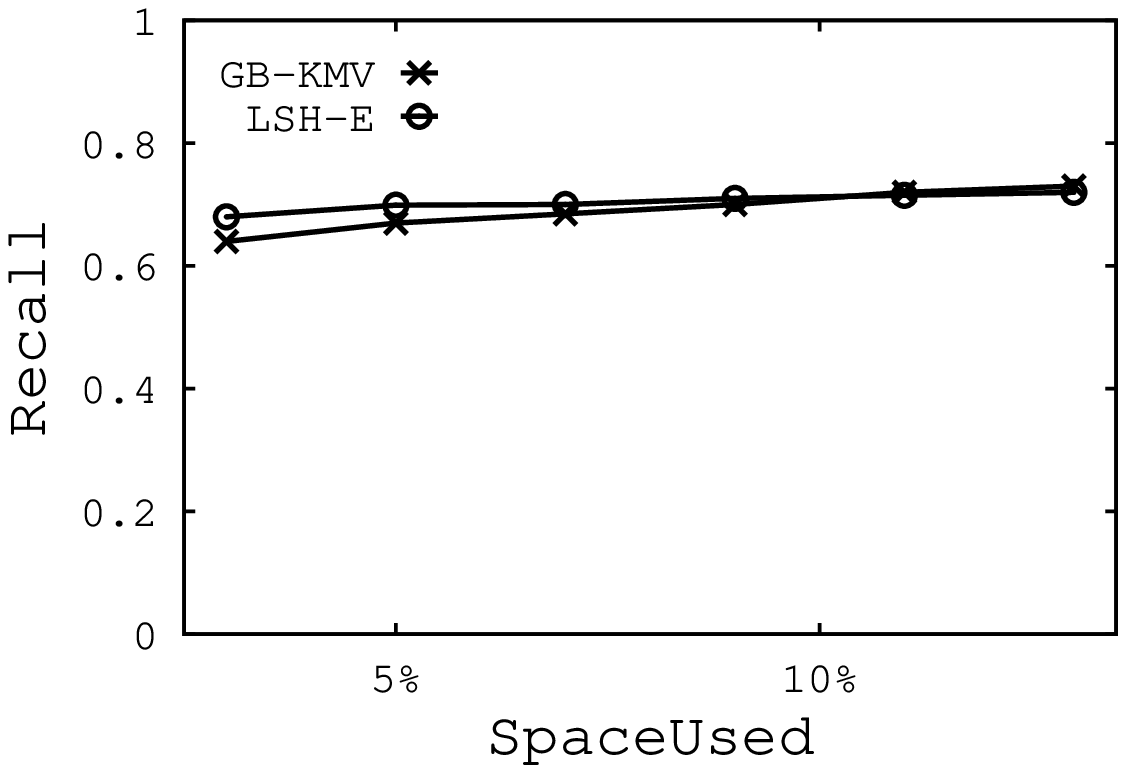}}
\subfigure{\includegraphics[width=0.24\linewidth]{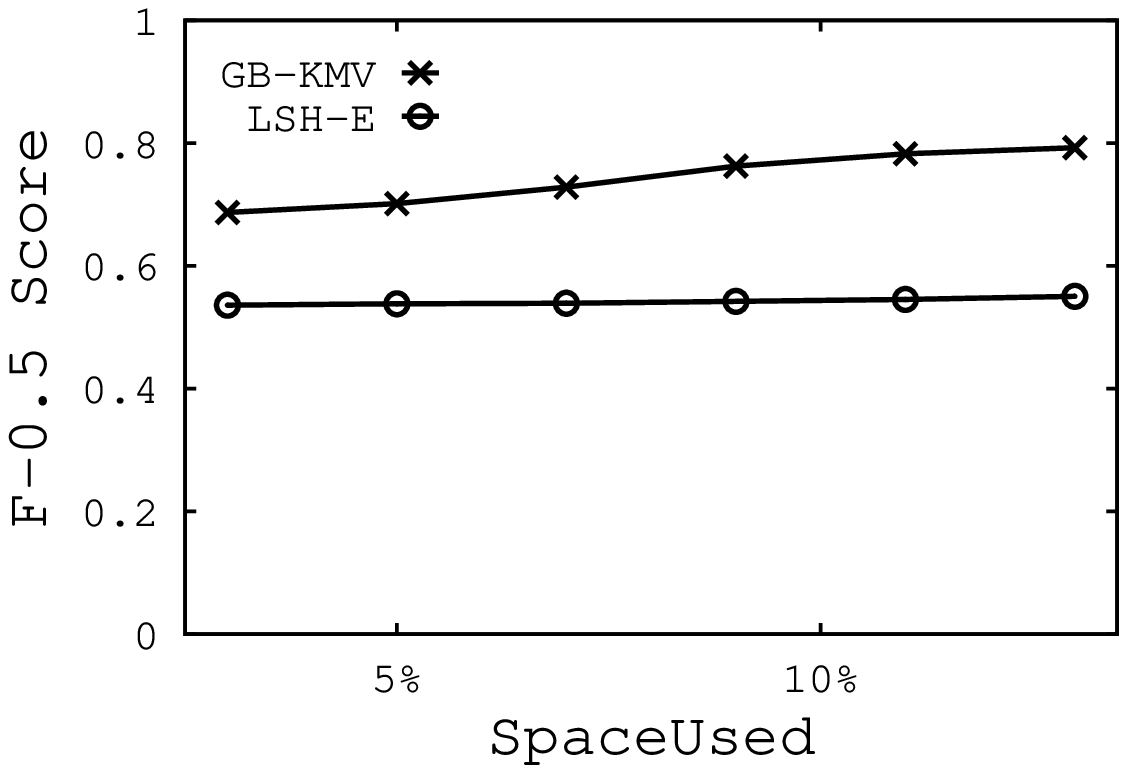}}
\vspace{-0.3cm}
\centering
\caption{\small Accuracy versus Space on COD}
\vspace{-4mm}
\label{fig:space_accu_cod}
\end{figure*}

\begin{figure*}[hbt]
\centering
\subfigure{\includegraphics[width=0.24\linewidth]{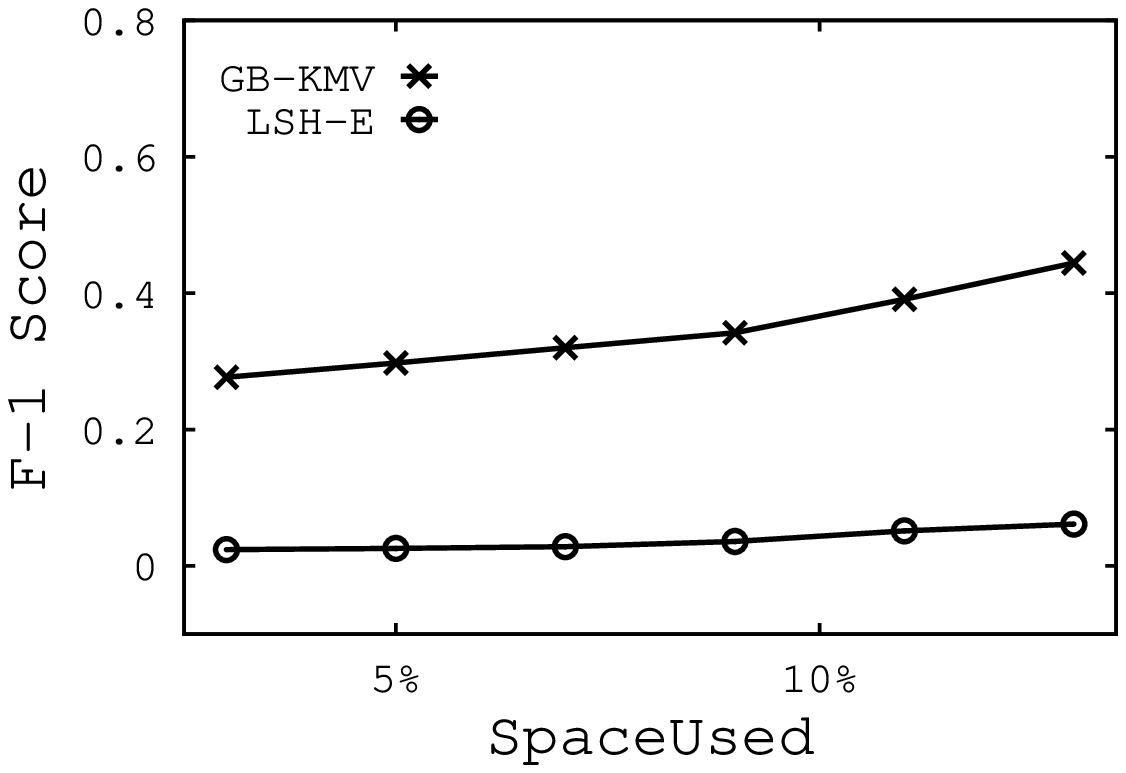}}
\subfigure{\includegraphics[width=0.24\linewidth]{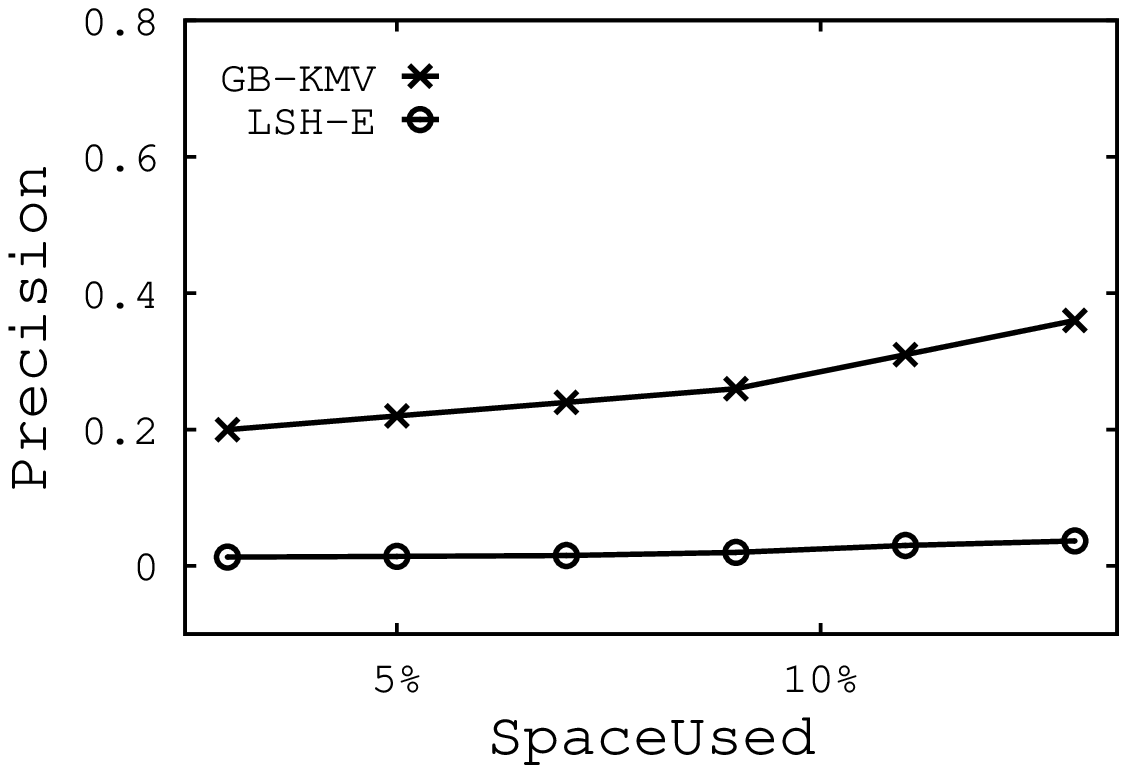}}
\subfigure{\includegraphics[width=0.24\linewidth]{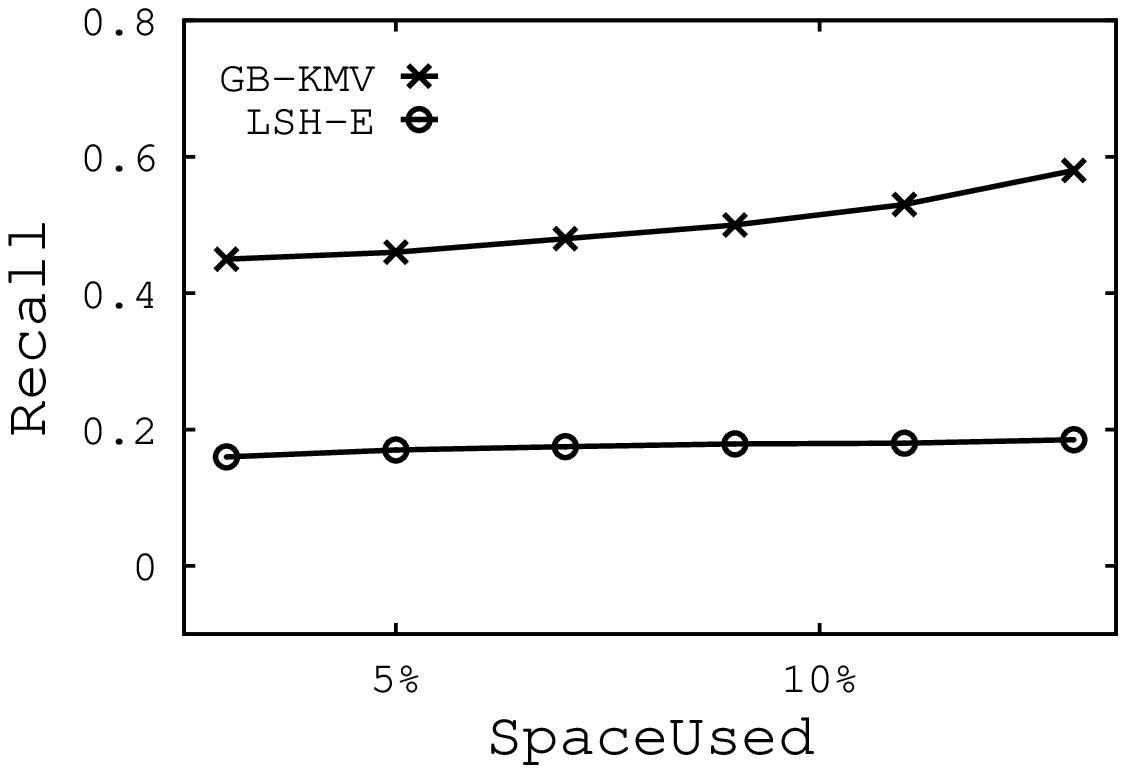}}
\subfigure{\includegraphics[width=0.24\linewidth]{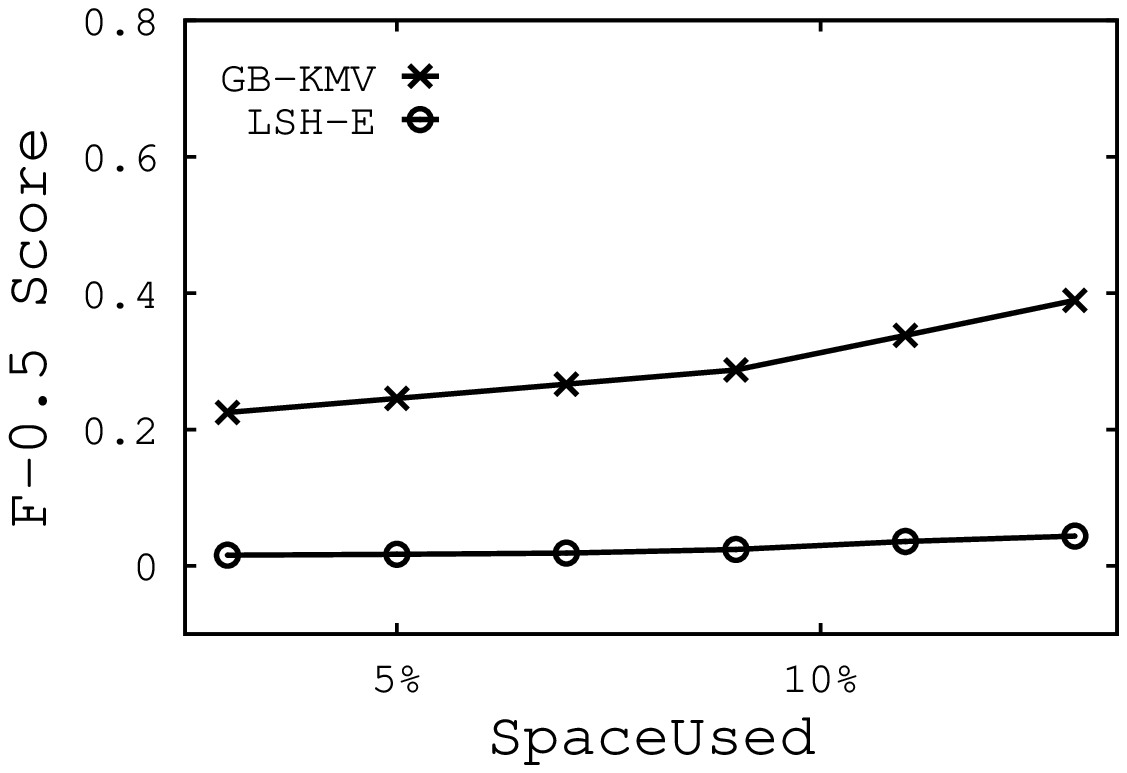}}
\vspace{-0.3cm}
\centering
\caption{\small Accuracy versus Space on DELIC}
\vspace{-4mm}
\label{fig:space_accu_delic}
\end{figure*}

\begin{figure*}[hbt]
\centering
\subfigure{\includegraphics[width=0.24\linewidth]{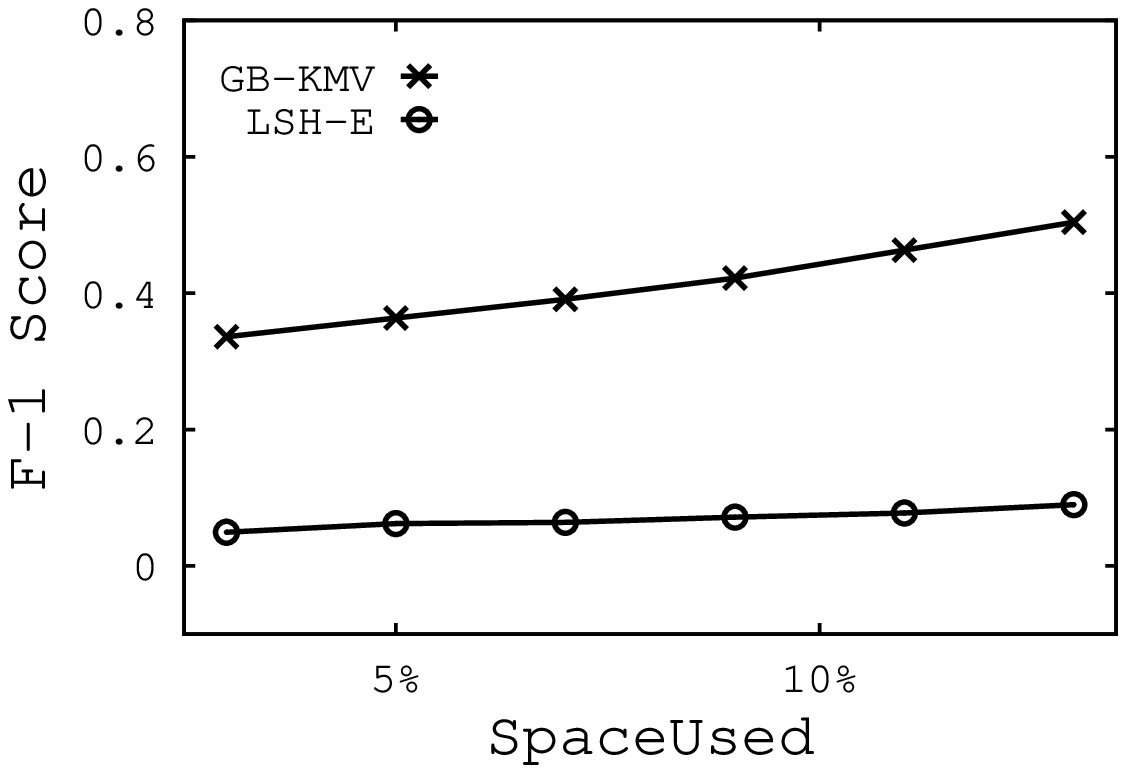}}
\subfigure{\includegraphics[width=0.24\linewidth]{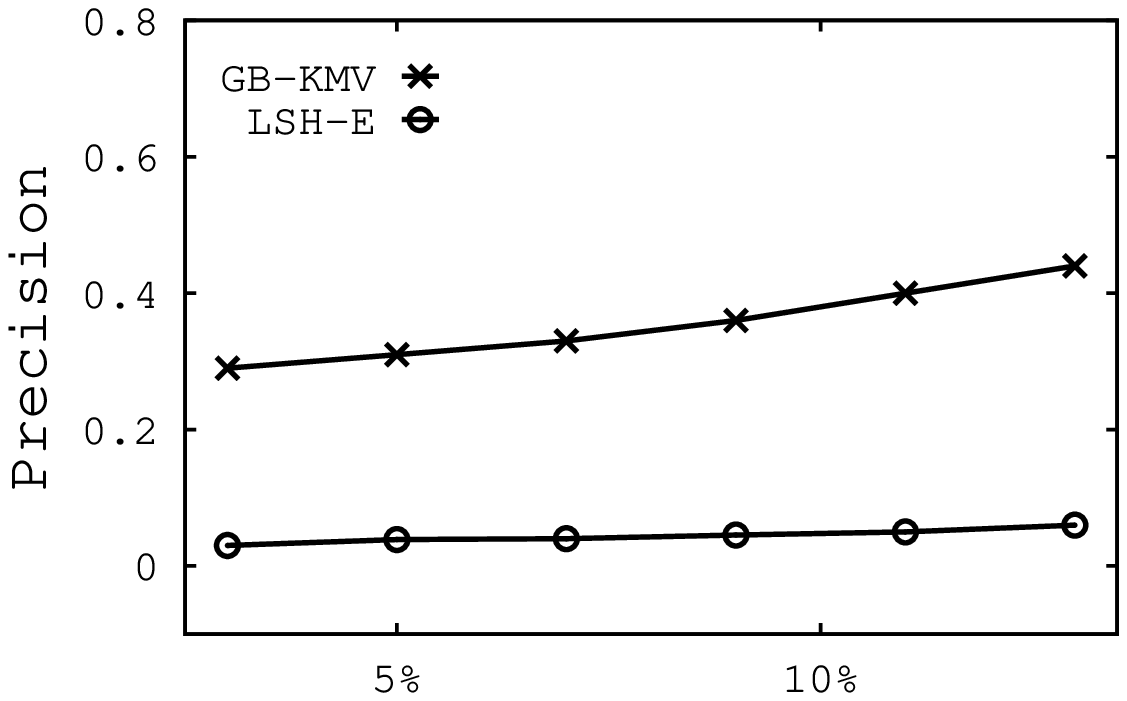}}
\subfigure{\includegraphics[width=0.24\linewidth]{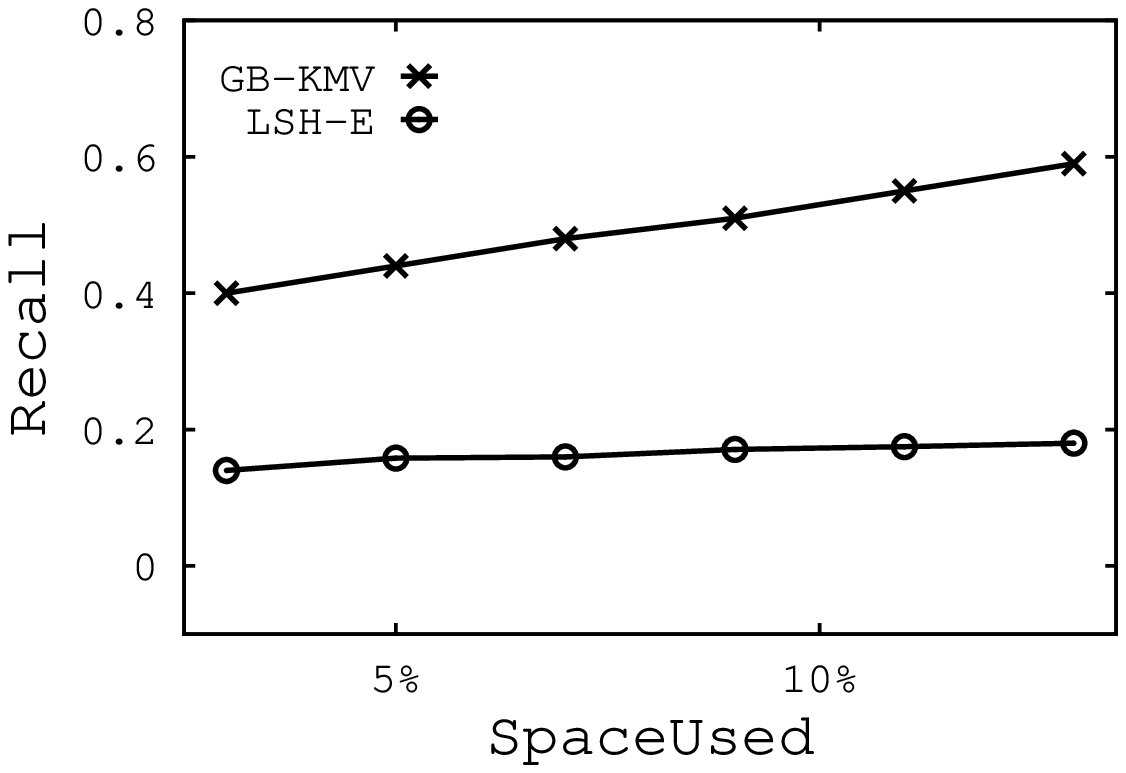}}
\subfigure{\includegraphics[width=0.24\linewidth]{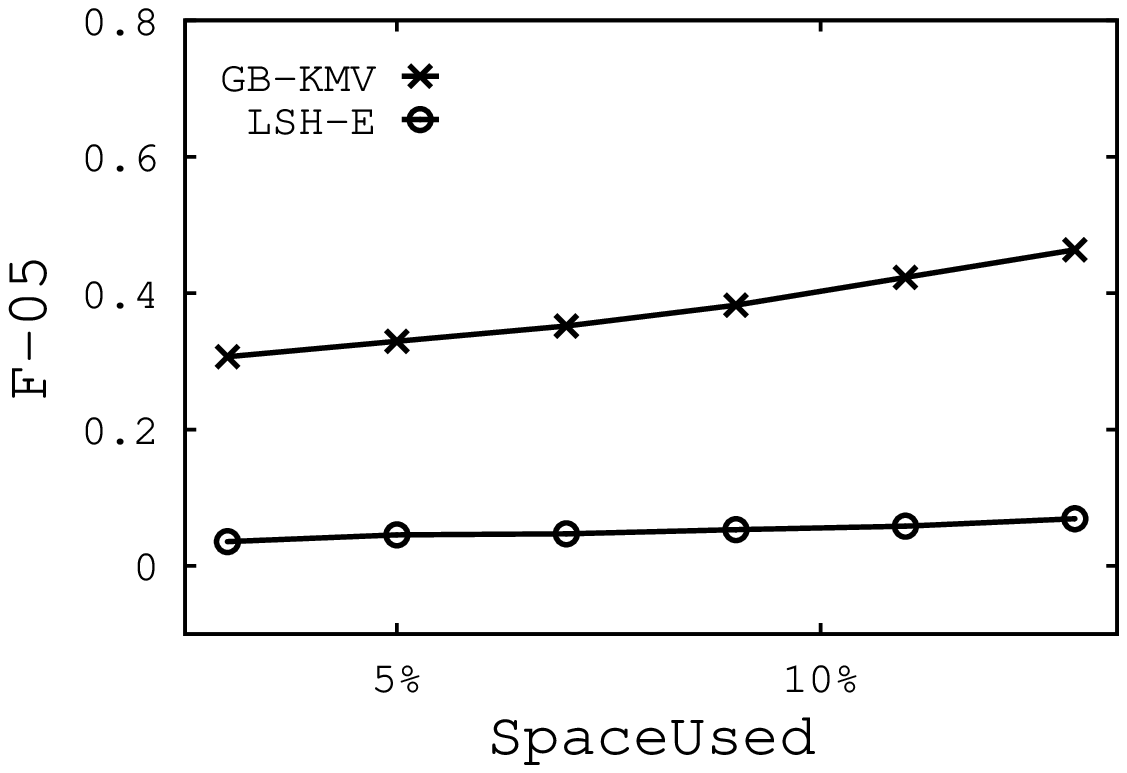}}
\vspace{-0.3cm}
\centering
\caption{\small Accuracy versus Space on ENRON}
\vspace{-4mm}
\label{fig:space_accu_enron}
\end{figure*}

\begin{figure*}[hbt]
\centering
\subfigure{\includegraphics[width=0.24\linewidth]{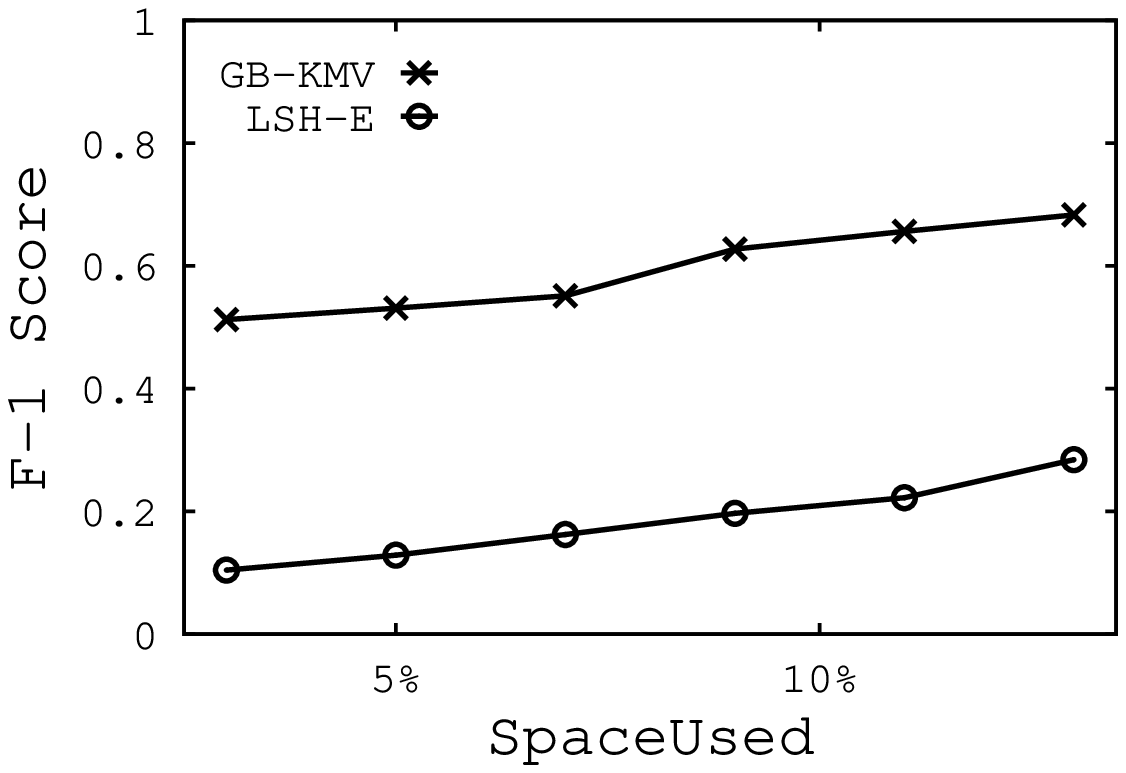}}
\subfigure{\includegraphics[width=0.24\linewidth]{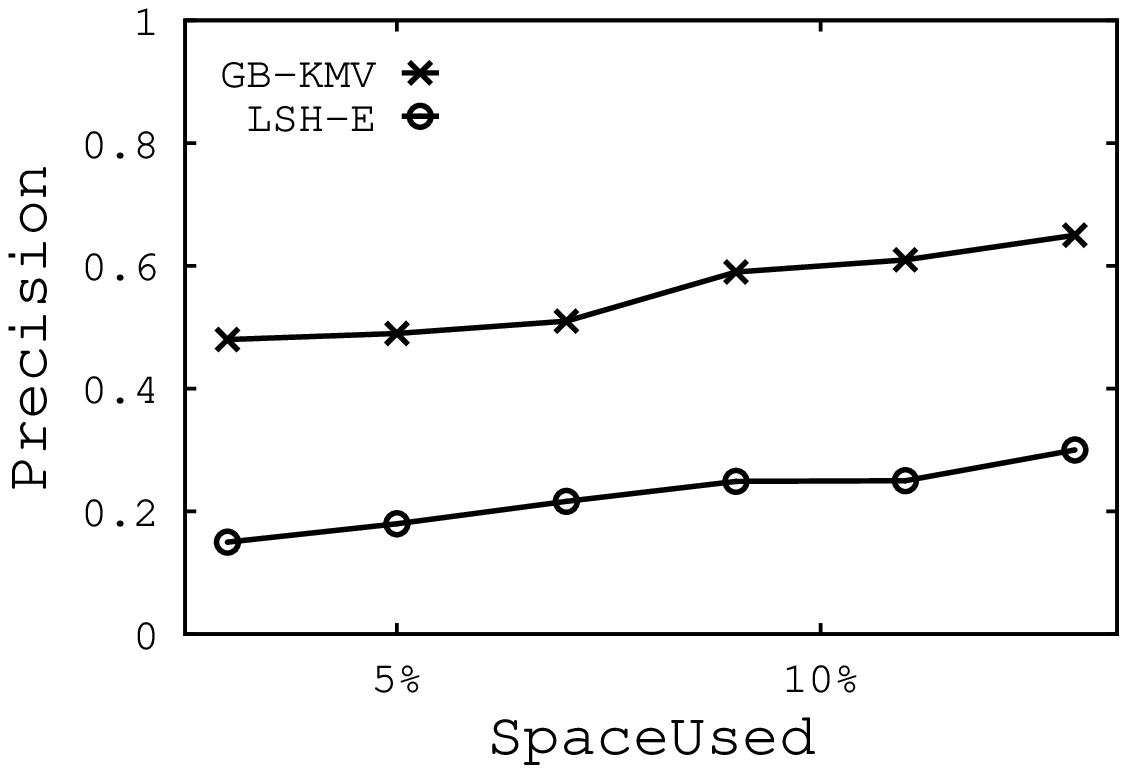}}
\subfigure{\includegraphics[width=0.24\linewidth]{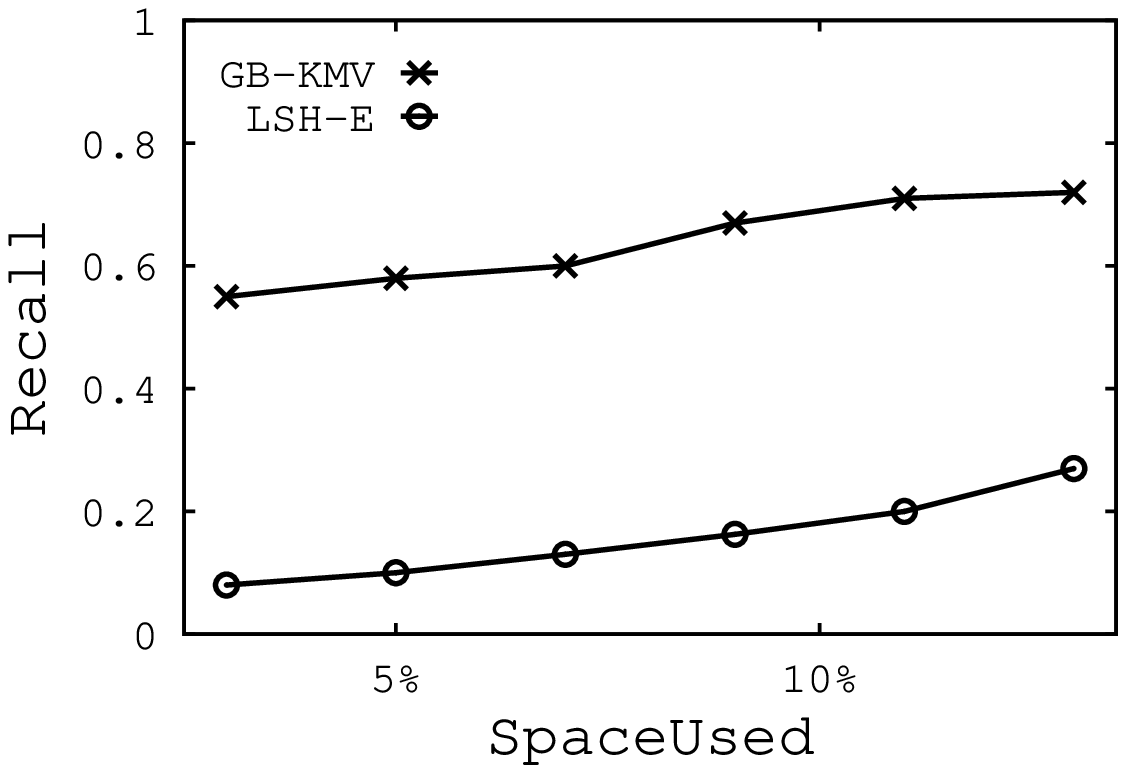}}
\subfigure{\includegraphics[width=0.24\linewidth]{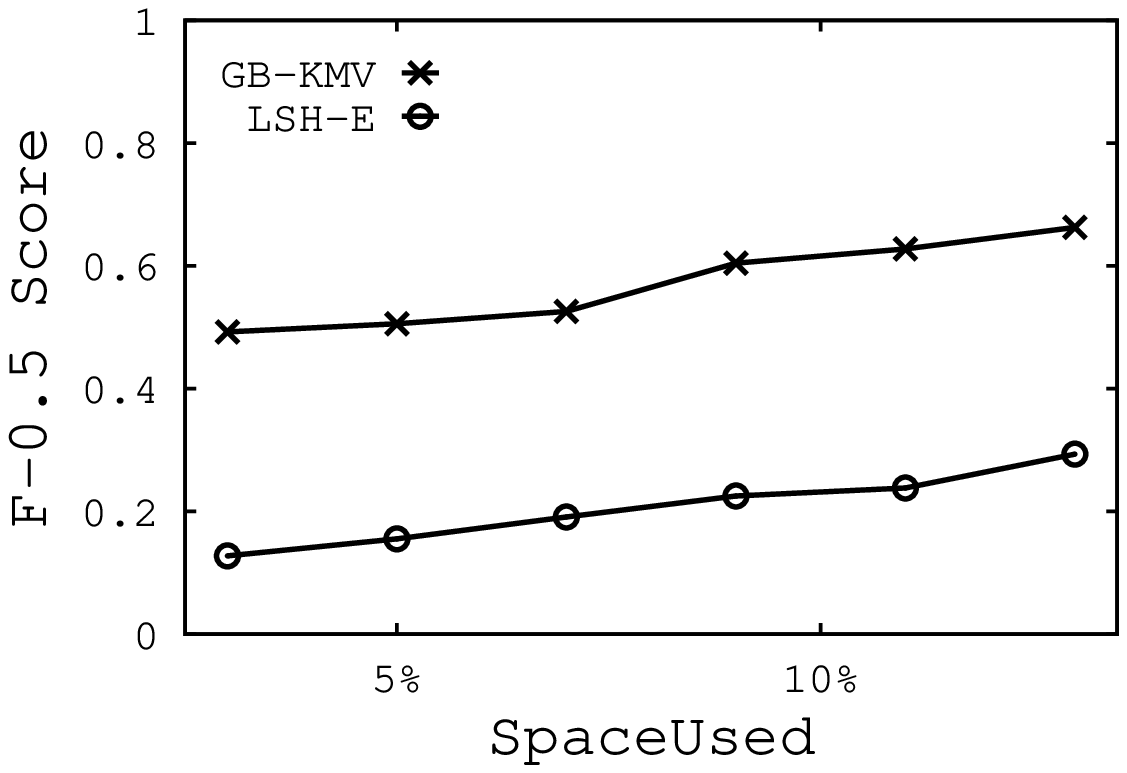}}
\vspace{-0.3cm}
\centering
\caption{\small Accuracy versus Space on NETFLIX}
\vspace{-4mm}
\label{fig:space_accu_netf}
\end{figure*}

\begin{figure*}[hbt]
\centering
\subfigure{\includegraphics[width=0.24\linewidth]{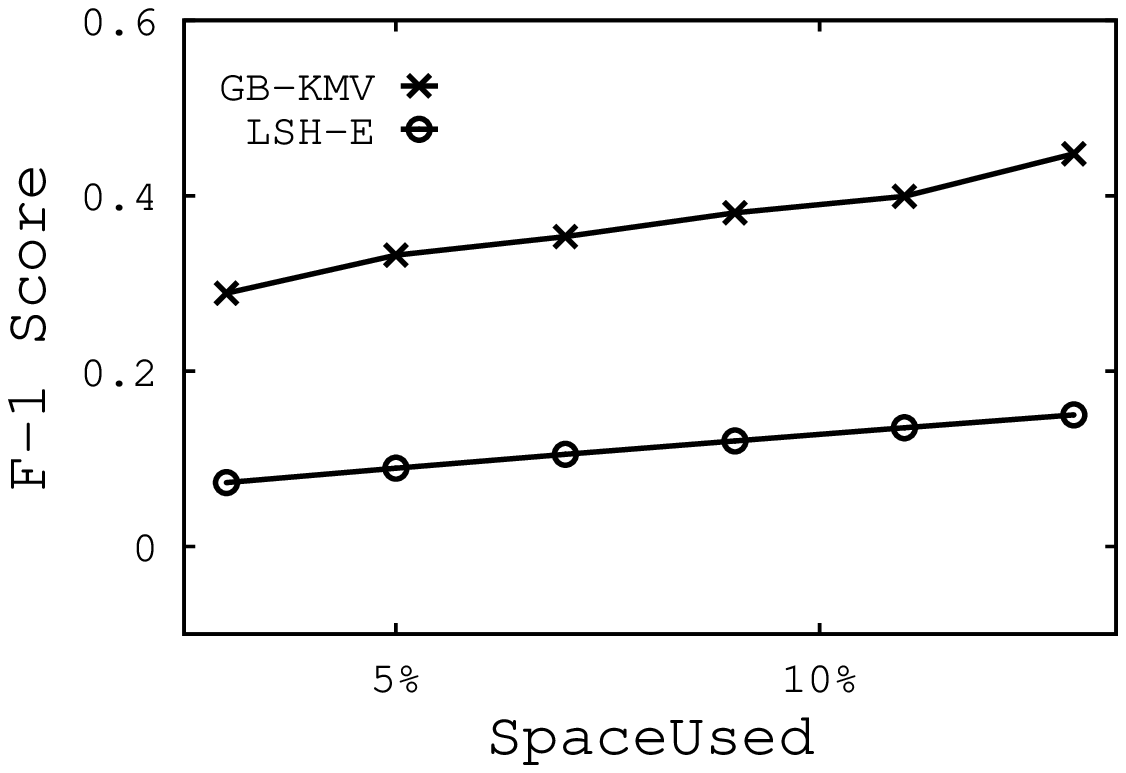}}
\subfigure{\includegraphics[width=0.24\linewidth]{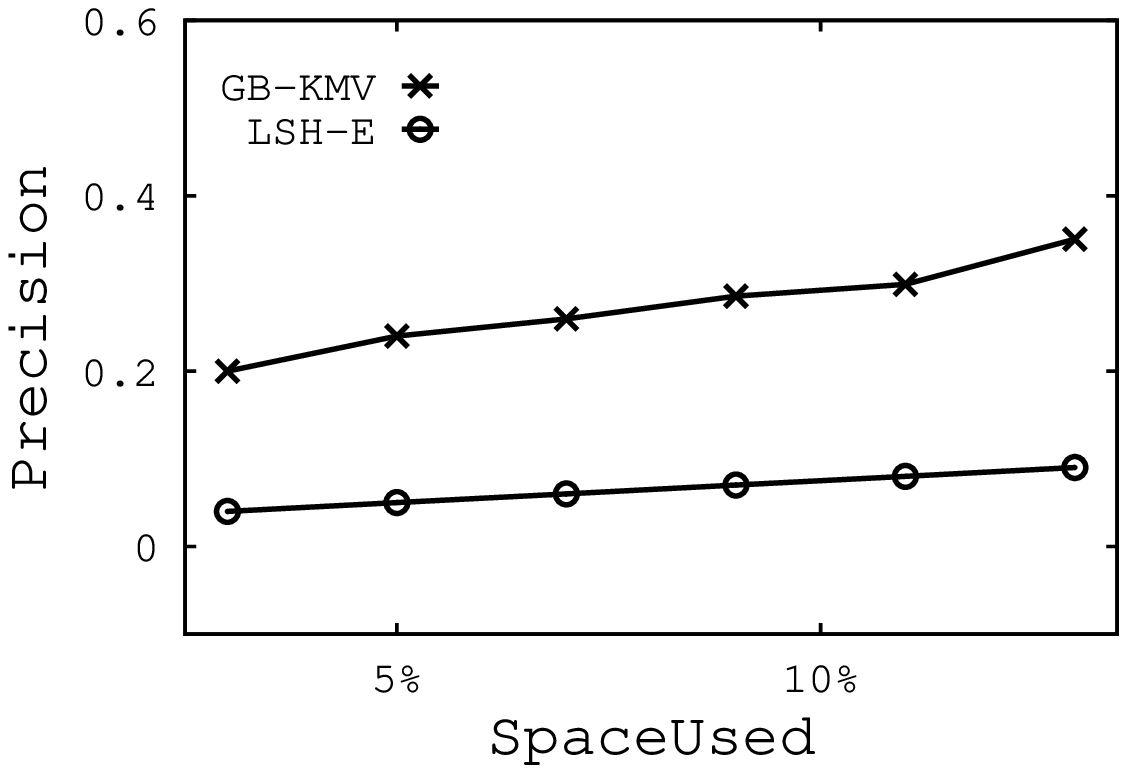}}
\subfigure{\includegraphics[width=0.24\linewidth]{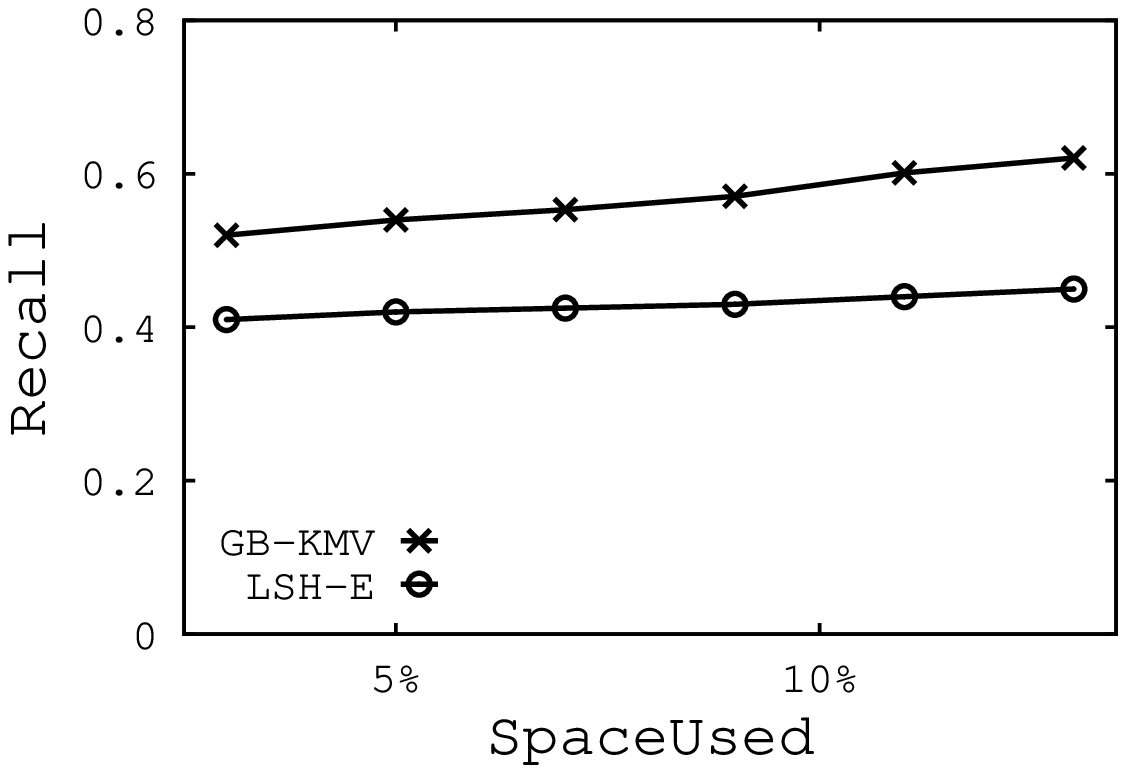}}
\subfigure{\includegraphics[width=0.24\linewidth]{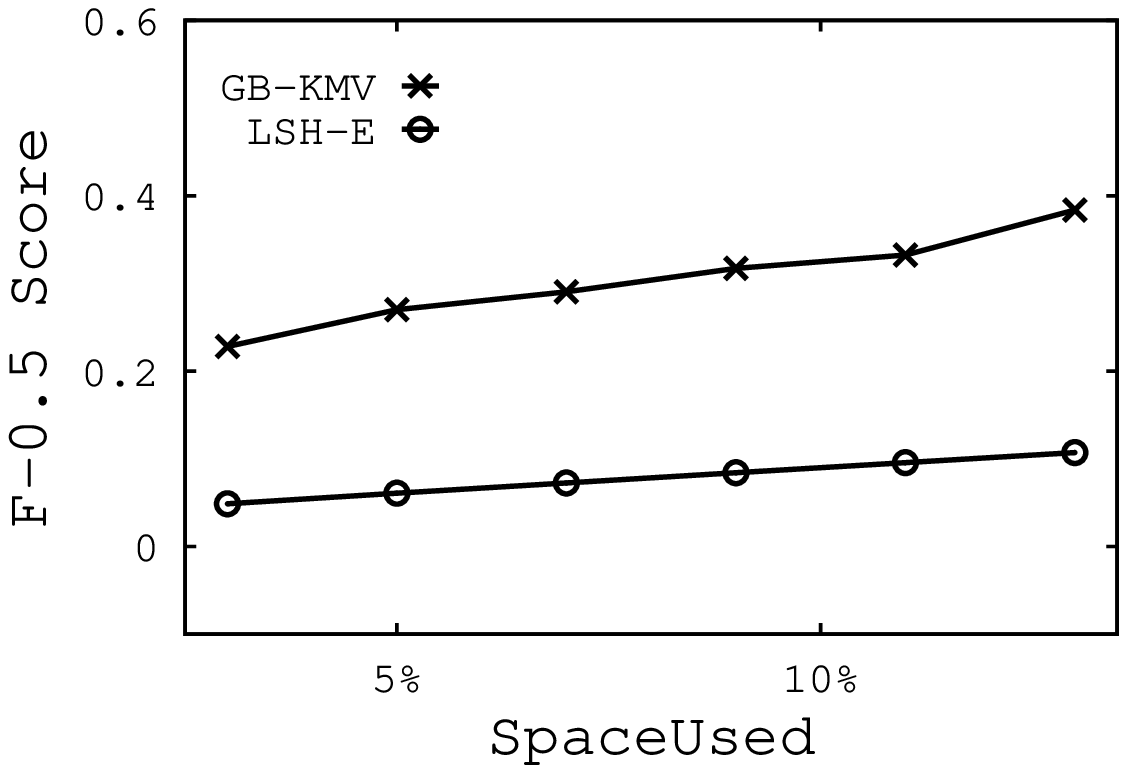}}
\vspace{-0.3cm}
\centering
\caption{\small Accuracy versus Space on REUTERS}
\vspace{-4mm}
\label{fig:space_accu_reuters}
\end{figure*}

\begin{figure*}[hbt]
\centering
\subfigure{\includegraphics[width=0.24\linewidth]{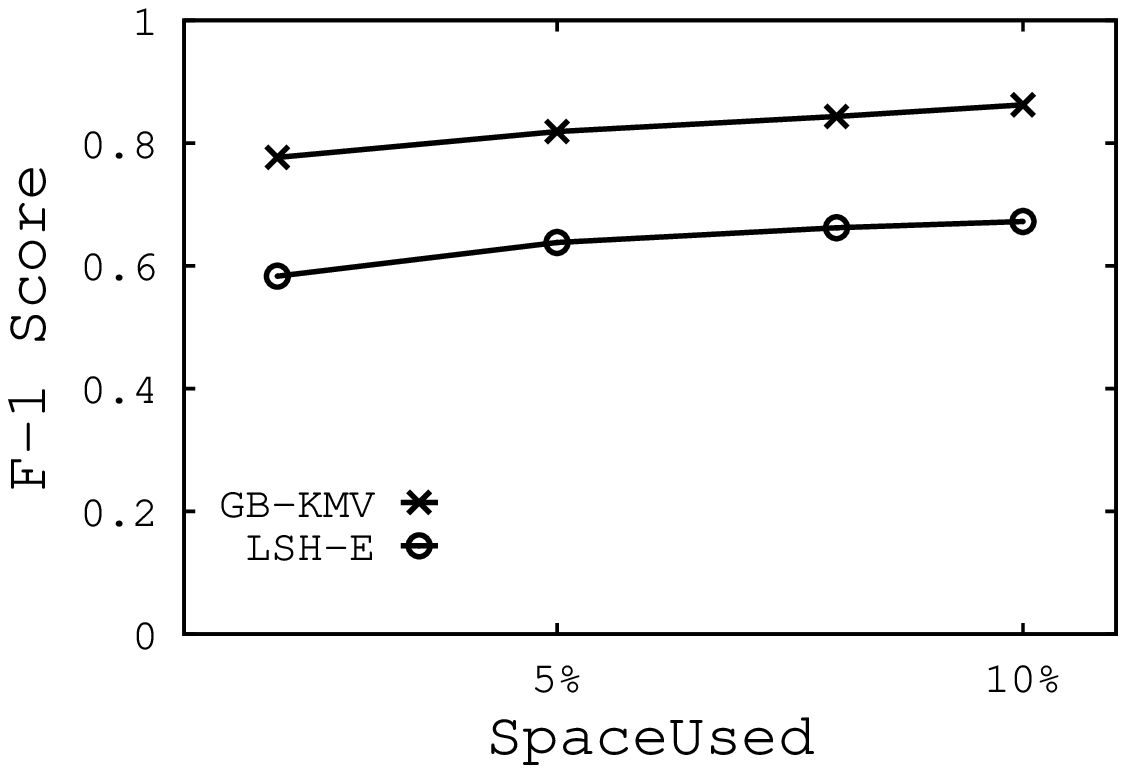}}
\subfigure{\includegraphics[width=0.24\linewidth]{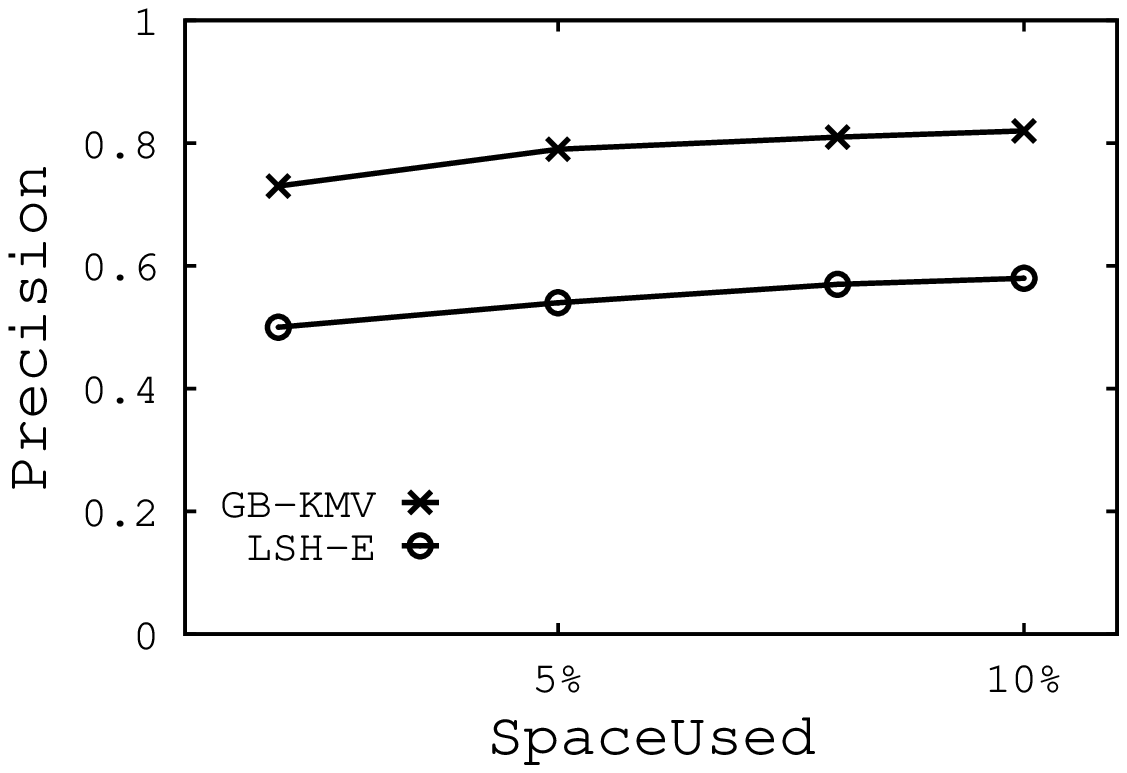}}
\subfigure{\includegraphics[width=0.24\linewidth]{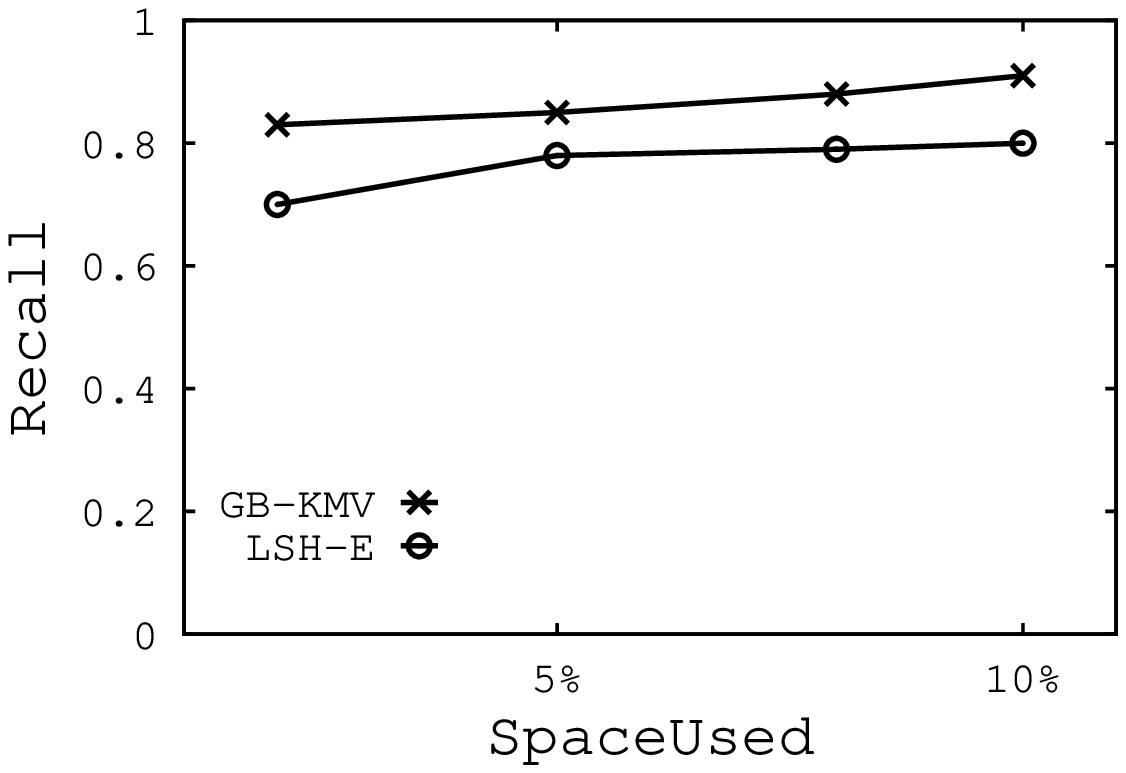}}
\subfigure{\includegraphics[width=0.24\linewidth]{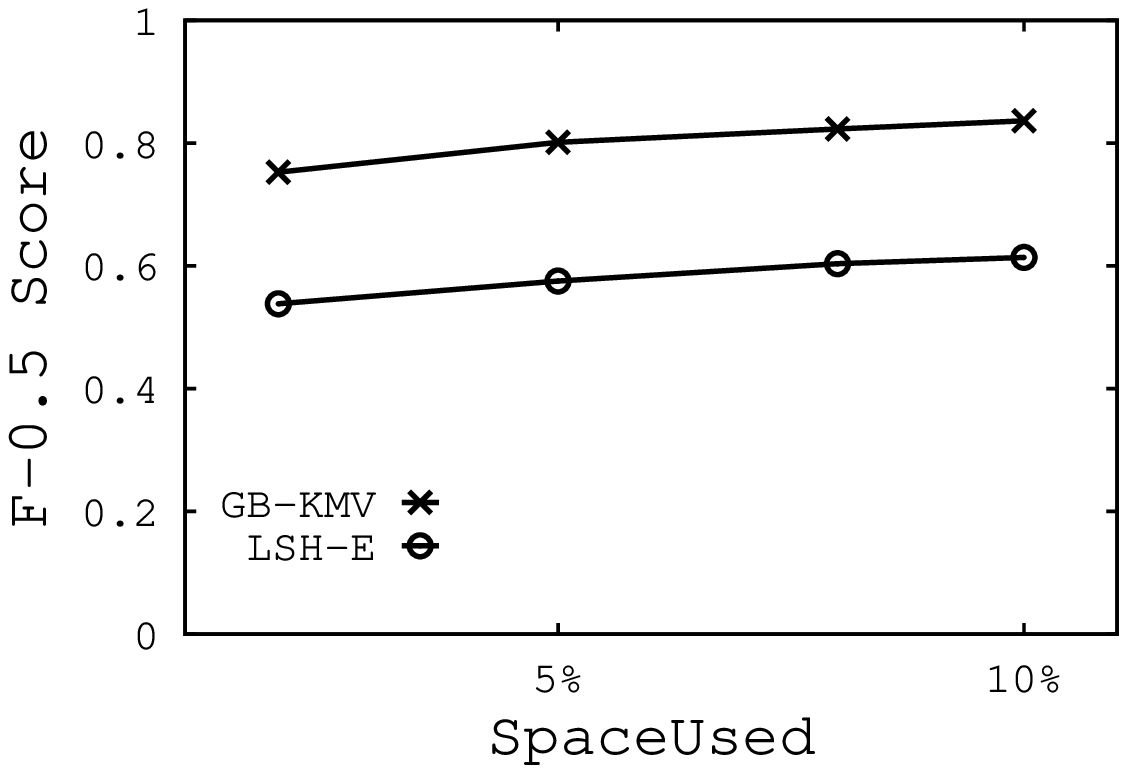}}
\vspace{-0.3cm}
\centering
\caption{\small Accuracy versus Space on WEBSPAM}
\vspace{-4mm}
\label{fig:space_accu_webspam}
\end{figure*}

\begin{figure*}[hbt]
\centering
\subfigure{\includegraphics[width=0.24\linewidth]{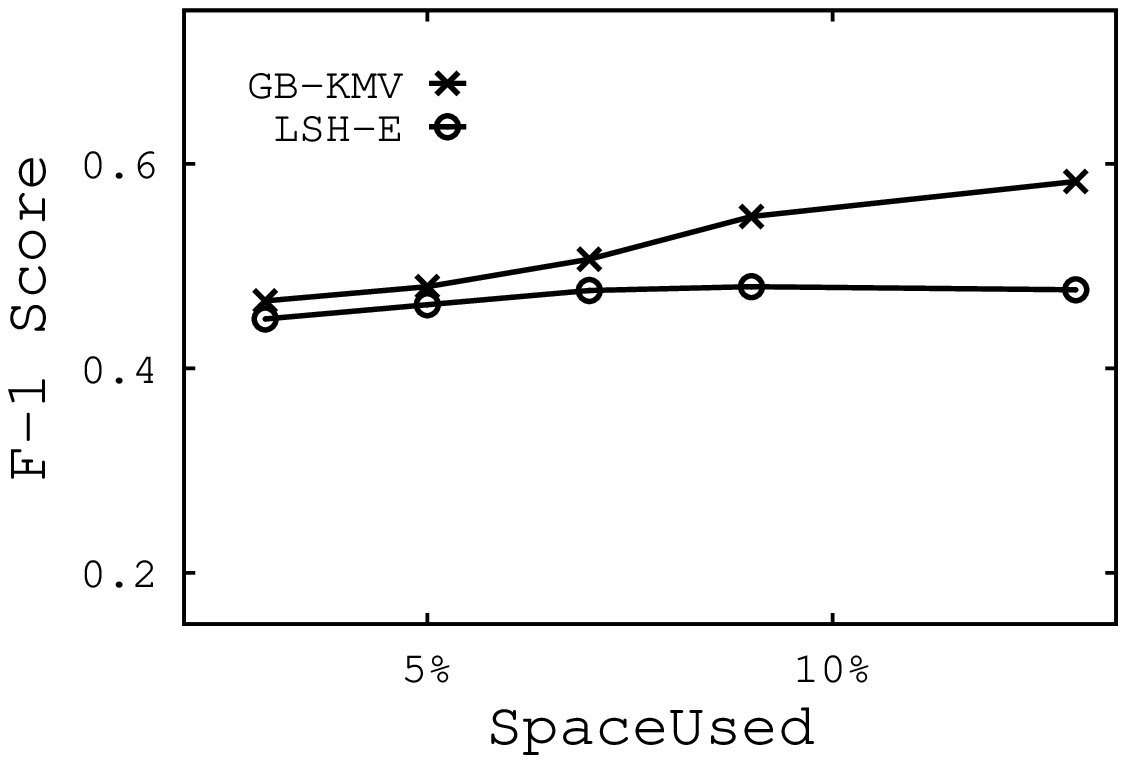}}
\subfigure{\includegraphics[width=0.24\linewidth]{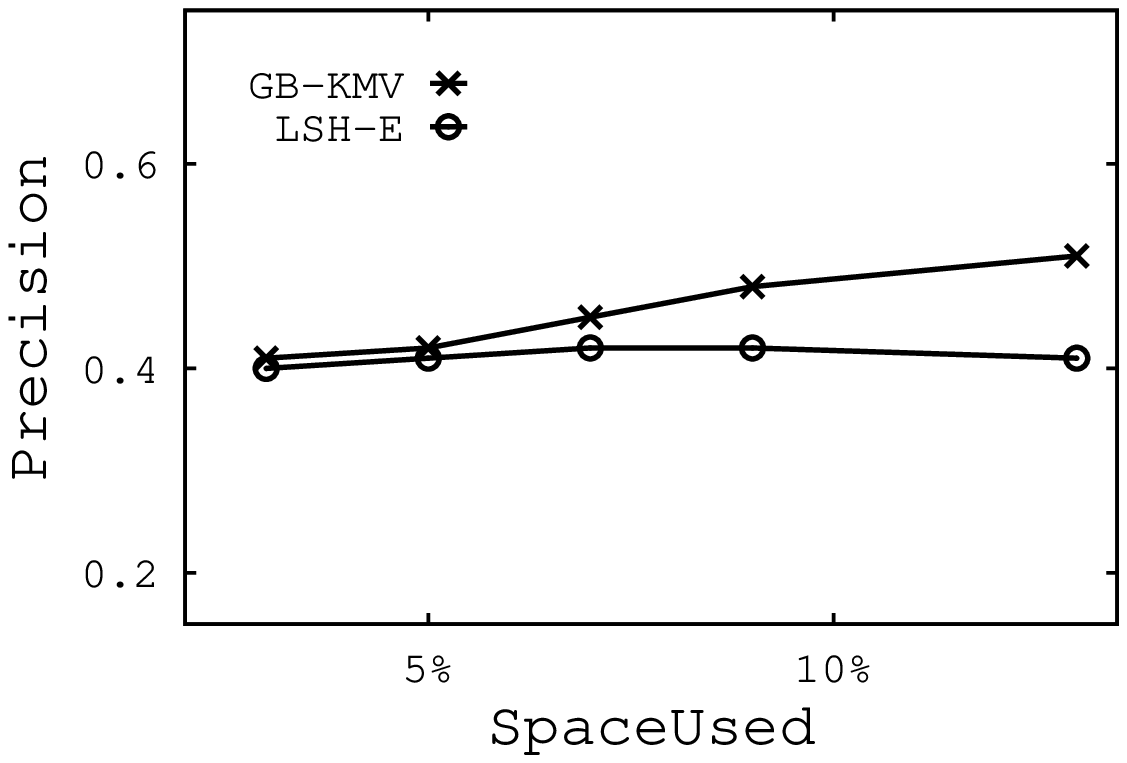}}
\subfigure{\includegraphics[width=0.24\linewidth]{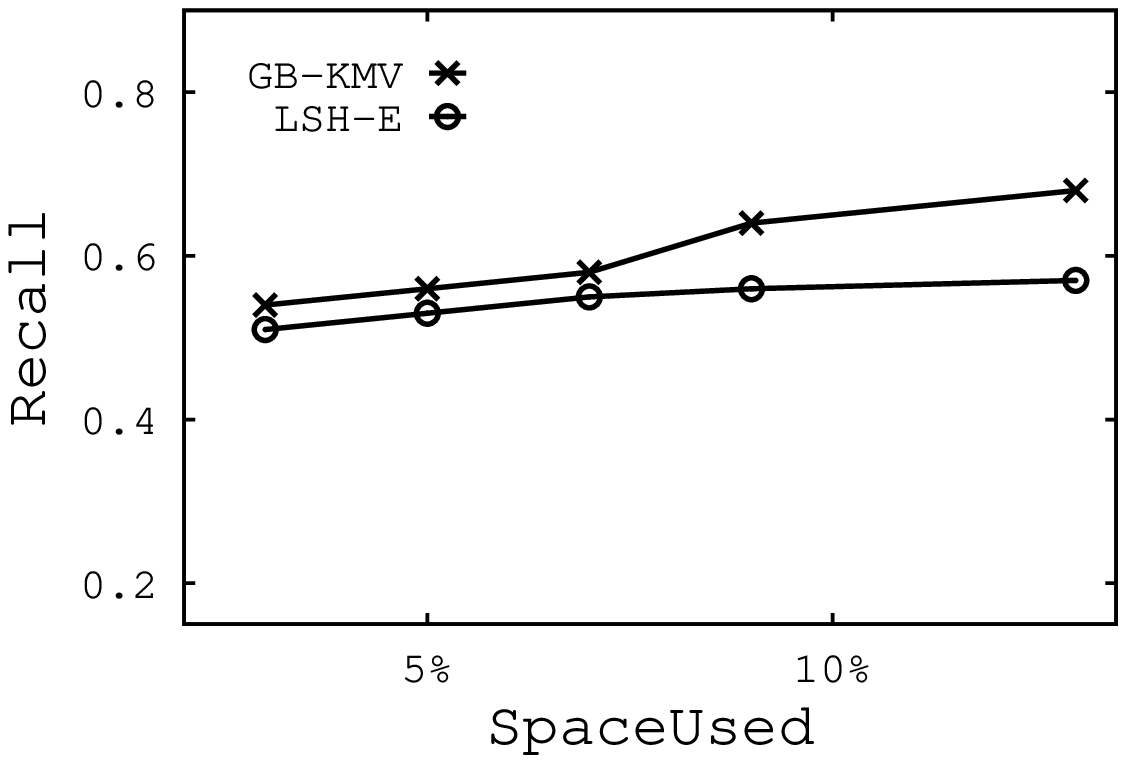}}
\subfigure{\includegraphics[width=0.24\linewidth]{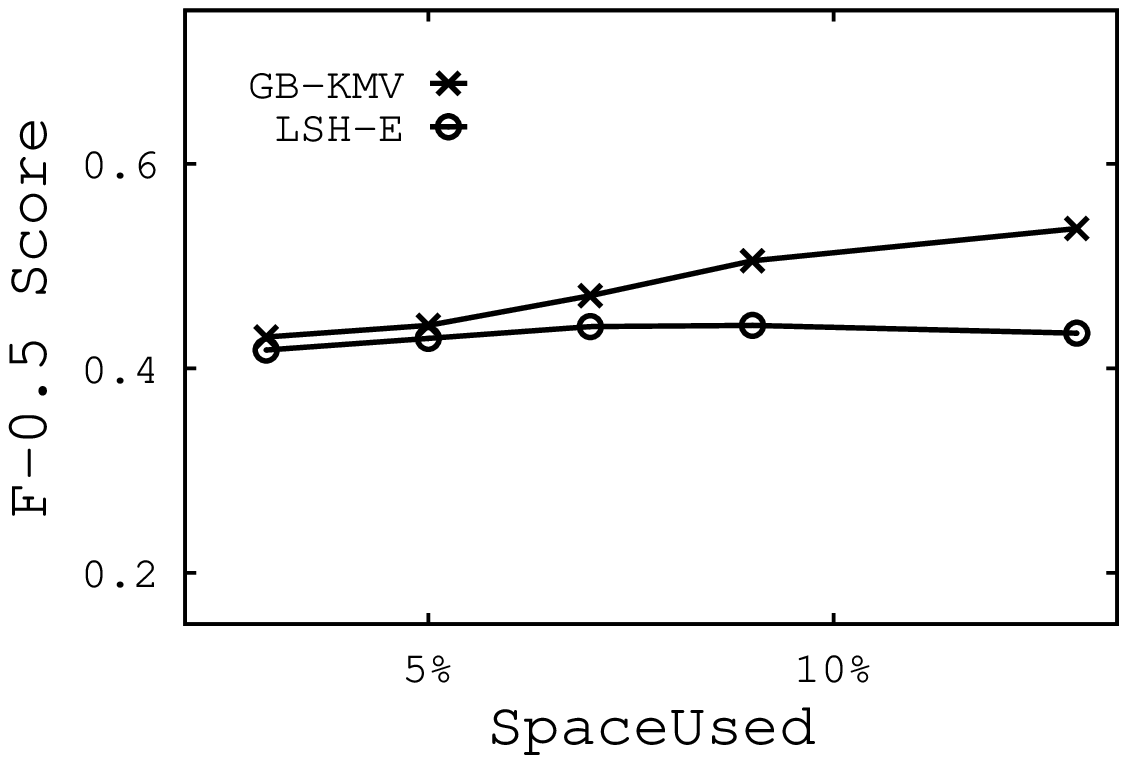}}
\vspace{-0.3cm}
\centering
\caption{\small Accuracy versus Space on WDC}
\vspace{-4mm}
\label{fig:space_accu_webspam}
\end{figure*}

\begin{figure}[hbt]
\centering
\subfigure[NETFLIX]{\includegraphics[width=0.48\linewidth]{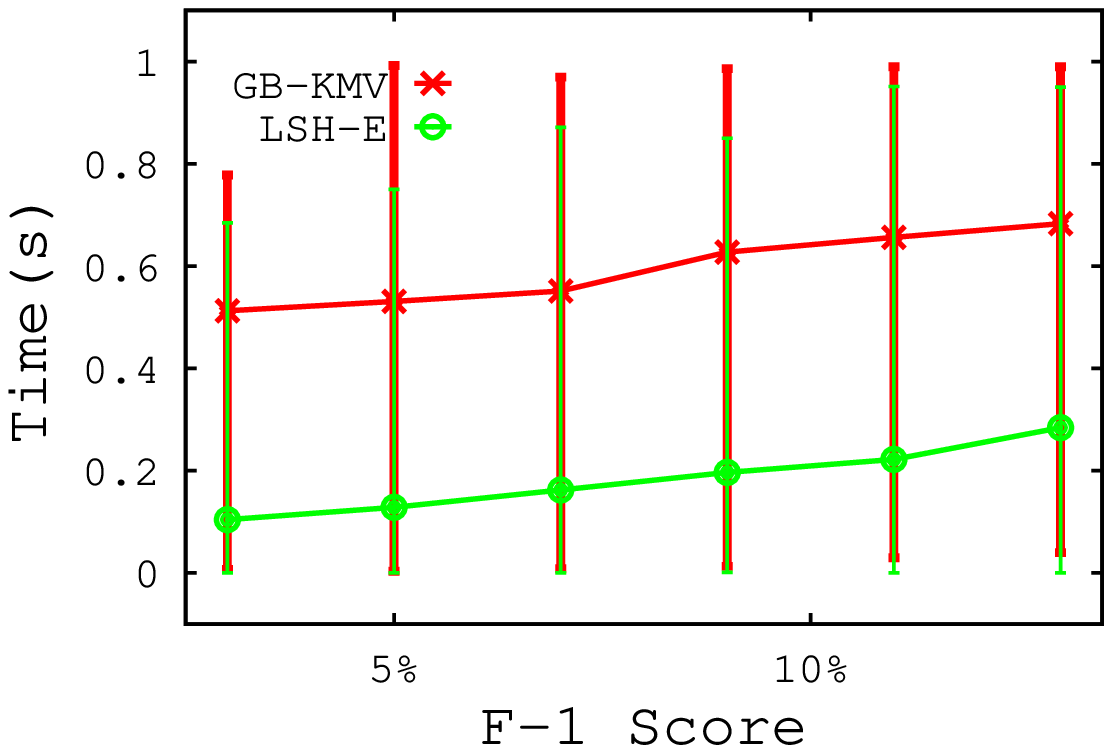}}
\subfigure[DELIC]{\includegraphics[width=0.48\linewidth]{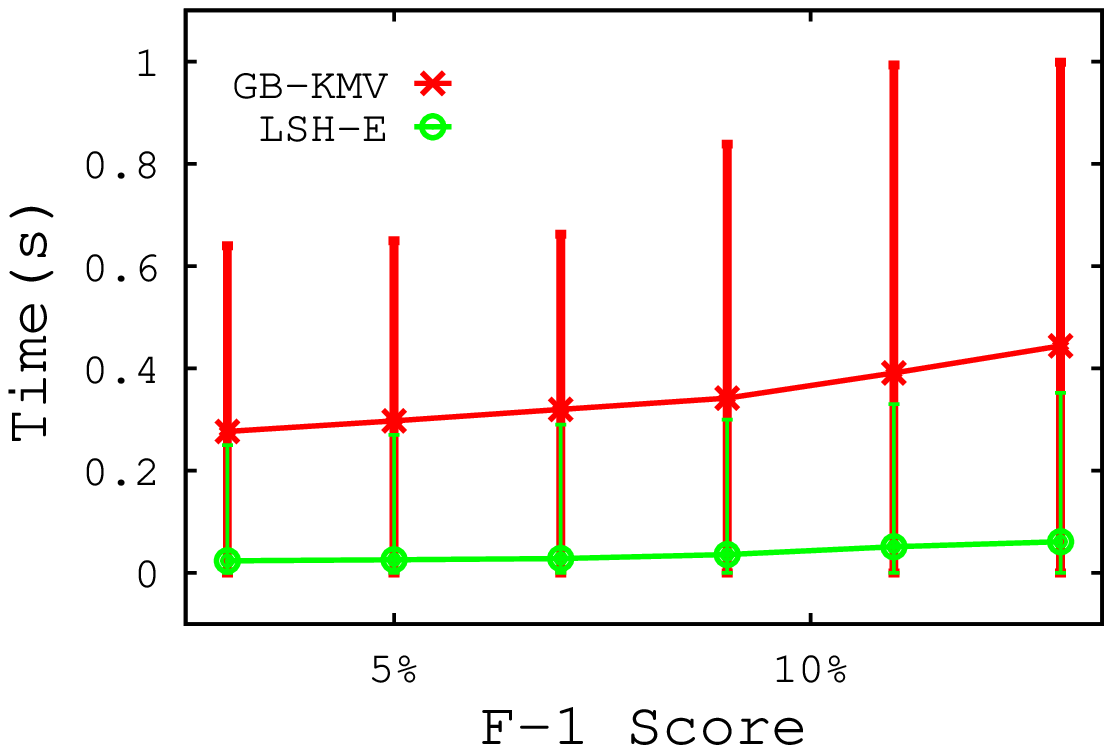}}
\subfigure[COD]{\includegraphics[width=0.48\linewidth]{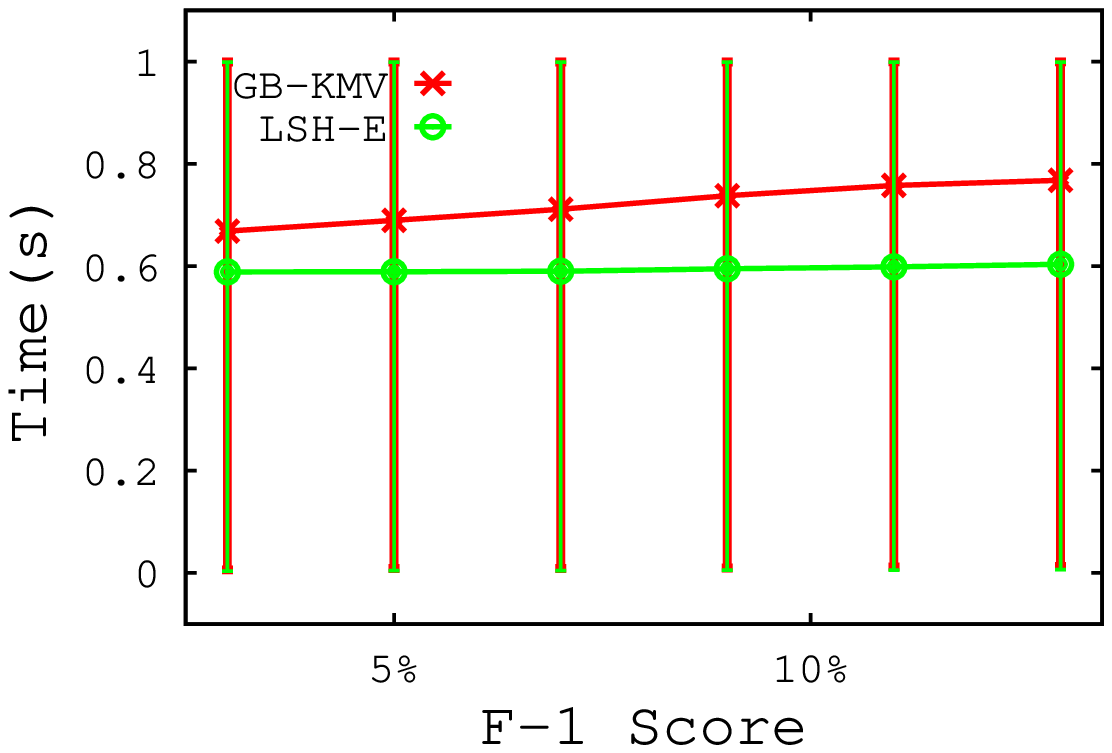}}
\subfigure[ENRON]{\includegraphics[width=0.48\linewidth]{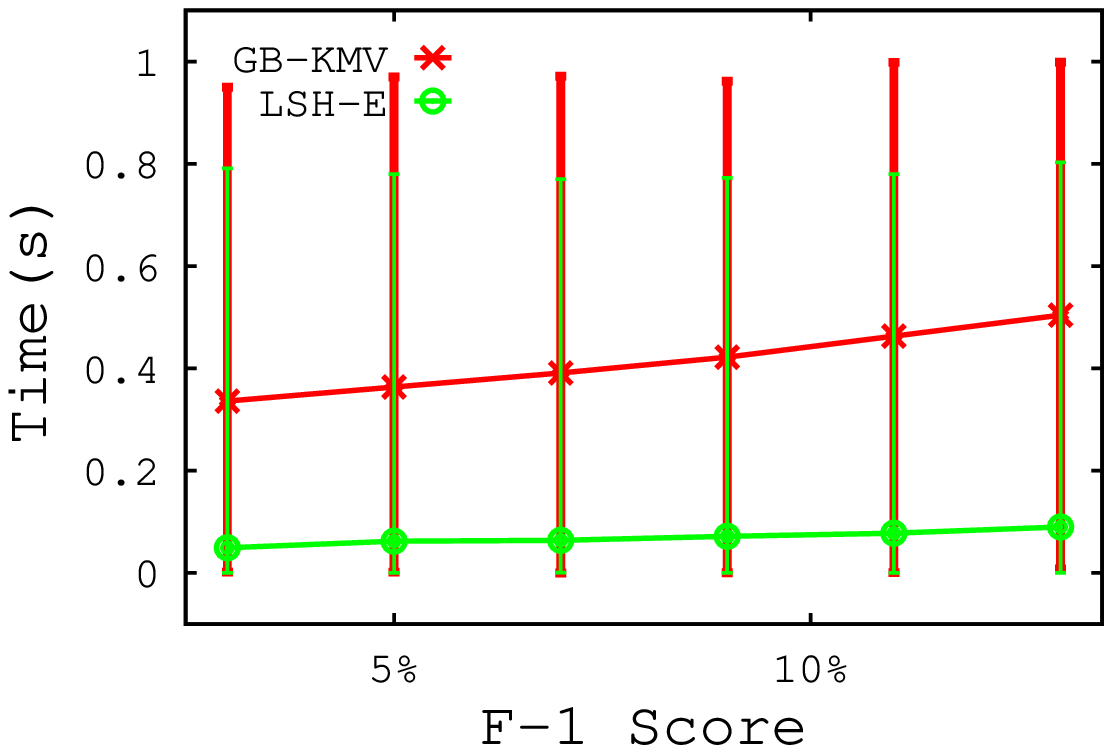}}
\subfigure[REUTERS]{\includegraphics[width=0.48\linewidth]{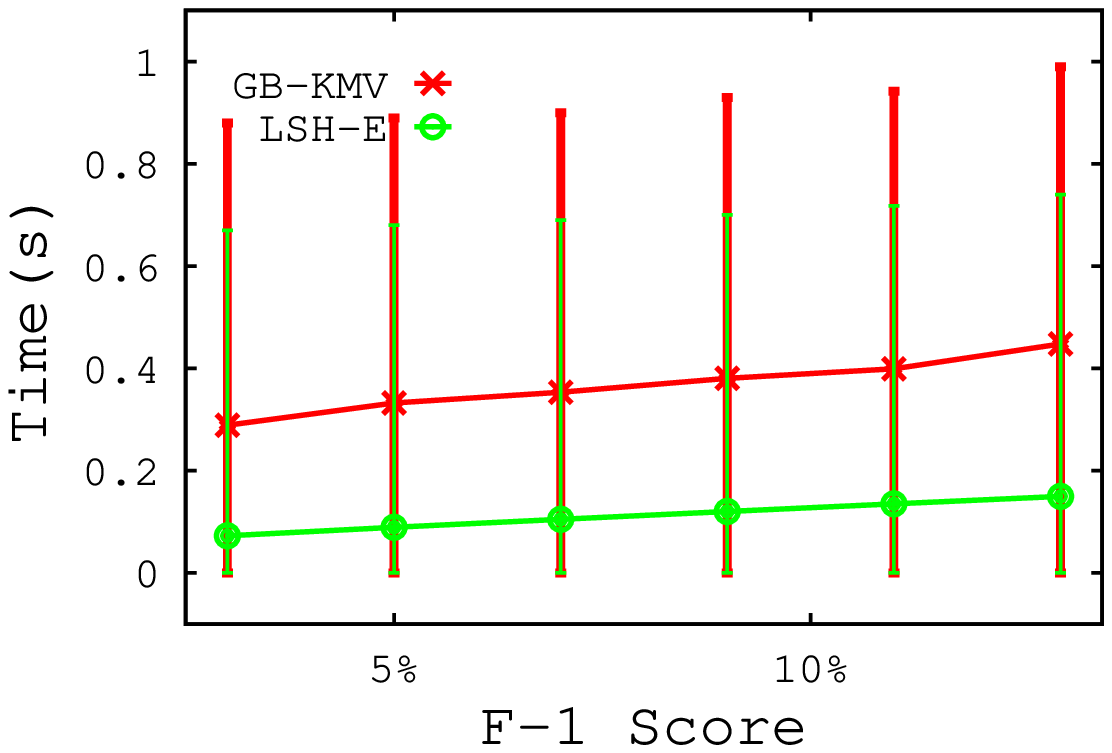}}
\subfigure[WEBSPAM]{\includegraphics[width=0.48\linewidth]{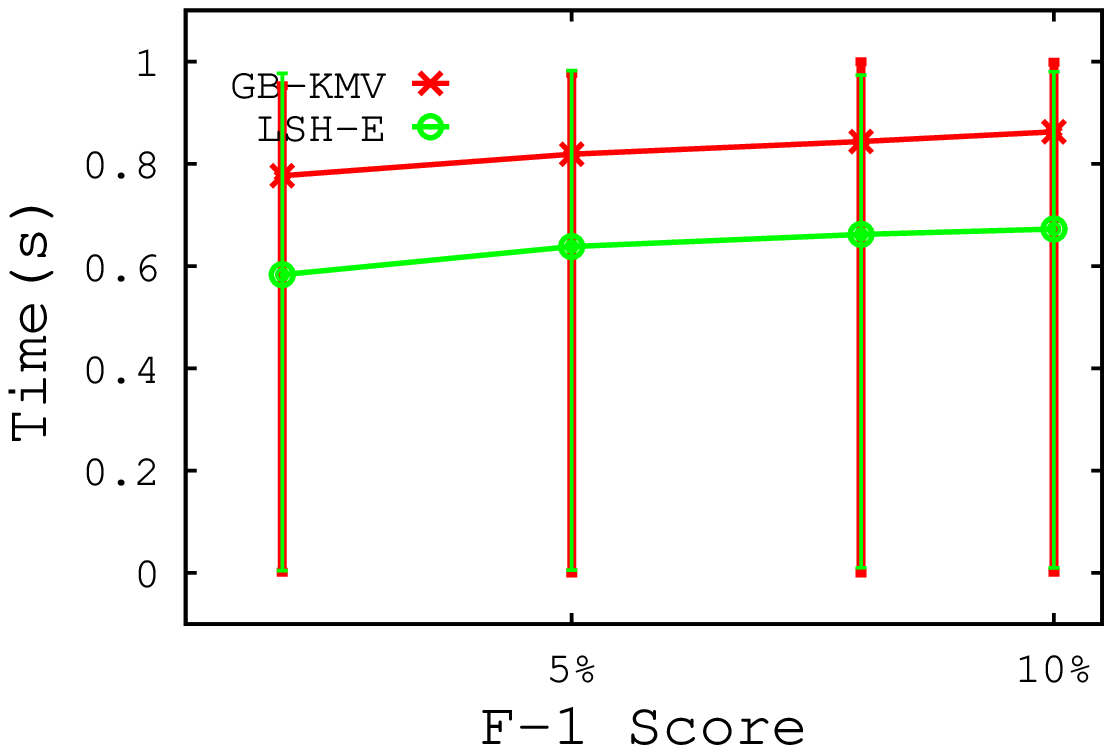}}
\subfigure[WDC]{\includegraphics[width=0.48\linewidth]{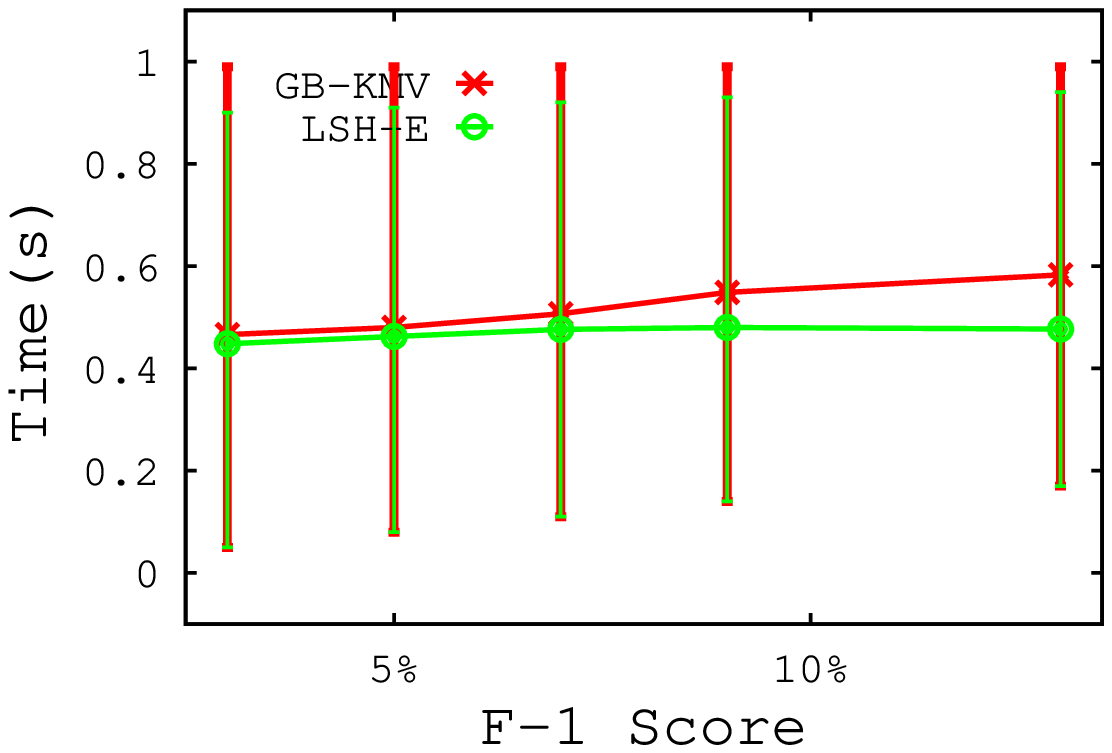}}

\vspace{-0.3cm}
\centering
\caption{\small The distribution of Accuracy}
\vspace{-1mm}
\label{fig:distribution_accuracy}
\end{figure}

\begin{figure}[hbt]
\centering
\subfigure[NETFLIX]{\includegraphics[width=0.48\linewidth]{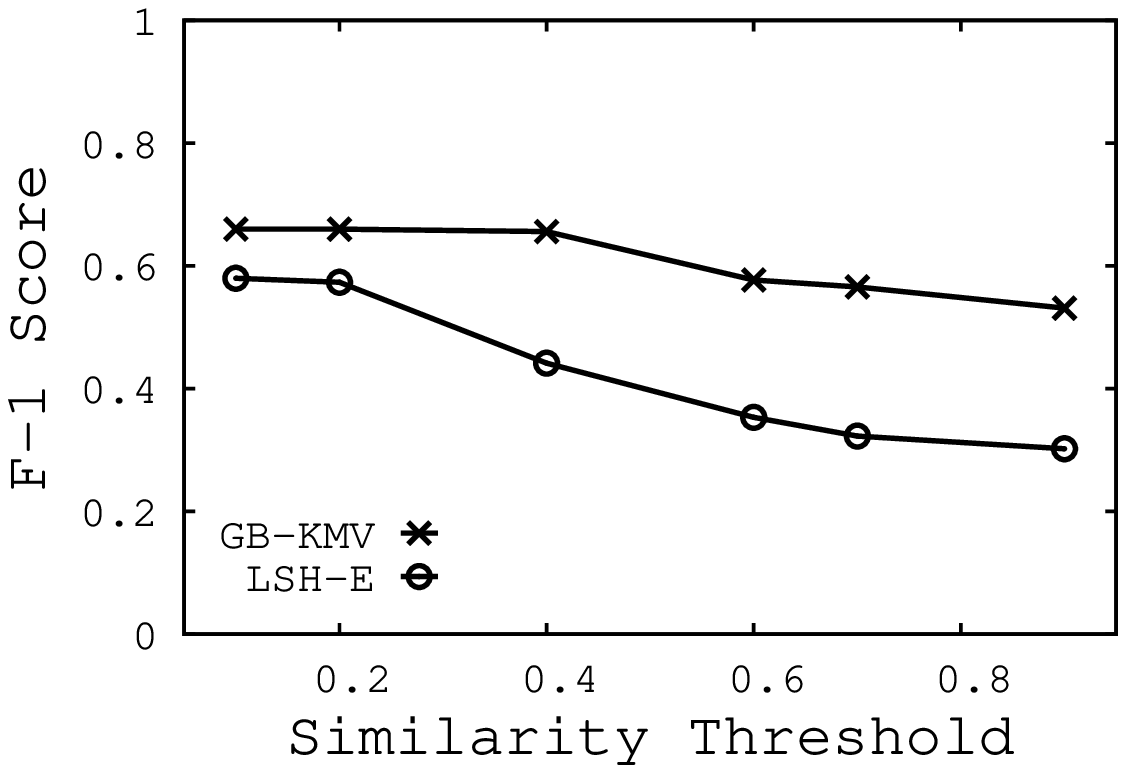}}
\subfigure[DELIC]{\includegraphics[width=0.48\linewidth]{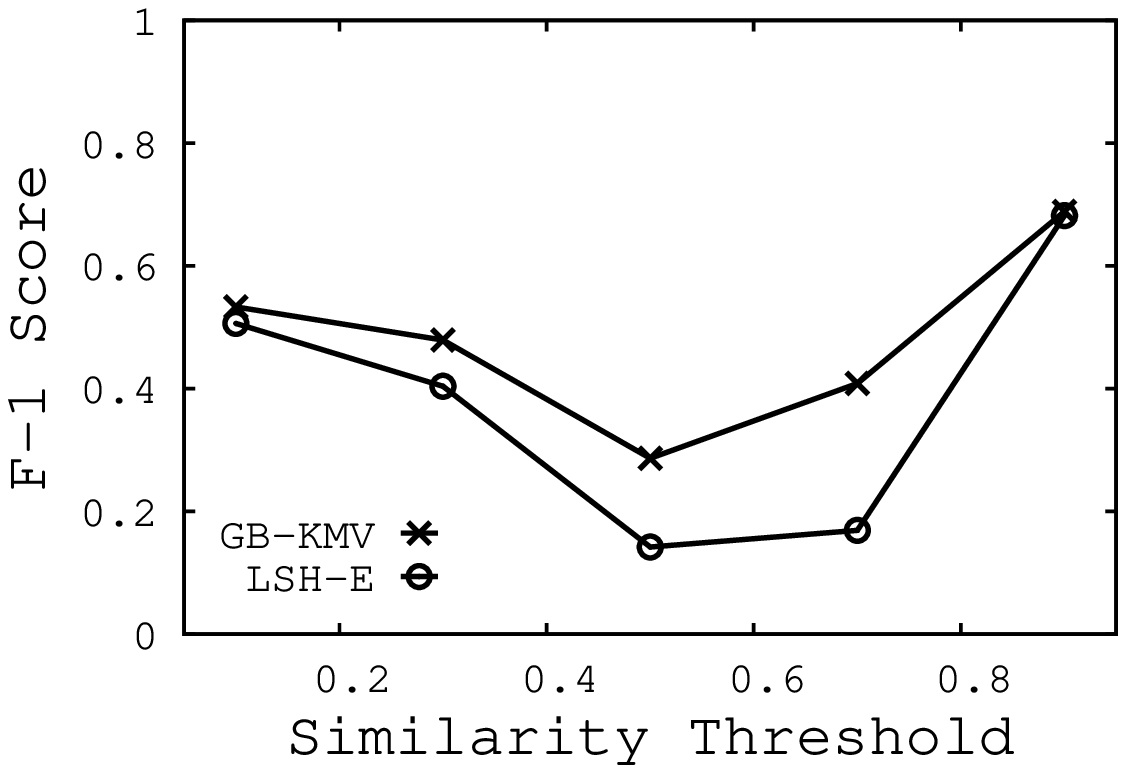}}
\subfigure[COD]{\includegraphics[width=0.48\linewidth]{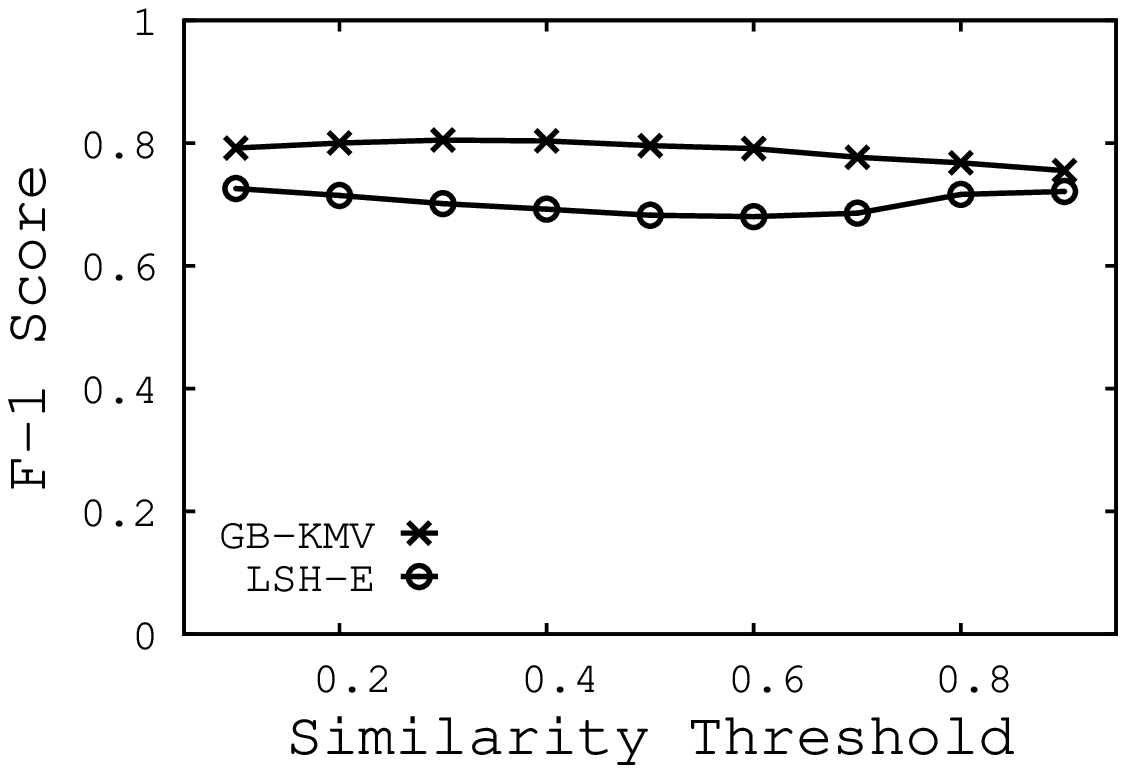}}
\subfigure[ENRON]{\includegraphics[width=0.48\linewidth]{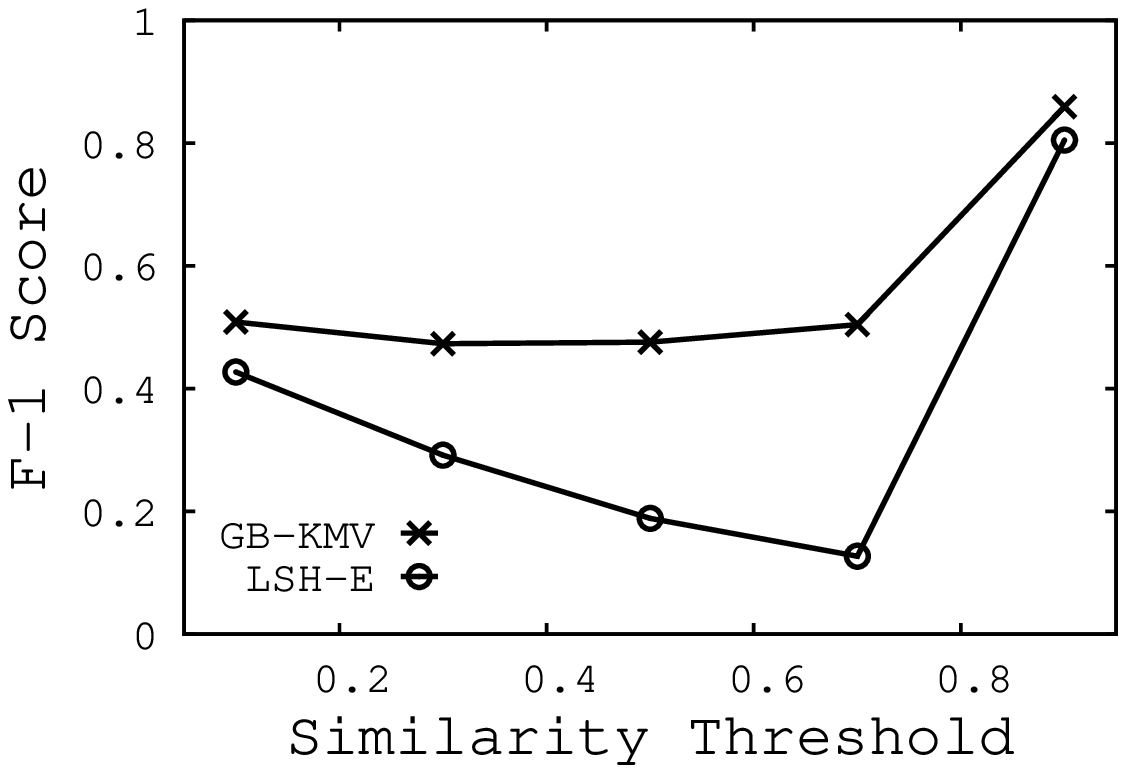}}
\subfigure[REUTERS]{\includegraphics[width=0.48\linewidth]{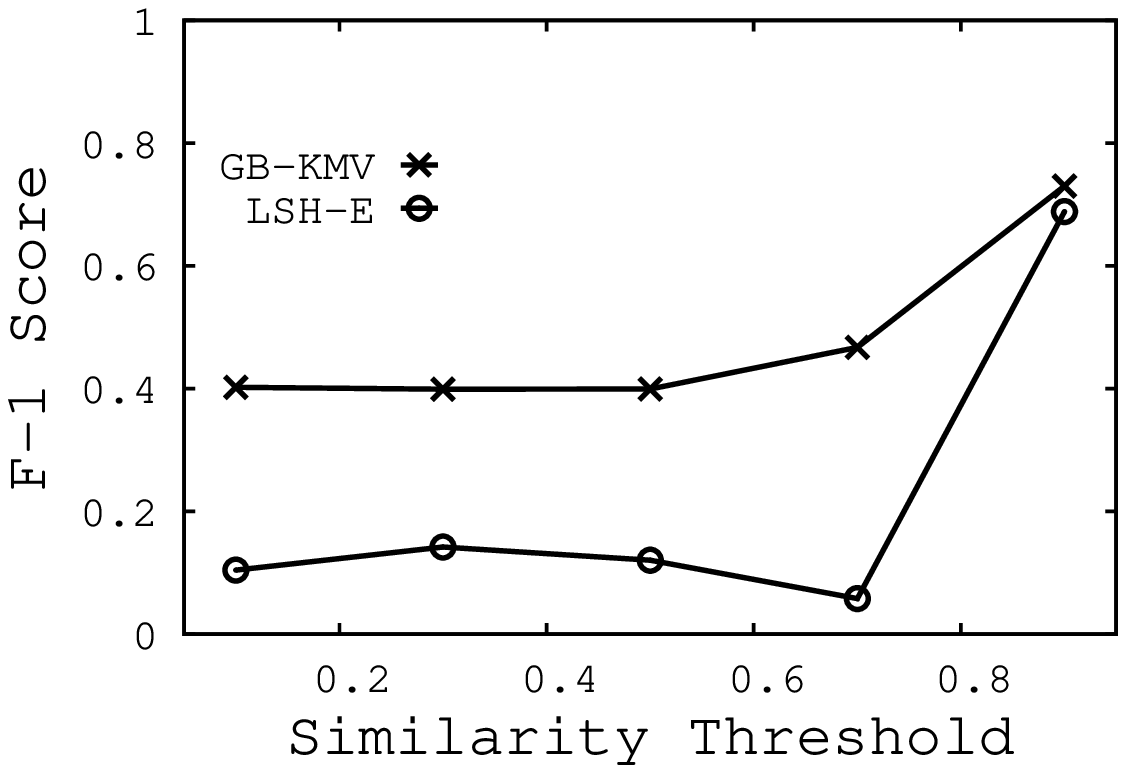}}
\subfigure[WEBSPAM]{\includegraphics[width=0.48\linewidth]{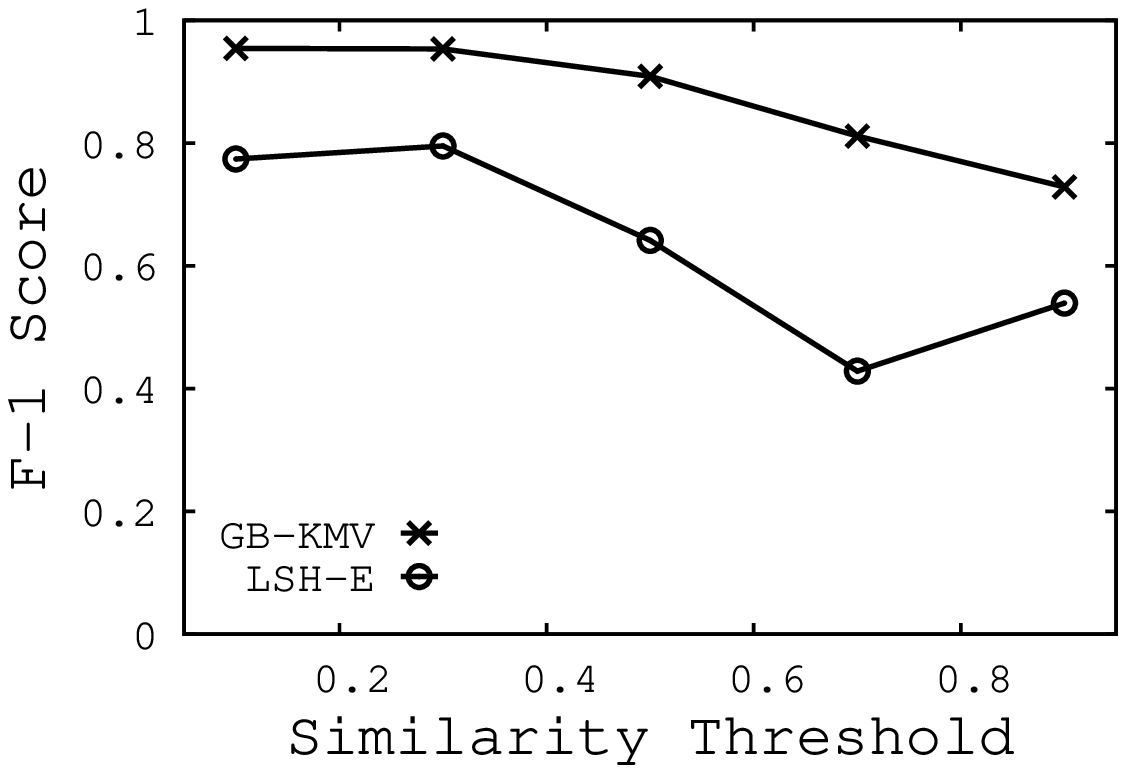}}
\subfigure[WDC]{\includegraphics[width=0.48\linewidth]{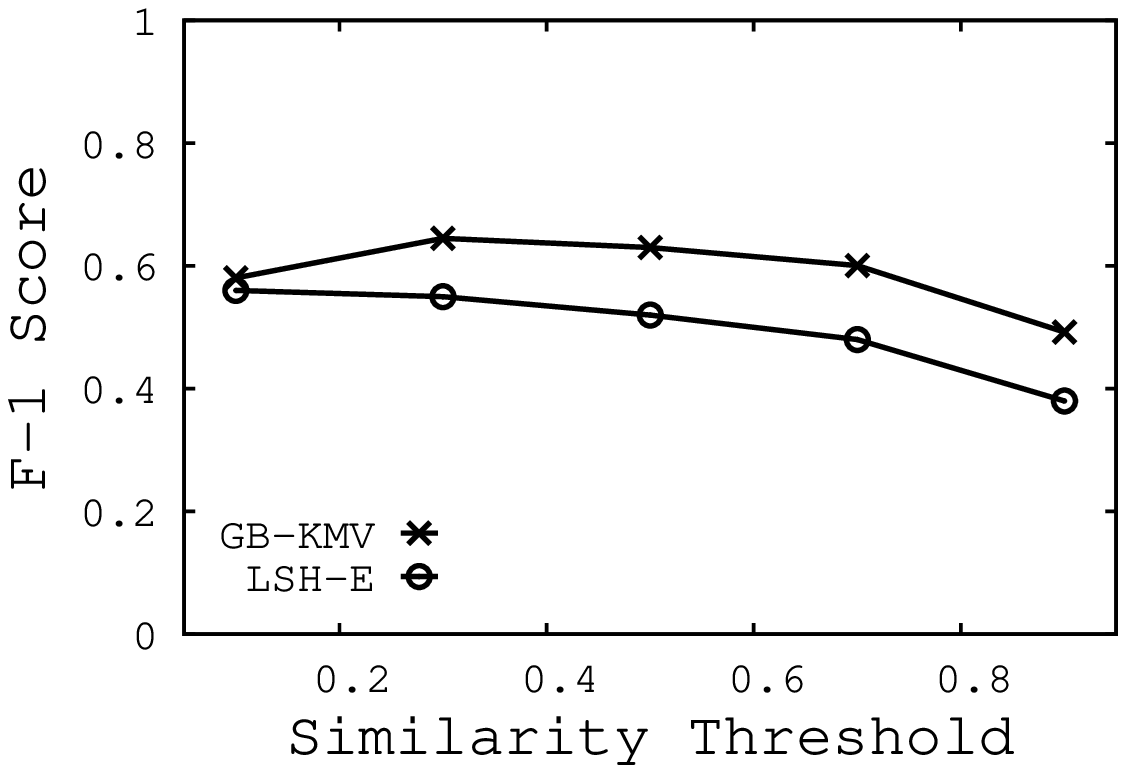}}

\vspace{-0.3cm}
\centering
\caption{\small Accuracy versus Similarity threshold}
\vspace{-1mm}
\label{fig:threshold accu}
\end{figure}

\begin{figure}[hbt]
\centering
\subfigure{\includegraphics[width=0.48\linewidth]{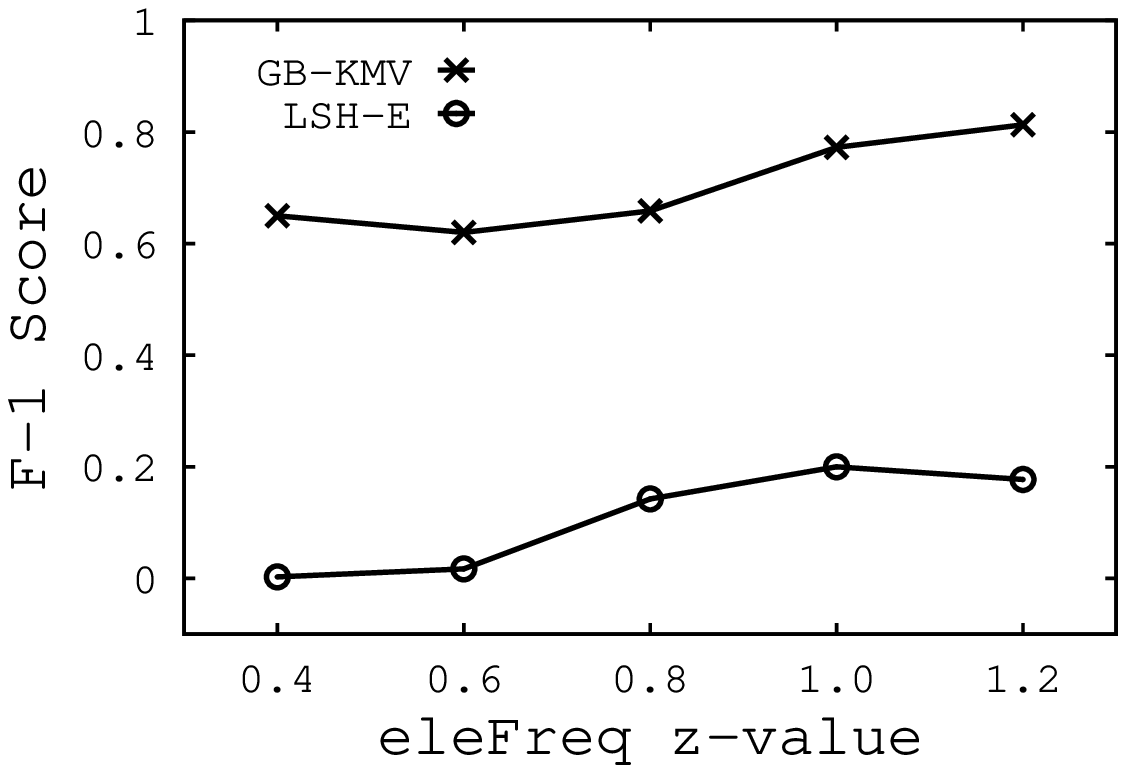}}
\subfigure{\includegraphics[width=0.48\linewidth]{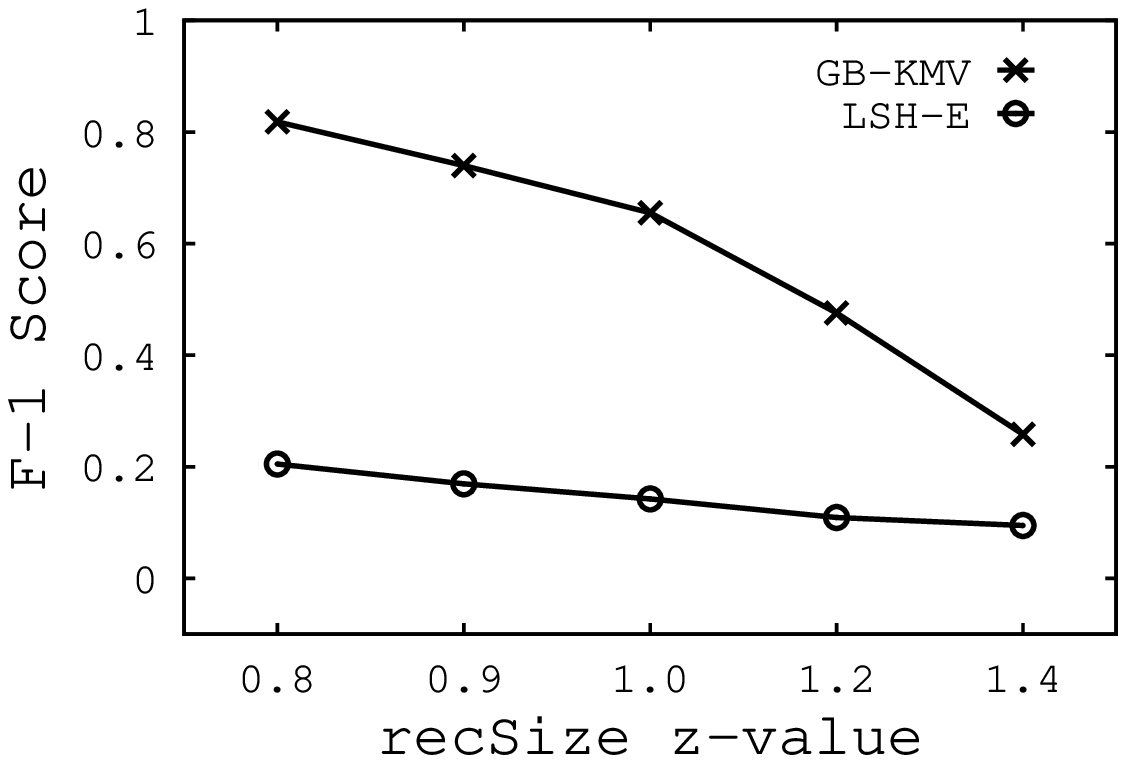}}
\vspace{-0.3cm}
\centering
\caption{\small EleFreq $z$-value varying from 0.4 to 1.2 with recSize $z$-value 1.0; recSize $z$-value varying from 0.8 to 1.4 with eleFreq $z$-value 0.8}
\vspace{-1mm}
\label{fig:syn accu}
\end{figure}
\vspace{-2mm}

\begin{figure}[hbt]
\centering
\subfigure[COD]{\includegraphics[width=0.48\linewidth]{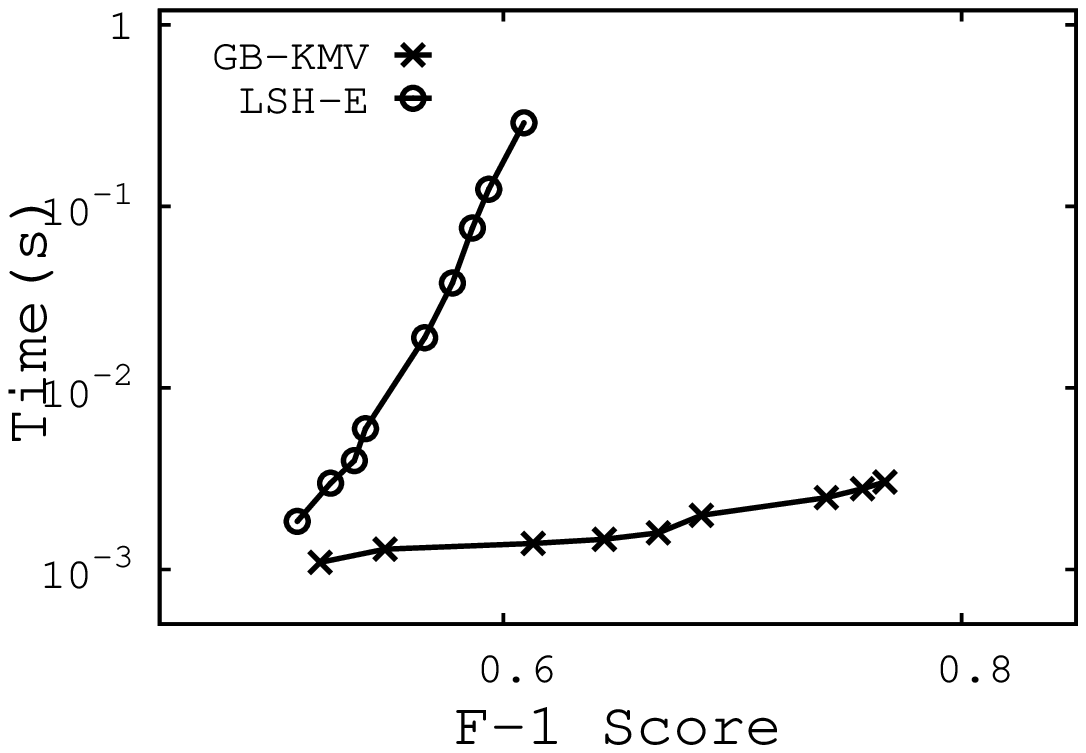}}
\subfigure[DELIC]{\includegraphics[width=0.48\linewidth]{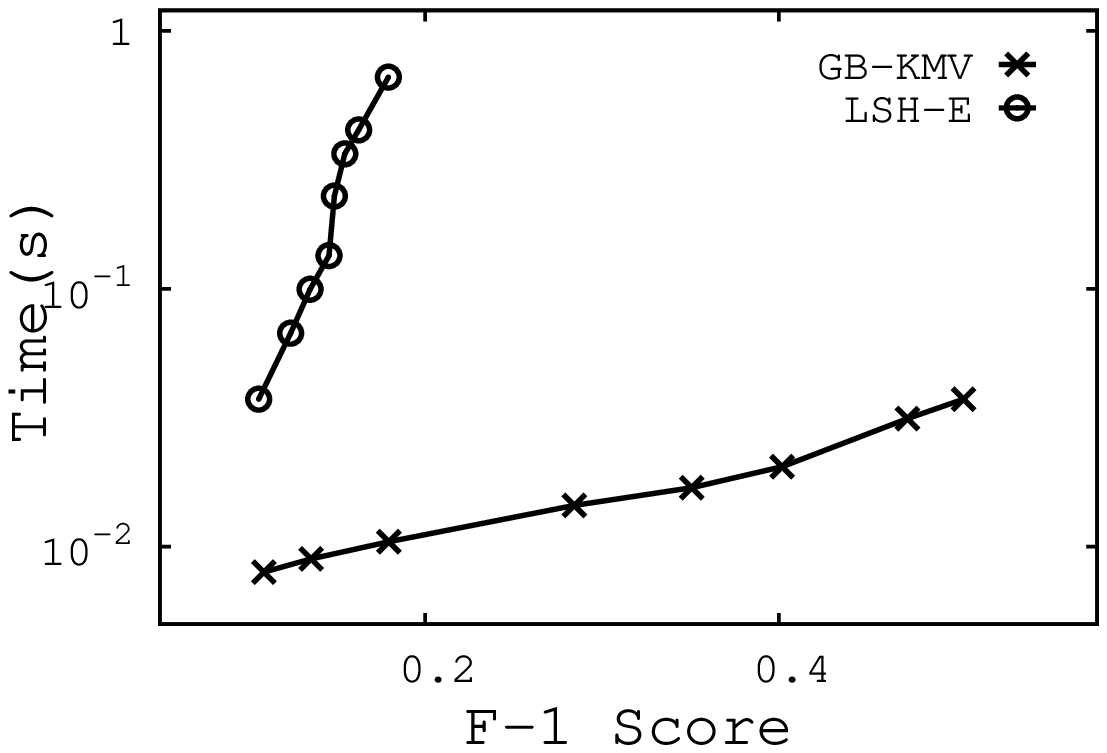}}
\subfigure[ENRON]{\includegraphics[width=0.48\linewidth]{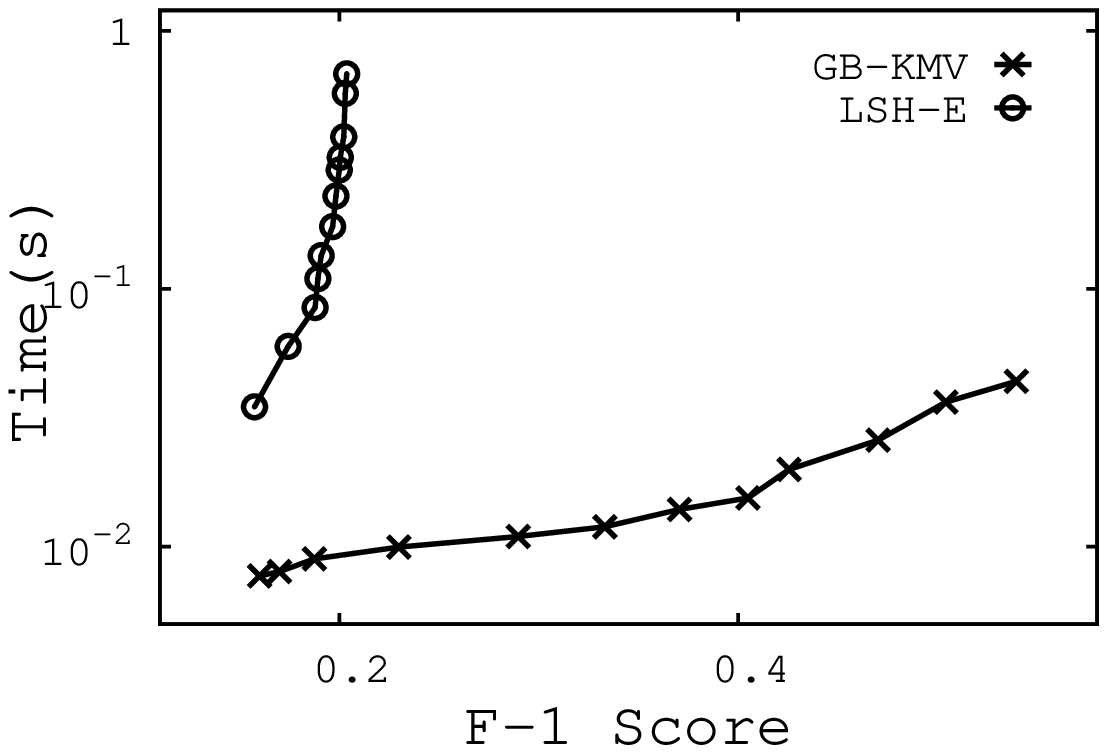}}
\subfigure[NETFLIX]{\includegraphics[width=0.48\linewidth]{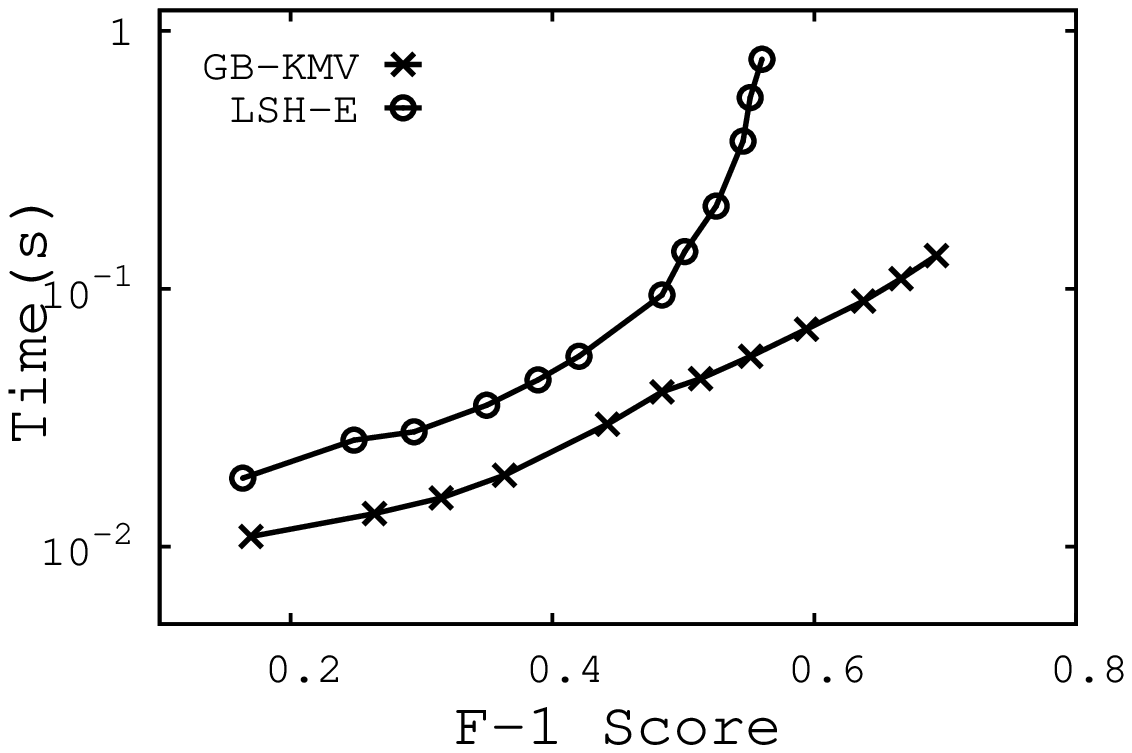}}
\subfigure[REUTERS]{\includegraphics[width=0.48\linewidth]{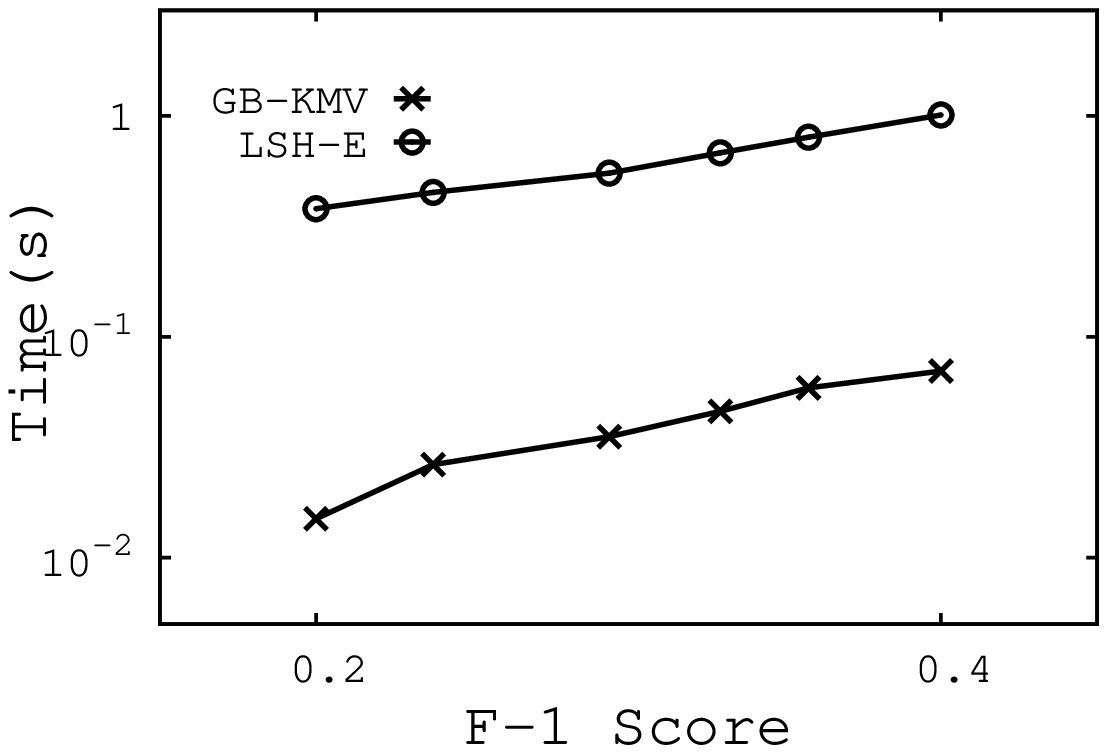}}
\subfigure[WEBSPAM]{\includegraphics[width=0.48\linewidth]{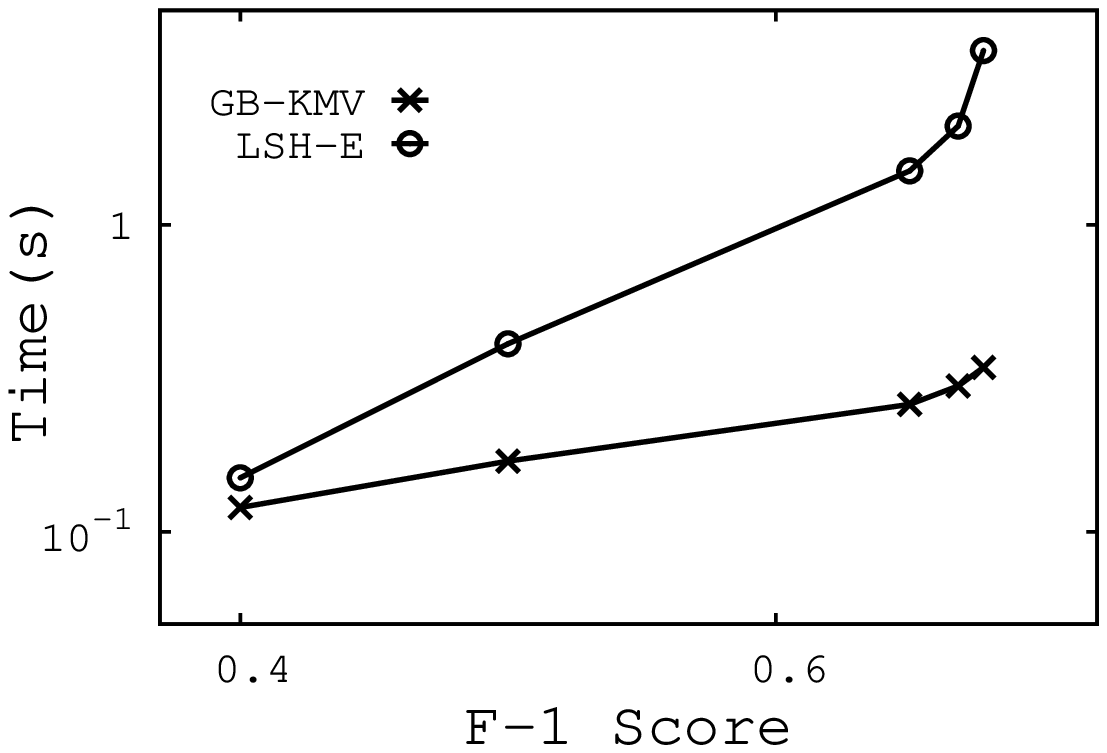}}
\subfigure[WDC]{\includegraphics[width=0.48\linewidth]{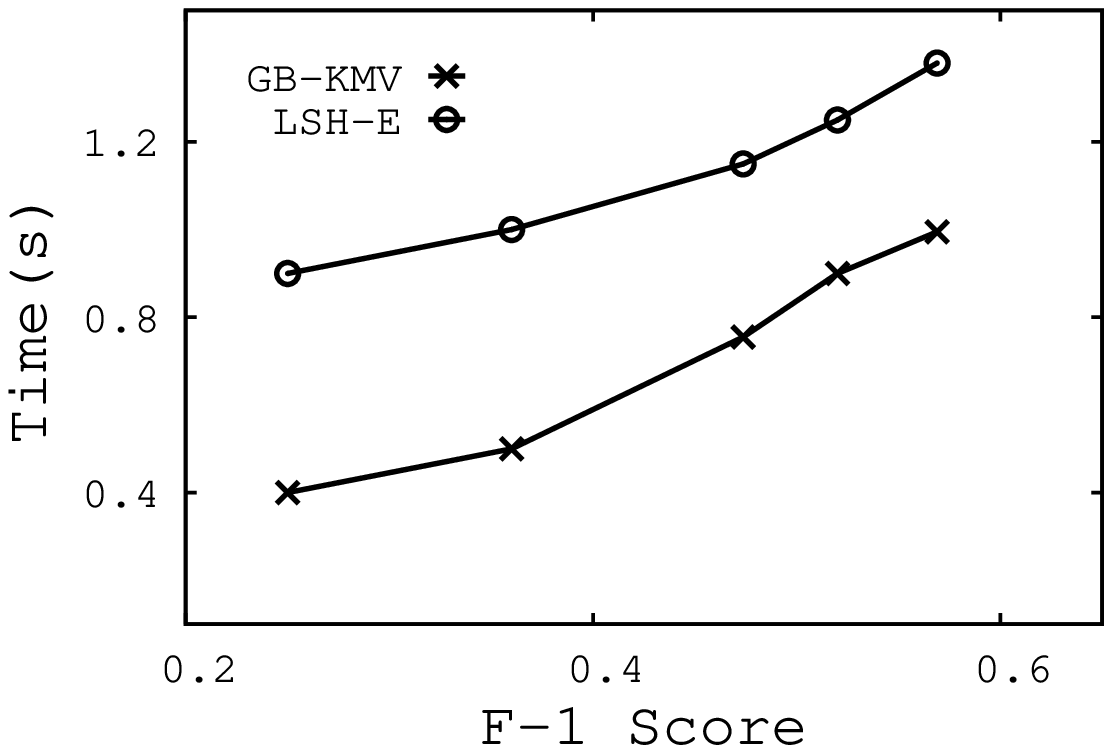}}
\vspace{-0.3cm}
\centering
\caption{\small Time versus Accuracy}
\vspace{-2mm}
\label{fig:time accu}
\end{figure}

\begin{figure}[hbt]
\centering
\subfigure{\includegraphics[width=0.8\linewidth]{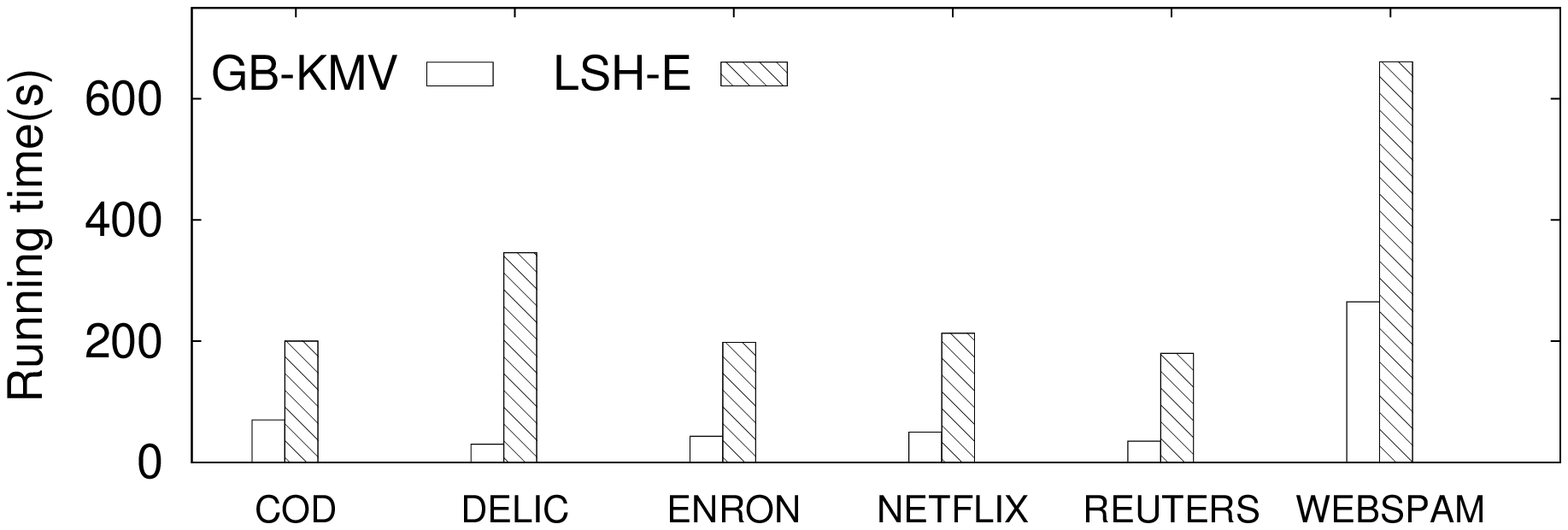}}
\vspace{-4mm}
\caption{\small Sketch Construction Time}
\vspace{-2mm}
\label{fig:index_time}
\end{figure}

\vspace{1mm}
\noindent \textbf{Approximate Algorithms.}
In the experiments, the approximate algorithms evaluated are as follows.
\begin{itemize}
  \item \textbf{\gbkmv}. Our approach proposed in Section~\ref{subsec:alg}.
  \item \textbf{\lshe}. The state-of-the-art approxiamte containment similarity search method proposed in~\cite{zhu2016lsh}.
\end{itemize}
The above two algorithms are implemented in Go programming language.
We get the source code of \lshe from~\cite{zhu2016lsh}. For \lshe, we follow the parameter setting from~\cite{zhu2016lsh}.

\vspace{1mm}
\noindent \textbf{Exact Algorithms.}
To better evaluate the proposed methods, we also compare our approximate method \gbkmv with the following
two exact containment similarity search methods.
\begin{itemize}
  \item \textbf{\ppjoin*}. We extend the prefix-filtering based method from~\cite{xiao2011efficient} to tackle the containment similarity search problem.
  \item \textbf{\freset}. The state-of-the-art exact containment similarity search method proposed in~\cite{agrawal2010indexing}.
\end{itemize}

\begin{remark}
A novel size-aware overlap set similarity join algorithm has been recently proposed in~\cite{deng2018overlap}.
Although the containment similarity search relies on the set overlap, their technique cannot be trivially applied
because we need to construct $c$-subset inverted lists for each possible query size.
In particular, in the size-aware overlap set similarity join algorithm, it is required to build the $c$-subset inverted list 
for the given overlap threshold $c$. 
In our \gbkmv method, the threshold $c$ corresponds to $|Q|*t^{*}$ , where $|Q|$ is the query size and $t^*$ is the similarity threshold, 
thus with different query size $|Q|$, we need to build different $|Q|*t^{*}$-subset inverted lists, which is very inefficient.
\end{remark}


\noindent \textbf{Datasets.}
We deployed $7$ real-life datasets with different data properties. Note that the records with size less than 10 are discarded from dataset. We also remove the stop words (e.g., "the") from the dataset.
Table~\ref{tb:datasets} shows the detailed characteristics of the $7$ datasets.
Each dataset is illustrated with the dataset type, the representations of record,
the number of records in the dataset, the average record length, and the number of distinct
elements in the dataset.
We also report the power-law exponent $\alpha_1$ and $\alpha_2$ (skewness) of the record size and element frequency of the dataset respectively.
Note that we make use of the framework in~\cite{clauset2009power} to quantify the power-law exponent.
The dataset Canadian Open Data appears in the state-of-the-art algorithm \lshe~\cite{zhu2016lsh} .

\noindent \textbf{Settings.}
We borrow the idea from the evaluation of \lshe in ~\cite{zhu2016lsh}  to use $F_{\alpha}$ score ($\alpha$=$1,0.5$) to evaluate the accuracy of the containment similarity search.
Given a query  $Q$ randomly selected from the dataset $\mathcal{S}$ and a containment similarity threshold $t^{*}$,
we define $T = \{X: t(Q, X)\geq t^{*}, X\in \mathcal{S}\}$ as the ground truth set and $A$ as the collection of records returned by some search algorithms.
The precision and recall to evaluate the experiment accuracy are
$Precision= \frac{|T\cap A|}{|A|}$ and $Recall = \frac{|T\cap A|}{|T|}$ respectively.
The $F_\alpha$ score is defined as follows.
\begin{equation}\label{eq:and}
  F_{\alpha} = \frac{(1+\alpha^2)*Precision*Recall}{\alpha^2*Precision + Recall}
\end{equation}
Note that we use $F_{0.5}$ score because \lshe favours recall in~\cite{zhu2016lsh}.
We use the datasets from Table~\ref{tb:datasets} to evaluate the performance of our algorithm,
and we randomly choose 200 queries from the dataset.

As to the default values, the similarity threshold is set as $t^* = 0.5$.
In the experiments, we use the ratio of space budget to the total dataset size to measure
the space used. For our \gbkmv method, it is set to $10\%$.
For \lshe method, we use the same default values in~\cite{zhu2016lsh} where the signature size of each record is $256$ and the number of partition is $32$.
By varying the number of hash functions, we change the space used in \lshe.

%

\vspace{-2mm}
\subsection{Performance Tuning}

As shown in Section~\ref{subsubsec:choose_r}, we can use the variance estimation function to identify a good buffer size $r$
for \gbkmv method based on the skewness of record size and element frequency, as well as the space budget.
In Fig.~\ref{fig:tuning_pgkmv1},
we use NETFLIX and ENRON to evaluate the goodness of the function by comparing
the trend of the variance and the estimation accuracy.
By varying the buffer size $r$,
Fig.~\ref{fig:tuning_pgkmv1} reports the estimated variance (right side y axis) based on the variance function in Section~\ref{subsubsec:choose_r}
as well as the $F_1$ score (left side y axis) of the corresponding \gbkmv sketch with buffer size $r$.
Fig.~\ref{fig:tuning_pgkmv1}(a) shows that the best buffer size for variance estimation (prefer small value) is around $400$,
while the \gbkmv method achieves the best $F_1$ score (prefer large value) with buffer size around $380$.
They respectively become 220 and 230 in Fig.~\ref{fig:tuning_pgkmv1}(b).
This suggests that our variance estimation function is quite reliable to identify a good buffer size.
In the following experiments, \gbkmv method  will use buffer size suggested by this system, instead of manually tuning.


We also compare the performance of \kmv, \gkmv, and \gbkmv methods in Fig.~\ref{fig:tuning_pgkmv2}
to evaluate the effectiveness of using global threshold and the buffer
on $7$ datasets.
It is shown that the use of new KMV estimator with global threshold (i.e., Equation~\ref{eq: gkmv estimator})
can significantly improve the search accuracy.
By using a buffer whose size is suggested by the system, we can further enhance the performance under the same space budget. In the following experiments, we use \gbkmv for the performance comparison with
the state-of-the-art technique \lshe.

%

\vspace{-2mm}
\subsection{Space v.s. Accuracy}
\label{subsec:exp_sa}

An important measurement for sketch technique is the trade-off between the space and accuracy.
We evaluate the space-accuracy trade-offs of \gbkmv method and \lshe method in
Figs.~\ref{fig:space_accu_cod}-\ref{fig:space_accu_webspam} by varying the space usage
on five datasets NETFLIX, DELIC, COD, ENRON, REUTERS, WEBSPAM and WDC.
We use $F_1$ score, $F_{0.5}$ score, precision and recall to measure the accuracy.
By changing the number of hash functions, we tune the space used in \lshe.
It is reported that our \gbkmv can beat the \lshe in terms of space-accuracy trade-off
with a big margin under all settings.

We also plot the distribution of accuracy (i.e., min, max and avgerage value) to 
compare our \gbkmv method with \lshe in Fig.~\ref{fig:distribution_accuracy}.


Meanwhile, by changing the similarity threshold,
$F_1$ score is reported in Fig.~\ref{fig:threshold accu} on dataset NETFLIX and COD.
We can see that with various similarity thresholds, our \gbkmv always outperforms \lshe.

We also evaluate the space-accuracy trade-offs on synthetic datasets with
100K records in Fig.~\ref{fig:syn accu}
where the record size and the element frequency follow the zipf distribution.
We can see that on datasets with different record size and element frequency skewness,
\gbkmv consistently outperforms \lshe in terms of space-accuracy trade-off.

\subsection{Time v.s. Accuracy}

Another important measurement for the sketch technique is the trade-off between time and accuracy.
Hopefully, the sketch should be able to quickly complete the search with a good accuracy.
We tune the index size of \gbkmv to show the trade-off.
As to the \lshe, we tune the number of hash functions.
The time is reported as the average search time per query.
In Fig.~\ref{fig:time accu}, we evaluate the time-accuracy trade-offs for \gbkmv and \lshe
on four datasets COD, NETFLIX, DELIC and ENRON where the accuracy is measured by $F_1$ score.
It is shown that with the similar accuracy ($F_1$ score), \gbkmv is significantly faster than \lshe.
For datasets COD, DELIC and ENRON, \gbkmv can be 100 times faster than \lshe with the same $F_1$ score.
It is observed that the accuracy ($F_1$ score) improvement of \lshe algorithm is very slow compared with \gbkmv method. This is because the \lshe method favours recall and the precision performance is quite poor even for a large number of hash functions, resulting in a poor $F_1$ score which considers both precision and recall.

\subsection{Sketch Construction Time}
In this part, we compare the sketch construction time of \gbkmv and \lshe on different datasets under default settings. As expected, \gbkmv uses much less sketch construction time than that of \lshe
since \gbkmv sketch need only one hash function, while \lshe needs multiple for a decent accuracy.
Note that, for the internet scale dataset WDC, the index construction time for \gbkmv is around $10$ minutes, 
while for \lshe it is above $60$ minutes.
We also give the space usage of the two methods on each dataset in Table~\ref{tb:space_usage}.
The space usage of \gbkmv is $10{\%}$ as mentioned in Settings.
For \lshe in some dataset, the space is over $100{\%}$ because there are many records with size less than the number of hash functions $256$.

\begin{table}
\footnotesize
  \centering

  \vspace{-1mm}
    \begin{tabular}{|l|c|c|}
      \hline
      \cellcolor{gray!25}\textbf{Dataset} & \cellcolor{gray!25}\textbf{\gbkmv}  & \cellcolor{gray!25}\textbf{\lshe}           \\ \hline

      NETFLIX  &  10 & 118 \\ \hline
      DELIC  &  10 & 211 \\ \hline
      COD &  10 & 4 \\ \hline
      ENRON  &  10 & 185 \\ \hline
      REUTERS  &  10 & 329 \\ \hline
      WEBSPAM  &   10 & 7  \\ \hline
      WDC & 10 & 109 \\ \hline

      \hline
    \end{tabular}
\vspace{1mm}
\caption{\small The space usage($\%$)}
\label{tb:space_usage}
\vspace{-4mm}
\end{table}

\subsection{Supplementary Experiment}
\label{sct:runtimeexact}

\noindent \textbf{Evaluation on Uniform Distribution.}
In \textbf{Theorem~\ref{theo:gbkmvBetter}}, we have theoretically shown that when the dataset 
follows uniform distribution (i.e., $\alpha_1=0$ and $\alpha_2=0$), our \gbkmv method can outperform the \lshe method.
In this part, we experimentally illustrate the performance on dataset with uniform distribution.
We generate $100$k records where the record size 
is uniformly distributed between $10$ and $5000$,
and each element is randomly chosen from $100,000$ distinct elements.
Fig.~\ref{fig:last}(a) illustrates the time-accuracy trade-off of \gbkmv and \lshe on the synthetic dataset with $100K$ records. It is reported that, to achieve the same accuracy ($F_1$ score), \gbkmv consumes much less time than \lshe.

\begin{figure}[hbt]
\centering
\subfigure[Time versus Accuracy]{\includegraphics[width=0.48\linewidth]{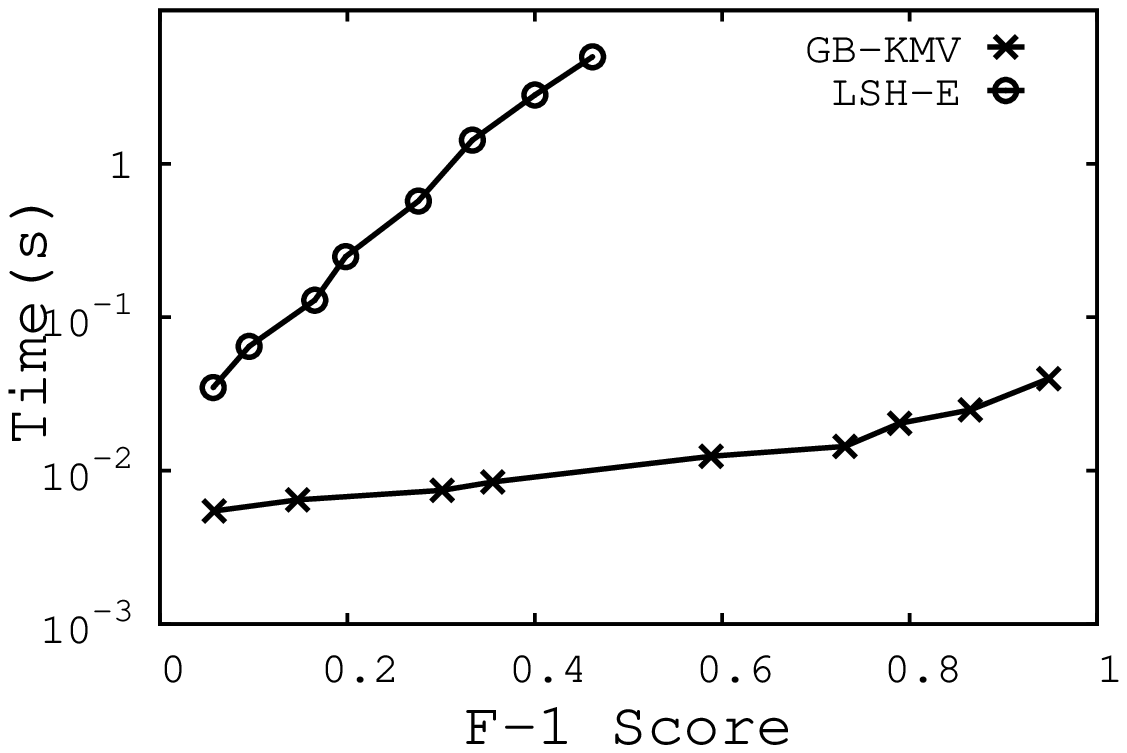}}
\subfigure[Running Time]{\includegraphics[width=0.48\linewidth]{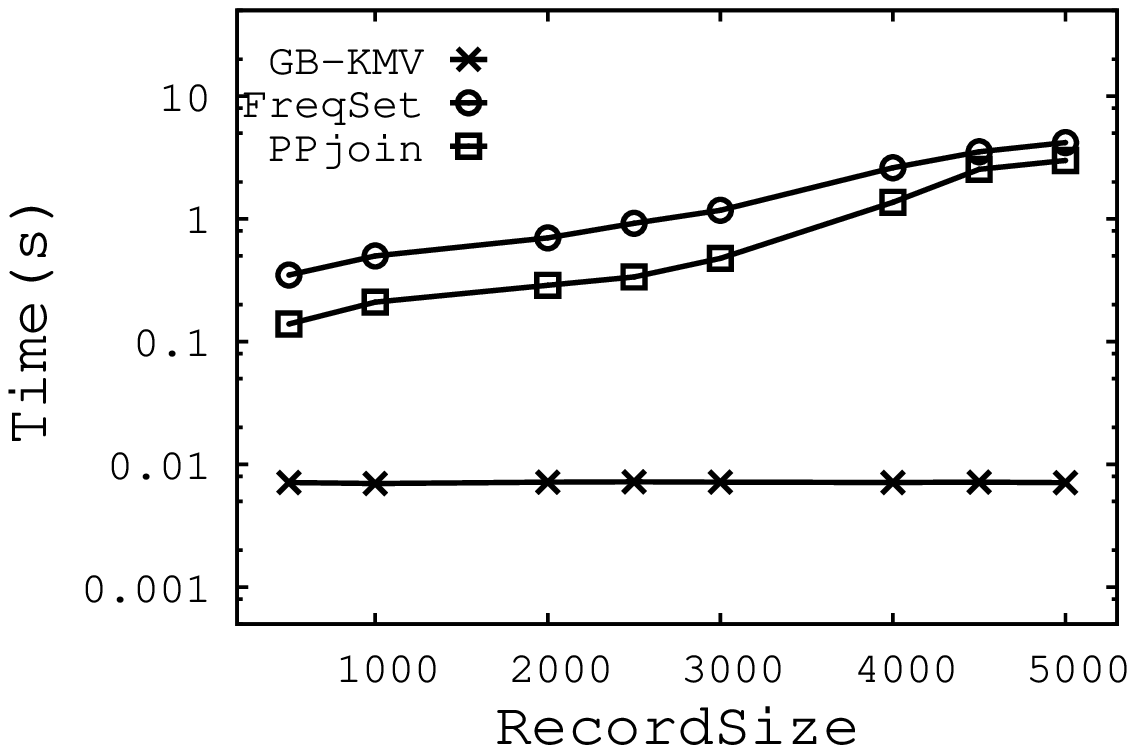}}
\vspace{-0.3cm}
\centering
\caption{\small Supplementary experiments}
\label{fig:last}
\end{figure}

\noindent \textbf{Comparison with Exact Algorithms.}
We also compare the running time of our proposed method \gbkmv with 
two exact containment similarity search methods PPjoin*~\cite{xiao2011efficient} and FreqSet~\cite{agrawal2010indexing}.
Experiments are conducted on the dataset WebSpam, which consists of $350,000$ records and has the average length around $3,700$. We partition the data into $5$ groups based on their record size with boundaries increasing from $1,000$ to $5,000$. As expected, Fig.~\ref{fig:last}(b) shows that the running time of our approximate algorithm is not sensitive to the growth of the record size because a fixed number of samples are used for a given budget.
\gbkmv outperforms two exact algorithm by a big margin, especially when the record size is large,
with a decent accuracy (i.e., with $F_1$ score and recall always larger than 0.8 and 0.9 under all settings).  

\subsection{Discussion Summary}
In the accuracy comparison between \gbkmv and \lshe, it is remarkable to see that 
the accuracy (i.e., $F_1$ score) is very low on some datasets. We give some discussions as follows.

First we should point out that in ~\cite{zhu2016lsh}, the accuracy of \lshe is only evaluated on 
\textbf{only} one dataset COD, in which both our \gbkmv method and \lshe can achieve decent accuracy performance with $F_1$ score above $0.5$. 

As mentioned in ~\ref{subsec:lshe}, the \lshe method first transforms the containment similarity 
to Jaccard similarity, then in order to make use of the efficient index techniques, \lshe partitions the dataset and uses the upper bound to approximate the record size in each partition.
which can favour recall but result in extra false positives as analysed in section ~\ref{subsec:lshe_analysis}. However, the \lshe method does not provide a partition scheme 
associated with different data distribution, and the algorithm setting (e.g., 256 hash functions and 32 partitions) can not perform well in some dataset.


\vspace{-2mm}
\section{Related Work}
\label{sct:related}

In this Section, we review two closely related categories of work on set containment similarity search.

\noindent \textbf{Exact Set Similarity Queries.}
Exact set similarity query has been widely studied in the literature.
Existing solutions are mainly based on the filtering-verification framework
which can be divided into two categories, prefix-filter based method and partition-filter based method. Prefix-filter based method is first introduced by Bayardo {\emph et al.}
in~\cite{bayardo2007scaling}. Xiao {\emph et al.}~\cite{xiao2011efficient} further improve the prefix-filter method by exploring positional filter and suffix filter techniques. In~\cite{mann2016empirical}, Mann {\emph et al.} introduce an efficient candidate verification algorithm which significantly improves the efficiency compared with the other prefix filter algorithms. Wang {\emph et al.}~\cite{wang2017leveraging} consider the relations among records in query processing to improve the performance. Deng {\emph et al.} in ~\cite{deng2017silkmoth} present an efficient similarity search method where each object is a collection of sets.  For partition-based method, in~\cite{arasu2006efficient}, Arasu {\emph et al.} devise a two-level algorithm which uses partition and enumeration techniques to search for exact similar records. Deng {et al.} in~\cite{deng2015efficient} develop a partition-based method which can effectively prune the candidate size at the cost higher filtering cost.
In~\cite{zhang2017efficient}, Zhang {\emph et al.} propose an efficient framework for exact set similarity search based on tree index structure.
In~\cite{deng2018overlap}, Deng {\emph et al.} present a size-aware algorithm which divides all the sets into small and large ones by size and processes them separately.
Regarding exact containment similarity search, Agrawal {\emph et al.} in~\cite{agrawal2010indexing} build the inverted lists on the token-sets and considered the string transformation.


\noindent \textbf{Approximate Set Similarity Queries.}
The approximate set similarity queries mostly adopt the Locality Sensitive Hashing(\lsh)~\cite{indyk1998approximate} techniques. For Jaccard similarity, MinHash~\cite{broder1998min} is used for approximate similarity search.
Asymmetric minwise hashing is a technique for approximate containment similarity search~\cite{shrivastava2015asymmetric}.
This method makes use of vector transformation by padding some values into sets, which makes all sets in the index have same cardinality as the largest set.
After the transformation, the near neighbours with respect to Jaccard similarity of the transformed sets are the same as near neighbours in containment similarity of the original sets.
Thus, MinHash \lsh can be used to index the transformed sets, such that the sets with larger containment similarity scores can be returned with higher probability.
In \cite{shrivastava2015asymmetric}, they show that asymmetric minwise hashing is advantageous in containment similarity search over datasets such as news articles and emails,
while Zhu {\emph et. al} in \cite{zhu2016lsh} finds that for datasets which are very skewed in set size distribution, asymmetric minwise hashing will reduce the recall.

The \kmv sketch technique has been widely used to estimate the cardinality of record size~\cite{zhang2010multi,cormode2012synopses,wang2014selectivity}.
The idea of imposing a global threshold on \kmv sketch is first proposed in~\cite{wang2014selectivity} in the context of term pattern size estimation.
However, there is no theoretical analysis for the estimation performance.
In~\cite{christiani2016set}, Christiani {\emph et al.} give a data structure for approximate similarity search under Braun-Blanquet similarity which has a 1-1 mapping to Jaccard similarity if all the sizes of records are fixed.
In~\cite{cohen2017minimal}, Cohen {\emph et al.} introduce a new estimator for set intersection size, but it is still based on the MinHash technique. In~\cite{dahlgaard2017fast}, Dahlgaard {\emph et al.} develop a new sketch method which has the alignment property and same concentration bounds as MinHash.

\vspace{-4mm}
\section{Conclusion}
\label{sct:conclusion}

In this paper, we study the problem of approximate containment similarity search.
The existing solutions to this problem are based on the MinHash \lsh technique.
We develop an augmented \kmv sketch technique, namely \gbkmv, which 
is data-dependent and can effectively exploit the distributions of record size and element frequency.
We provide thorough theoretical analysis to justify the design of \gbkmv, and show that  
the proposed method can outperform the state-of-the-art technique in terms of space-accuracy trade-off.
Extensive experiments on real-life set-valued datasets from a variety of applications demonstrate
the superior performance of \gbkmv method compared with the state-of-the-art technique.

\bibliographystyle{abbrv}
{\small
\bibliography{paper}
}

\end{document}